\documentclass[runningheads,a4paper]{llncs}

\def\BibTeX{{\rm B\kern-.05em{\sc i\kern-.025em b}\kern-.08emT\kern-.1667em\lower.7ex\hbox{E}\kern-.125emX}}

\usepackage{nicefrac}
\usepackage{siunitx}
\usepackage{array,framed}
\usepackage{booktabs}
\usepackage{
  color,
  float,
  epsfig,
  wrapfig,
  graphics,
  graphicx,
  subcaption
}
\usepackage{textcomp}
\usepackage{setspace}
\usepackage{amsfonts}
\usepackage{latexsym,fancyhdr,url}
\usepackage{enumerate}
\usepackage{multicol}
\usepackage[utf8]{inputenc}
\usepackage{algorithmicx, algorithm, algpseudocode}
\usepackage{caption}
\usepackage{graphics}
\usepackage{xparse}
\usepackage{xspace}
\usepackage{multirow}
\usepackage{csvsimple}
\usepackage{balance}
\usepackage{graphicx}
\usepackage{amsmath}
\usepackage{cuted}
\usepackage{thmtools,thm-restate}
\usepackage[bookmarksdepth=3,hypertexnames=false]{hyperref}
\usepackage{orcidlink}
\renewcommand{\orcidID}[1]{\,\orcidlink{#1}}
\usepackage{cleveref}
\usepackage{bookmark}

\usepackage{datetime2}
\usepackage{lipsum}
\usepackage{placeins}
\usepackage{float}
\usepackage{paralist}
\usepackage{stmaryrd}
\usepackage{tabularx}
\usepackage{xifthen}
\usepackage[normalem]{ulem}
\newfloat{module}{htbp}{loa}
\floatname{module}{Module}
\crefname{module}{Module}{Modules}
\Crefname{module}{Module}{Modules}

\usepackage{etoolbox}

\newcommand{\cadence}{\textsc{Cadence}\xspace}    %
\newcommand{\chorus}{\textsc{Chorus}\xspace}      %
\newcommand{\conductor}{\textsc{Conductor}\xspace} %

\algdef{SE}[UPON]{Upon}{EndUpon}[1]{\textbf{upon} #1}{}

\algtext*{EndFor}
\algtext*{EndIf}
\algtext*{EndWhile}
\algtext*{EndProcedure}
\algtext*{EndFunction}
\algtext*{EndUpon}

\newcommand{\metablock}{meta-block\xspace}
\newcommand{\metablocks}{meta-blocks\xspace}
\newcommand{\Metablock}{Meta-block\xspace}

\newcommand{\pid}{j}                     %
\newcommand{\modcall}[1]{\mbox{\hyperref[#1]{Alg~\ref*{#1}}}}

\newcommand{\mroot}{\rho}                %
\newcommand{\cdata}{d}                   %

\newcommand{\false}{\mathsf{false}}

\newcommand{\FastQC}{\mathsf{FastQC}}

\newcommand{\chunk}{c}

\newcommand{\proc}{p}              %

\algnewcommand{\BlueComment}[1]{\textcolor{blue}{\hfill\(\triangleright\) #1}}

\newif\ifformalpart
\formalparttrue   %

\usepackage{
  tikz,
  pgfplots,
  pgfplotstable
}
\usetikzlibrary{
  shapes.geometric,
  arrows,
  external,
  pgfplots.groupplots,
  matrix
}
\pgfplotsset{compat=1.9}

\usepackage[a4paper,margin=2.5cm]{geometry}
\usepackage{mathtools}

\DeclareMathAlphabet{\mathcal}{OMS}{cmsy}{m}{n}
\DeclareGraphicsExtensions{%
    .png,.PNG,%
    .pdf,.PDF,%
    .jpg,.mps,.jpeg,.jbig2,.jb2,.JPG,.JPEG,.JBIG2,.JB2}

\usepackage[many]{tcolorbox}

\makeatletter
\renewcommand\tableofcontents{\section*{\contentsname}%
 \def\authcount##1{\setcounter{auco}{##1}\setcounter{@auth}{1}}
 \def\lastand{\ifnum\value{auco}=2\relax
                 \unskip{} \andname\
              \else
                 \unskip \lastandname\
              \fi}%
 \def\and{\stepcounter{@auth}\relax
          \ifnum\value{@auth}=\value{auco}%
             \lastand
          \else
             \unskip,
          \fi}%
 \@starttoc{toc}\if@restonecol\twocolumn\fi}
\makeatother

\spnewtheorem{claim}{Claim}{\bfseries}{\itshape}
\spnewtheorem{subclaim}{Subclaim}{\bfseries}{\itshape}
\crefname{claim}{Claim}{Claims}
\Crefname{claim}{Claim}{Claims}
\crefname{subclaim}{Subclaim}{Subclaims}
\Crefname{subclaim}{Subclaim}{Subclaims}
\newif\ifsubqedplaced
\newcommand{\subqedhere}{\hfill$\diamond$\global\subqedplacedtrue}
\newenvironment{subproof}[1][Proof]{\global\subqedplacedfalse\noindent\textit{#1.}\enspace}{\ifsubqedplaced\else\hfill$\diamond$\fi\par\medskip}
\newcommand{\vdec}{V_{\mathsf{dec}}}
\newcommand{\valdec}{v_{\mathsf{dec}}}
\newcommand{\vmax}{V_{\mathsf{max}}}
\newcommand{\proofstep}[1]{\par\smallskip\noindent\emph{#1}}
\algrenewcommand{\algorithmiccomment}[1]{\hfill{\color{blue}$\lhd$\ #1}}
\algrenewcommand\alglinenumber[1]{\scriptsize #1:}

\newif\iffull %
\newcommand{\fullmedskip}{\medskip}
\newcommand{\LineLabel}[1]{\label{#1}}
\newcommand{\coda}{\textsc{Coda}\xspace}

\newcommand{\quorus}{\textsc{Quantum Cadence}\xspace}
\newcommand{\qchorus}{\textsc{Quantum Chorus}\xspace} %
\newcommand{\para}[1]{\paragraph{\normalfont\bfseries #1}} %

\newcommand{\VALUE}{\mathsf{Value}}
\newcommand{\VIEW}{\mathsf{View}}
\newcommand{\MESSAGE}{\mathsf{Message}}

\newcommand{\METABLOCK}{\mathsf{MetaBlock}}
\newcommand{\PROPOSER}{\mathsf{Proposer}}
\newcommand{\ALLSAFE}{\textsc{empty}}
\newcommand{\NONE}{\ensuremath{\mathsf{none}}}
\newcommand{\PREPARE}{\textsc{prepare}}
\newcommand{\VOTEONE}{\textsc{vote-1}}
\newcommand{\VOTETWO}{\textsc{vote-2}}
\newcommand{\ECHO}{\textsc{echo}}

\newcommand{\propi}{\mathit{input}_i}
\newcommand{\viewi}{\mathit{view}_i}
\newcommand{\guardi}{\mathit{guard}_i}

\newcommand{\locki}{\mathit{lock}_i}
\newcommand{\voteone}{\mathit{vote1}_i}
\newcommand{\votetwo}{\mathit{vote2}_i}

\newcommand{\echosone}{\mathit{sent}_i}

\newcommand{\leader}{\mathsf{leader}}

\newcommand{\justified}{\mathsf{justified}}

\begin{document}

\def\thetitle{\quorus: Post-Quantum Consensus with Multiple Concurrent Proposers}
\title{\thetitle}
\author{Mohammad Mussadiq Jalalzai\orcidID{0000-0003-0183-4536} \and Tobias Klenze\orcidID{0000-0002-6505-3942} \and Jovan Komatovic\orcidID{0009-0006-9714-4079} \thanks{\textbf{Author order:} alphabetical by last name.}}
\institute{Category Labs}
\maketitle

\begin{abstract}
The consensus protocols of most high-performance blockchains use quorum certificates, which bundle the signatures of a quorum of validators.
Signature schemes such as BLS can aggregate the signatures of a whole quorum into a single short signature, which keeps these certificates small.
A large-scale quantum computer would break these schemes, and no post-quantum scheme with comparable aggregate size and aggregation time is known.
Without aggregation, a certificate holds one large post-quantum signature per member of the quorum, so it becomes infeasible for each validator to broadcast certificates, as modern consensus protocols often require.

We present \quorus, a post-quantum consensus protocol with multiple concurrent proposers that avoids certificates altogether: every validator counts quorums among the messages it has received itself.
Its safety and liveness do not even depend on signatures, although we add signatures to obtain accountability.
\quorus builds on \cadence and replaces its slot consensus, \chorus, with \qchorus, which keeps the optimal fast-path latency of \chorus: dissemination followed by two rounds of votes.
When its fast path fails, \qchorus runs \coda, our signature-free Byzantine agreement for multiple proposers.
We also outline how the \conductor, which schedules the slots of \cadence, can be adapted to \quorus.
In simulations with 200 validators, \qchorus has a latency comparable to that of \chorus and scales better with the number of proposers.
\end{abstract}

\section{Introduction}\label{sec:intro}

Large-scale quantum computers would break the signature schemes that blockchains use today.
Several parts of a blockchain rely on these signatures, from the transactions that users sign to the votes that validators exchange in the consensus protocol.
This paper is about consensus: we present \quorus, a post-quantum version of \cadence~\cite{cadence}, a consensus protocol with multiple concurrent proposers.

\cadence divides time into slots and finalizes one block per slot, which merges the proposals that several proposers disseminate for that slot.
Each slot is finalized by its own instance of a single-shot consensus protocol, which \cadence calls the \emph{slot consensus} and instantiates with \chorus.
Because blocks do not build on their predecessors, these instances run independently, and many slots can be in progress at once.
\cadence calls this \emph{extreme pipelining}: no slot waits for the block of an earlier slot, so blocks can follow one another at intervals shorter than the network delay.
A second protocol, the \conductor, schedules the slots, at a regular interval in normal operation and more slowly under asynchrony, so that the number of open slots stays bounded.
In the common case, \chorus finalizes a slot on its fast path in three communication rounds, and a fallback path takes over when the fast path cannot finish.
Besides safety and liveness, \cadence provides short-term censorship resistance: shortly after the network becomes synchronous, the block of every slot contains the proposal of every correct proposer.

Many messages of \chorus carry quorum certificates, most of them made of signatures from $2f+1$ of the $n = 3f+1$ validators, of which at most $f$ are faulty.
In \chorus, these signatures are combined into one compact aggregate signature that anyone can verify, with a pairing-based scheme such as BLS~\cite{bls01}, which a quantum computer would break.
Moving to post-quantum signatures changes two things.
Each post-quantum signature is at least several times larger than a traditional one, and the post-quantum aggregation schemes known today produce aggregates that are orders of magnitude larger than a BLS aggregate signature and take far longer to compute.
Without aggregation, a certificate is a list of $2f+1$ of these larger signatures, whose size grows with the number of validators. If every validator broadcast even one such certificate per slot, the transmission delays alone would be unacceptable at the bandwidth available to validators, and \chorus sends certificates for every proposer.

\quorus therefore keeps the structure of \cadence, one instance of the slot consensus per slot under the \conductor, but does without certificates: no validator ever has to prove a quorum to another, as each validator observes quorums in the messages it receives itself.
Simply removing the certificates from \chorus does not work, because several of its parts rely on them.
Our slot consensus, \qchorus, changes these parts.
Its fast path keeps the structure of \chorus, dissemination followed by two voting rounds, but no vote carries a certificate.
Its fallback path is new: \chorus ends its fallback path with an agreement that accepts only certified inputs, and \qchorus replaces it by \coda, a signature-free Byzantine agreement based on Forget-IT~\cite{forgetit}.

\qchorus goes further than removing certificates: it needs no signatures for safety and liveness, which rest only on authenticated channels and collision-resistant hashing.
We add post-quantum signatures to obtain accountability: validators sign their messages, so the correct validators can prove to anyone that a validator misbehaved.
Proofs of finality for parties outside consensus, such as light clients and bridges, are outside the scope of this paper; certificates of ordinary post-quantum signatures can serve them, since such a certificate need not be broadcast by every validator.

The \conductor needs changes too.
\cadence proves the guarantees of the \conductor with the help of certificates: a correct validator that finishes a slot forwards its certificate, so the others finish within one message delay.
The \conductor also relies on an agreement among the validators, which has to be made post-quantum as well.
We outline how the \conductor can work without certificates and with a signature-free agreement, and leave its full specification and analysis to a later version.

The fast path of \qchorus keeps the optimal latency of \chorus~\cite{cadence}: correct validators finalize in three communication rounds, dissemination included, and reach speculative finality one round earlier, even when some proposers are offline.
Even without certificates, post-quantum signatures are large enough that their number matters, and \qchorus is designed to keep the number of signatures small: in each slot, a validator signs at most two votes, each covering all proposers, and \coda uses no signatures at all. In \chorus, by contrast, a validator signs its vote on each proposer separately, so its number of signatures grows with the number of proposers.
We evaluate \qchorus against \chorus in simulation, for 200 validators with 1\,Gbps links: its latency is comparable to that of \chorus, and it scales better with the number of proposers. With 20 proposers, a validator finalizes a slot on average about 200\,ms after a proposer's broadcast, and reaches speculative finality about 50\,ms earlier.

After recalling \cadence and \chorus in \Cref{sec:background}, we turn to the core of the paper: \qchorus, which \Cref{sec:protocol} describes step by step together with its guarantees, and its fallback consensus, \coda, presented in \Cref{sec:fallback-implementation}.
We then outline how the slot consensus runs within the \conductor in \Cref{sec:conductor}, evaluate the cost of \qchorus in \Cref{sec:evaluation}, and discuss related work in \Cref{sec:related}; the proofs are in \Cref{app:slot,app:coda}.

\section{Background: \cadence and \chorus}\label{sec:background}

\cadence~\cite{cadence} runs among $n = 3f+1$ validators, at most $f$ of which are Byzantine, over a partially synchronous network: after an unknown global stabilization time (GST), every message between correct validators arrives within a known bound $\Delta$.
The validators have synchronized clocks, which \cadence uses to schedule the slots: the \conductor gives each slot a \emph{deadline}, the time at which validators vote on the proposals they have received, and each slot then finalizes one block in its own single-shot instance of \chorus.
In the rest of this section, we describe \chorus for a single slot, whose deadline we denote by $D$.
If the slot's starting time $D - \Delta$ is at least $\mathrm{GST} + \mathcal{G}$, for a \emph{grace period} $\mathcal{G}$, the \conductor ensures that every correct validator starts participating by $D - \Delta$.
For an earlier slot, a validator may learn the deadline only after it has passed, and it then starts participating and votes at once.

Each slot has a fixed set of proposers.
A proposer erasure-codes its proposal into $n$ chunks, any $f+1$ of which suffice to decode it, commits to the chunks with a Merkle root, and sends each validator its chunk, together with a Merkle proof under the root and the proposer's signature on the root.
(In \chorus, proposals are also encrypted before encoding and can be decrypted no earlier than the deadline. This provides \emph{hiding}, a property that fairness between proposers requires: no proposer can craft its proposal in reaction to the others'. We leave this step out, as this paper does not address hiding.)
A correct proposer does so when it starts participating, so in a slot whose starting time is at least $\mathrm{GST} + \mathcal{G}$, the chunks of correct proposers arrive by the deadline.
At $D$, each validator votes once per proposer: for the Merkle root of the proposer's chunk, with the chunk attached, if it holds a valid one, and for the empty value $\bot$ otherwise.
A validator that holds a valid chunk from every proposer before $D$ may cast its votes right away instead (\emph{early voting}).

\chorus agrees on one \emph{entry} per proposer: either the Merkle root of the proposer's proposal, which selects the proposal for the slot's block, or $\bot$, which excludes the proposer.
The block consists of the selected proposals, except any whose chunks turn out not to be a valid encoding. Every correct validator can recover the proposals of every block that \chorus finalizes (\emph{data availability}).
Validators exchange entries in \emph{\metablocks}, vectors with one entry per proposer, in which every entry carries a certificate that justifies it.
In the common case, the certificate is a $\FastQC$: a quorum certificate formed from $2f+1$ matching votes, all for the same root or all for $\bot$, which \chorus implements as a compact aggregate signature.
A validator that holds a $\FastQC$ for every proposer assembles the \emph{fast \metablock} and broadcasts a fast vote on it, together with the \metablock, and it finalizes the entries once it receives $2f+1$ fast votes on the same entries.
This is the fast path, three communication rounds in all: dissemination and two rounds of votes.
A validator may also speculatively finalize one round earlier, that is, treat the fast \metablock as final as soon as it assembles it; this speculation is reverted only if some validator equivocated.

If the fast path does not finish, for instance because a proposal reached only some correct validators by the deadline, the fallback path takes over.
From time $D + \Delta$ on, a validator that holds $2f+1$ votes but cannot assemble a fast \metablock announces that it casts no fast vote and casts a second round of per-proposer votes, after which the validators run a multi-valued validated Byzantine agreement (MVBA).
Each validator proposes a fast \metablock if it has assembled one or received one with a fast vote, and otherwise a \emph{fallback \metablock}.
A fallback \metablock carries $2f+1$ announcements that no fast vote was cast, and it certifies each entry in one of three ways: with a $\FastQC$, with $f+1$ matching votes of the second round, or, for $\bot$, with two different roots signed by the proposer, which prove that it equivocated.
The external validity of the MVBA admits only such certified \metablocks, so each validator can check a proposed \metablock on its own.
Once the MVBA decides, validators cast commit votes on the decided \metablock, and a validator finalizes the slot once it holds $2f+1$ of these votes.

\chorus guarantees \emph{proposal inclusion}: in a slot whose starting time is at least $\mathrm{GST} + \mathcal{G}$, the block contains the proposal of every correct proposer.
This gives \cadence its short-term censorship resistance: a transaction that a correct proposer includes in its proposal enters the block of that slot, so it is neither censored nor deferred to a later block.
\qchorus keeps this property, as well as agreement, termination, and speculative finality.

\section{Model}\label{sec:model}

We consider $n = 3f+1$ validators, at most $f$ of which are \emph{Byzantine} and may deviate arbitrarily from the protocol; the others are \emph{correct}.
Validators communicate over point-to-point authenticated channels: the recipient of a message can tell who sent it.
Reliability, however, is not assumed.
A broadcast unicasts its message to every validator and \emph{retransmits} it, again and again, until the broadcast is explicitly \emph{abandoned}; a point-to-point send retransmits to its recipient in the same way.
A message is \emph{alive} at time $\tau$ if its broadcast or point-to-point send has been initiated, but not abandoned, by time $\tau$, and \emph{forever-alive} if, once initiated, it is never abandoned.
A forever-alive message of a correct validator is eventually received by every correct recipient, including after the recipient recovers from a crash or starts participating.
We use this model throughout the paper. It is similar to the model of Forget-IT~\cite{forgetit}, which, however, communicates by broadcasts only.
Bounding memory is meaningful only with unreliable channels: with reliable channels, undelivered messages must be buffered until delivery, as \cadence observes while retaining reliable channels~\cite{cadence}. With signatures and certificates, bounded memory is well understood; without them, it is not obvious, which motivates our model and the persistent-memory bound for \coda (\Cref{sec:fallback-implementation}).

\para{Crashes and persistent memory.}
Beyond corrupting validators, the adversary may temporarily \emph{crash} a correct validator, which eventually \emph{recovers}.
A crash strikes only between handler executions. It leaves the validator's persistent memory intact, timers included, and wipes the rest of its state, notably the received messages that it has not stored in persistent memory. Upon recovering, the validator resumes execution from its persistent memory.
We assume that a crash does not abandon the validator's ongoing broadcasts: their retransmission resumes upon recovery.
This assumption costs no generality, as every message a validator broadcasts is recorded in its persistent memory, so that a recovery procedure could simply re-initiate their broadcasts.

\para{Partial synchrony and clocks.}
We adopt the standard partially synchronous model: an unknown Global Stabilization Time (GST) splits the execution into an asynchronous period and a synchronous one.
After GST, message delays are bounded by the known $\Delta$: for every time $\tau$, a message of a correct validator that remains alive throughout $[\tau, \max(\tau, \mathrm{GST}) + \Delta]$ is received by every correct recipient during $[\max(\tau, \mathrm{GST}), \max(\tau, \mathrm{GST}) + \Delta]$.
This applies to broadcasts and point-to-point sends, including messages received and lost before $\tau$.
Since a message sent before GST may be lost, $\Delta$ also covers the wait for its next retransmission.
The adversary crashes correct validators only before GST, and every crashed validator recovers before GST, so that all correct validators are continuously online from GST onward.
Local computation is instantaneous, and each validator executes only finitely many handlers in any finite time interval.
As in \cadence, validators have synchronized clocks.

\para{Cryptography.}
We use a collision-resistant hash function.
Every validator holds a key pair of a post-quantum signature scheme.
The protocol needs signatures only for accountability, that is, to prove misbehavior, as the accountability lemmas and the speculation lemma do (\Cref{lem:qv:acc-half,lem:qv:acc-f,lem:qv:speculation}). Agreement, proposal inclusion, termination, and halting rest on the authenticated channels and on the hash function.

\para{Slots.}
Each slot of \quorus has a deadline $D$, and we call $D - \Delta$ the slot's \emph{starting time}.
A validator \emph{participates} in a slot from the time it starts running the slot consensus for that slot, a time that the \conductor determines (\Cref{sec:conductor}).
We call a slot \emph{timely} if its starting time $D - \Delta$ is at least $\mathrm{GST}$ and every correct validator starts participating in it by $D - \Delta$.
In \cadence, the \conductor ensures that every slot whose starting time is at least $\mathrm{GST} + \mathcal{G}$ is timely, where the grace period $\mathcal{G}$ is the time the \conductor needs to settle after GST (\Cref{sec:background}).
We expect the \conductor of \quorus to do the same (\Cref{sec:conductor}).

\section{\qchorus: The Slot Consensus of \quorus}
\label{sec:protocol}

\qchorus takes the place of \chorus: in every slot, it finalizes one block.
Like \chorus, it agrees on one entry per proposer, a Merkle root or a value that excludes the proposer, but its \metablocks carry no certificates: a \metablock is just a vector with one entry per proposer.
Validators agree on a \metablock, and finalizing a \metablock finalizes the slot's block, which consists of the proposals behind its roots. For every finalized block, these proposals can be recovered.
Here, we first describe the protocol and present its pseudocode.
Then, we state its guarantees.

\subsection{Protocol Description}\label{sec:qv:description}

\para{Setting.}
\qchorus runs in the model of \Cref{sec:model}.
Some of the validators are \emph{proposers}, each of which disseminates one proposal in the slot we describe.
Proposals are encoded with an erasure code, any $f+1$ of whose $n$ symbols suffice to decode, and committed to with Merkle trees over the hash function.

\para{Conventions.}
Every rule fires as soon as its condition holds, and a handler runs to completion before the next one starts.
Time guards remain enabled after their deadline.
\Cref{alg:qv1} never abandons a broadcast or a point-to-point send, also not when the validator halts, so all its messages are forever-alive.\footnote{In \quorus, we expect the \conductor to keep the number of open slots bounded, as in \cadence, and to end the broadcasts of slots that no correct validator needs any more (\Cref{sec:conductor}).}
A validator keeps no message of the slot that arrives before it starts participating: retransmission delivers the message again once it participates (\Cref{sec:model}).
Every message, and every tuple that a validator signs, also contains the slot number, which we omit from the notation.
Throughout the paper, ``$f+1$ messages'' means messages from $f+1$ different validators, and likewise for other counts.

\para{The steps.}
We now describe the protocol step by step.
\Cref{fig:protocol-overview} gives a bird's-eye view, simplified for a single proposer, and \Cref{alg:qv1} gives the pseudocode for a validator $\proc_i$.
Steps~1 to~3 (dissemination, a vote at the deadline, and a path vote) serve both paths. The fast path ends after them: a validator finalizes once, for every proposer, $2f+1$ path votes (defined in Step~3) carry matching fast entries. The fallback path continues with Steps~4 to~6 and ends in \coda.

\begin{figure}[tp]
\centering
\begin{tikzpicture}[
  font=\footnotesize,
  >=stealth,
  box/.style={rounded corners=2pt, line width=.7pt, align=center, inner sep=3pt},
  common/.style={box, draw=gray!55, fill=gray!14},
  fast/.style={box, draw=green!50!black!70, fill=green!60!black!9},
  fallback/.style={box, draw=orange!60!black, fill=orange!14, dashed},
  title/.style={font=\footnotesize\bfseries, inner sep=0pt},
  whole/.style={align=center, inner sep=0pt},
  decode/.style={draw=gray!60!black, fill=white, text=gray!25!black,
    rounded corners=1pt, inner sep=1.3pt, font=\scriptsize, solid, line width=.5pt},
  arr/.style={->, semithick, gray!60!black},
]
\def\gap{0.22}     %
\def\ytop{6}       %
\def\xa{1.475}\def\bwa{2.41}
\def\xb{5.207}\def\bwb{2.56}
\def\xc{9.404}\def\bwc{3.51}
\def\xd{13.85}\def\bwd{3.06}
\newcommand{\dectop}{\rule{0pt}{15pt}}

\node[title, anchor=north] (t1) at (\xa,\ytop) {1. Disseminate};
\node[common, minimum width=\bwa cm, anchor=north] (s1) at ([yshift=-0.42cm]t1.south)
  {one chunk per\\validator, with\\root $\mroot_\pid$ and\\Merkle proof};

\node[title, anchor=north] (t2) at (\xb,\ytop) {2. Vote};
\node[whole, anchor=north] (t2s) at ([yshift=-0.12cm]t2.south) {at the deadline};
\node[common, minimum width=\bwb cm] (s2) at (\xb,0 |- s1)
  {valid chunk\\$\Rightarrow$ \textbf{positive} $\mroot_\pid$,\\chunk attached};
\node[common, minimum width=\bwb cm, anchor=north] (s2n) at ([yshift=-0.33cm]s2.south)
  {no valid chunk\\$\Rightarrow$ \textbf{negative} $\bot$};
\node[whole, anchor=north] (w2) at ([yshift=-\gap cm]s2n.south) {\textbf{broadcast} in a \textsc{Vote}};
\coordinate (p2bot) at (w2.south);

\node[title, anchor=north] (t3) at (\xc,\ytop) {3. Path vote};
\node[whole, anchor=north] (t3s) at ([yshift=-0.12cm]t3.south) {after $2f{+}1$ \textsc{Vote}s};
\node[fast, minimum width=\bwc cm] (s3f) at (\xc,0 |- s1) {$2f{+}1$ matching\\$\Rightarrow$ \textbf{fast entry} $\mroot_\pid/\bot$};
\node[fallback, minimum width=\bwc cm, anchor=north] (s3b) at ([yshift=-0.65cm]s3f.south)
  {\dectop$f{+}1$ positive for $\mroot_\pid$\\$\Rightarrow$ \textbf{fallback entry} $\mroot_\pid$\\else \textbf{fallback entry} $\bot$};
\node[whole, anchor=north] (w3) at ([yshift=-\gap cm]s3b.south) {\textbf{broadcast} in a \textsc{PathVote}};
\node[fast, minimum width=\bwc cm, anchor=north] (spec) at ([yshift=-0.45cm]w3.south)
  {\rule{0pt}{13pt}fast entry\\$\Rightarrow$ \textbf{speculatively finalize}};%
\coordinate (spbot) at (spec.south);

\node[title, anchor=north] (t4) at (\xd,\ytop) {Fast path \textmd{(any time)}};
\node[fast, minimum width=\bwd cm] (ff) at (\xd,0 |- s1)
  {\dectop$2f{+}1$ matching\\fast entries\\$\Rightarrow$ \textbf{finalize} $\mroot_\pid/\bot$};

\node[title, anchor=north] (t5) at ([yshift=-0.5cm]ff.south) {4--6. Fallback};
\node[fallback, minimum width=\bwd cm, anchor=north] (ad) at ([yshift=-\gap cm]t5.south)
  {\dectop clean set: $2f{+}1$\\accepted entries,\\fast ones agree\\$\Rightarrow$ \textbf{adopt} $\mroot_\pid/\bot$};
\node[fallback, minimum width=\bwd cm, anchor=north] (cd) at ([yshift=-0.45cm]ad.south)
  {\textbf{input} to \coda;\\output $\Rightarrow$ \textbf{finalize}};
\coordinate (p5bot) at (cd.south);

\foreach \b in {s3b, spec, ff, ad}
  \node[decode, anchor=north east] at ([xshift=-1.6pt, yshift=-1.6pt]\b.north east) {decode};

\draw[arr] (s1) -- (s2);
\draw[arr] ([yshift=0.1cm]s2.east) -- ([yshift=0.1cm]s3f.west);
\draw[arr] ([yshift=0.1cm]s2n.east) -- ([yshift=-0.15cm]s3f.west);
\draw[arr] ([yshift=-0.1cm]s2n.east) -- ([yshift=-0.1cm]s3b.west);
\draw[arr] ([yshift=-0.2cm]s2.east) -- ([yshift=0.25cm]s3b.west);
\draw[arr] (s3f) -- (s3b) node[midway, right, align=left, font=\scriptsize, text=gray!40!black] {else,\\after $D{+}\Delta$};
\draw[arr] (s3f) -- (ff);
\draw[arr] (s3b.east) -- ([yshift=-0.2cm]ad.west);
\draw[arr] ([yshift=-0.15cm]s3f.east) -- ([yshift=0.35cm]ad.west);
\draw[arr] ([yshift=-0.05cm]w3.south) -- (spec);
\draw[arr] (ad) -- (cd);

\begin{scope}[legend/.style={align=left, inner sep=0pt, font=\scriptsize}]
  \coordinate (lgtop) at ([yshift=-0.55cm]p2bot -| 0.05,0);
  \coordinate (lg2) at ([yshift=-0.35cm]lgtop -| 0.25,0);
  \draw[fast, rounded corners=1.5pt] ([yshift=0.1cm]lg2) rectangle ([xshift=0.35cm, yshift=-0.1cm]lg2);
  \node[legend, anchor=west] (lc1) at ([xshift=0.47cm]lg2) {fast path};
  \draw[fallback, rounded corners=1.5pt] ([xshift=0.3cm, yshift=0.1cm]lc1.east) rectangle ([xshift=0.65cm, yshift=-0.1cm]lc1.east);
  \node[legend, anchor=west] (lc2) at ([xshift=0.77cm]lc1.east) {fallback path};
  \draw[common, rounded corners=1.5pt] ([xshift=0.3cm, yshift=0.1cm]lc2.east) rectangle ([xshift=0.65cm, yshift=-0.1cm]lc2.east);
  \node[legend, anchor=west] at ([xshift=0.77cm]lc2.east) {both paths};
  \node[legend, anchor=north west] (lg3) at ([yshift=-0.32cm]lg2 -| 1.35,0)
    {may rebuild a proposal from $f{+}1$ chunks\\and check its encoding; if invalid,\\replace its root by $\bot$};
  \node[decode, anchor=west] at (lg3.west -| 0.2,0) {decode};
  \coordinate (lgbot) at ([yshift=-0.2cm]lg3.south -| 6.8,0);
  \draw[gray!45, line width=.5pt, rounded corners=2pt] (lgtop) rectangle (lgbot);
\end{scope}

\path (spbot); \pgfgetlastxy{\tmpx}{\yspec}
\path (p5bot); \pgfgetlastxy{\tmpx}{\yfb}
\path (lgbot); \pgfgetlastxy{\tmpx}{\ylg}
\pgfmathsetmacro{\yax}{min(\yspec,\yfb,\ylg)-0.45cm}
\begin{scope}[yshift=\yax pt]
\draw[->, line width=.85pt, gray!60!black] (0.05,0) -- (15.75,0);
\foreach \x/\t in {\xa/{$D-\Delta$}, \xb/{deadline $D$}, \xc/{$D+\Delta$}, \xd/{$D+2\Delta$}}
  \draw[gray!60!black] (\x,.07) -- (\x,-.07) node[below, font=\scriptsize, text=black] {\t};
\node[font=\scriptsize, text=gray!40!black, anchor=south east] at (15.75,.04) {time};
\draw[red!55!black, line width=1.05pt] (\xb,-.07) -- (\xb,.32);
\fill[red!55!black] (\xb,.32) -- (\xb+.17,.26) -- (\xb,.20) -- cycle;
\end{scope}
\end{tikzpicture}
\caption{\qchorus{}, simplified for a single proposer $\proc_\pid$.
Time flows left to right. In a timely slot (\Cref{sec:model}), a step starts at the latest at the time marked below its box, and in the common case every correct validator finalizes on the fast path by $D + 2\Delta$.
With several proposers, every vote and path vote carries one entry per proposer, and a validator speculates, finalizes on the fast path, or inputs to \coda only once the condition shown holds for every proposer; a path vote may thus hold a fast entry for one proposer and a fallback entry for another.
Crucially, no message ever carries a certificate: every quorum is observed locally, by each validator in the messages it receives.}
\label{fig:protocol-overview}
\end{figure}

\begin{algorithm}[!tp]
\caption{\qchorus{}: Pseudocode (for validator $\proc_i$)}
\label{alg:qv1}
\begin{algorithmic}[1]
\scriptsize
\renewcommand{\fullmedskip}{\vspace{1.25pt plus 0.5pt minus 0.5pt}}%

\State \textbf{Notation:}
\State \hskip2em - Entries for a proposer $\proc_\pid$ are \emph{positive}, $\langle \pid, \mroot \rangle$, or \emph{negative}, $\langle \pid, \bot \rangle$.

\fullmedskip
\State \textbf{Local variables (persistent memory):}
\State \hskip2em $\mathit{chunk}_i[\pid] \gets \mathsf{none}$ for every proposer $\proc_\pid$ \BlueComment{$\proc_i$'s validated chunk of $\proc_\pid$'s proposal, together with its root $\mroot_\pid$}
\State \hskip2em $\mathit{entry}_i[\pid] \gets \mathsf{none}$ for every proposer $\proc_\pid$ \BlueComment{$\proc_i$'s path-vote entry: $\langle \mathsf{fast}, v \rangle$ or $\langle \mathsf{fallback}, v \rangle$}
\State \hskip2em $\mathit{adopted}_i[\pid] \gets \mathsf{none}$ for every proposer $\proc_\pid$ \BlueComment{value adopted for the fallback consensus}
\State \hskip2em $\mathit{finalized}_i \gets \false$ \BlueComment{whether $\proc_i$ has finalized (set by \textbf{finalize})}
\State \hskip2em Messages sent, halt status, and completion of once-only steps
\State \hskip2em Received \textsc{Finalized} messages (at most one per validator)
\State \hskip2em $\mathit{votes}_i \gets \emptyset$ \BlueComment{received valid \textsc{Vote} messages}
\State \hskip2em $\mathit{pathVotes}_i \gets \emptyset$ \BlueComment{received \textsc{PathVote} messages}

\fullmedskip
\State \textbf{Rules:}
\State \hskip2em - Persist all received messages of this algorithm; halting does not abandon broadcasts.
\State \hskip2em \parbox[t]{\dimexpr\linewidth-4em\relax}{\settowidth{\dimen0}{- }\hangindent=\dimen0\hangafter=1 \leavevmode- A received $\langle \textsc{Chunk}, \mroot, \cdata, \pi, \sigma \rangle$ from a proposer $\proc_\pid$ is valid only if $\sigma$ is a valid signature of $\proc_\pid$ on $\langle \pid, \mroot \rangle$ and $\pi$ proves that $\cdata$ is the $i$-th symbol under $\mroot$.}
\State \hskip2em \parbox[t]{\dimexpr\linewidth-4em\relax}{\settowidth{\dimen0}{- }\hangindent=\dimen0\hangafter=1 - A received $\langle \textsc{Vote}, E, C, \sigma \rangle$ from $\proc_r$ is added to $\mathit{votes}_i$ only if $\sigma$ is a valid signature of $\proc_r$ on $E$, $E$ holds an entry $\langle \pid, \cdot \rangle$ for every proposer $\proc_\pid$, and $C$ holds, for every positive entry $\langle \pid, \mroot \rangle$, $\proc_r$'s symbol with a valid Merkle proof under $\mroot$.} \LineLabel{ln:qv:allornothing}
\State \hskip2em \parbox[t]{\dimexpr\linewidth-4em\relax}{\settowidth{\dimen0}{- }\hangindent=\dimen0\hangafter=1 - A received $\langle \textsc{PathVote}, F, \sigma \rangle$  from $\proc_r$ is added to $\mathit{pathVotes}_i$ only if $\sigma$ is a valid signature of $\proc_r$ on $F$.}
\State \hskip2em \parbox[t]{\dimexpr\linewidth-4em\relax}{\settowidth{\dimen0}{- }\hangindent=\dimen0\hangafter=1 - At most one \textsc{Vote} from each validator is added to $\mathit{votes}_i$, and at most one \textsc{PathVote} to $\mathit{pathVotes}_i$.}
\State \hskip2em \parbox[t]{\dimexpr\linewidth-4em\relax}{\settowidth{\dimen0}{- }\hangindent=\dimen0\hangafter=1 - A path-vote entry $\langle \mathsf{fallback}, \cdot \rangle$ for $\proc_\pid$ is \emph{accepted} upon receipt. An entry $\langle \mathsf{fast}, \mroot \rangle$ for $\proc_\pid$ is accepted  as soon as $\mathit{votes}_i$ holds $f + 1$ positive entries $\langle \pid, \mroot \rangle$; an entry $\langle \mathsf{fast}, \bot \rangle$ once $\mathit{votes}_i$ holds $f + 1$ negative entries for $\proc_\pid$.} \LineLabel{ln:qv:accept}

\fullmedskip
\State \textbf{Functions:}
\State \hskip2em - $\mathsf{clean}(\pid)$: \BlueComment{Step 4: a clean set for proposer $\proc_\pid$, or $\mathsf{none}$}
\State \hskip4em \parbox[t]{\dimexpr\linewidth-8em\relax}{\settowidth{\dimen0}{\textbf{return} }\hangindent=\dimen0\hangafter=1 \textbf{return} the first set $S$ obtained by $\proc_i$ of $2f + 1$ accepted entries for $\proc_\pid$, from different validators, such that all fast entries in $S$ carry the same value; $\mathsf{none}$ if no such set exists} \LineLabel{ln:qv:clean}

\fullmedskip
\State \hskip2em - $\mathsf{adopt}(S)$: \BlueComment{Step 5: the adoption rule, reading a clean set $S$}
\State \hskip4em \textbf{if} $S$ holds an entry $\langle \mathsf{fast}, \bot \rangle$: \textbf{return} $\bot$ \BlueComment{unambiguous: all fast entries in $S$ agree}
\State \hskip4em \textbf{else if} $S$ holds an entry $\langle \mathsf{fast}, \mroot \rangle$:
\State \hskip6em Decode the proposal from the $f + 1$ chunks that corroborated the entry, and re-encode it
\State \hskip6em \textbf{if} the recomputed root equals $\mroot$: \textbf{return} $\mroot$ \LineLabel{ln:qv:adoptcheck}
\State \hskip6em \textbf{else return} $\bot$ \BlueComment{invalid encoding: the proposal is excluded}
\State \hskip4em \textbf{else if} $S$ holds $f + 1$ entries $\langle \mathsf{fallback}, \mroot \rangle$ for the same $\mroot$: \textbf{return} $\mroot$
\State \hskip4em \textbf{else return} $\bot$ \LineLabel{ln:qv:adoptbot}

\fullmedskip
\State \textbf{upon starting:} \BlueComment{Step 1: dissemination at time $D - \Delta$}
\State \hskip2em \textbf{if} $p_i$ is a proposer:
\State \hskip4em Encode the proposal into $n$ erasure-coded symbols $\cdata_1, \ldots, \cdata_n$
\State \hskip4em Build a Merkle tree over $\cdata_1, \ldots, \cdata_n$ with root $\mroot_i$
\State \hskip4em $\sigma_i \gets \mathsf{Sign}_i(\langle i, \mroot_i \rangle)$ \BlueComment{signed header: binds the root to the proposer}
\State \hskip4em \textbf{for each} validator $\proc_r$: 
\State \hskip6em \textbf{send} $\langle \textsc{Chunk}, \mroot_i, \cdata_r, \pi_r, \sigma_i \rangle$ to $\proc_r$ \BlueComment{$\pi_r$: Merkle proof of $\cdata_r$} \LineLabel{ln:qv:disseminate}

\fullmedskip
\State \textbf{upon receipt of} a valid $\langle \textsc{Chunk}, \mroot, \cdata, \pi, \sigma \rangle$ \textbf{from} a proposer $\proc_\pid$ \textbf{and} $\mathit{chunk}_i[\pid] = \mathsf{none}$ \textbf{and} time has not passed $D$: \LineLabel{ln:qv:chunk}
\State \hskip2em $\mathit{chunk}_i[\pid] \gets \langle \mroot, \cdata, \pi, \sigma \rangle$

\fullmedskip
\State \textbf{upon} time reaches $D$ \textbf{or} $\mathit{chunk}_i[\pid] \neq \mathsf{none}$ for every proposer $\proc_\pid$: \BlueComment{Step 2: deadline vote at time $D$ or earlier (runs once)}
\State \hskip2em \textbf{for each} proposer $\proc_\pid$:
\State \hskip4em \textbf{if} $\mathit{chunk}_i[\pid] \neq \mathsf{none}$: $E_\pid \gets \langle \pid, \mroot_\pid \rangle$ \BlueComment{positive entry}
\State \hskip4em \textbf{else} $E_\pid \gets \langle \pid, \bot \rangle$ \BlueComment{negative entry: no valid chunk by the deadline}
\State \hskip2em $E \gets (E_\pid)_{\text{proposer } \proc_\pid}$
\State \hskip2em $C \gets \{ \langle \mroot, \cdata, \pi \rangle \text{ from } \mathit{chunk}_i[\pid] : E_\pid \text{ is positive} \}$ \BlueComment{no signed headers}
\State \hskip2em $\sigma_i \gets \mathsf{Sign}_i(E)$ \BlueComment{one signature over the whole entry vector}
\State \hskip2em \textbf{broadcast} $\langle \textsc{Vote}, E, C, \sigma_i \rangle$ \LineLabel{ln:qv:vote}

\fullmedskip
\State \textbf{upon} time reaches $D + \Delta$ \textbf{and} $|\mathit{votes}_i| \geq 2f + 1$, \textbf{or} for every proposer $\proc_\pid$, $\mathit{votes}_i$ holds $2f + 1$ entries for $\proc_\pid$ that name the same value: \BlueComment{Step 3: path vote (runs once)}
\State \hskip2em \textbf{for each} proposer $\proc_\pid$:
\State \hskip4em \textbf{if} $\mathit{votes}_i$ holds $2f + 1$ positive entries $\langle \pid, \mroot \rangle$ for the same Merkle root $\mroot$: $\mathit{entry}_i[\pid] \gets \langle \mathsf{fast}, \mroot \rangle$ \LineLabel{ln:qv:fastroot}
\State \hskip4em \textbf{else if} $\mathit{votes}_i$ holds $2f + 1$ negative entries for $\proc_\pid$: $\mathit{entry}_i[\pid] \gets \langle \mathsf{fast}, \bot \rangle$ \LineLabel{ln:qv:fastbot}
\State \hskip4em \textbf{else if} $\mathit{votes}_i$ holds $f + 1$ positive entries $\langle \pid, \mroot \rangle$ for some Merkle root $\mroot$: \LineLabel{ln:qv:decode}
\State \hskip6em Decode the proposal from the chunks attached to these entries, and re-encode it
\State \hskip6em \textbf{if} the recomputed root equals $\mroot$: $\mathit{entry}_i[\pid] \gets \langle \mathsf{fallback}, \mroot \rangle$
\State \hskip6em \textbf{else} $\mathit{entry}_i[\pid] \gets \langle \mathsf{fallback}, \bot \rangle$
\State \hskip4em \textbf{else} $\mathit{entry}_i[\pid] \gets \langle \mathsf{fallback}, \bot \rangle$
\State \hskip2em $F \gets (\mathit{entry}_i[\pid])_{\text{proposer }\proc_\pid}$
\State \hskip2em $\sigma_i \gets \mathsf{Sign}_i(F)$ \BlueComment{one signature over the whole path vote}
\State \hskip2em \textbf{broadcast} $\langle \textsc{PathVote}, F, \sigma_i \rangle$ \LineLabel{ln:qv:pathvote}
\State \hskip2em \textbf{if} every entry of $F$ is fast, with $\mathit{entry}_i[\pid] = \langle \mathsf{fast}, v_\pid \rangle$ for every proposer $\proc_\pid$, \textbf{and} $\neg \mathit{finalized}_i$:
\State \hskip4em \textbf{for each} $\proc_\pid$ whose $v_\pid$ is a root: re-encode the proposal decoded from $f + 1$ chunks under $v_\pid$\LineLabel{ln:qv:speccheck}
\State \hskip6em \textbf{if} the root differs: $v_\pid \gets \bot$
\State \hskip4em \textbf{speculatively finalize} the \metablock $(v_\pid)_{\text{proposer } \proc_\pid}$ \LineLabel{ln:qv:speculate}

\fullmedskip
\State \textbf{upon} for every proposer $\proc_\pid$, $\mathit{pathVotes}_i$ holds $2f + 1$ messages whose entry for $\proc_\pid$ is $\langle \mathsf{fast}, v_\pid \rangle$ for the same $v_\pid$, and, for every $\proc_\pid$ whose $v_\pid$ is a root, $\mathit{votes}_i$ holds $f + 1$ positive entries $\langle \pid, v_\pid \rangle$: \BlueComment{fast path (at any time)}
\State \hskip2em \textbf{for each} $\proc_\pid$ whose $v_\pid$ is a root: decode the proposal from these chunks and re-encode it; \textbf{if} the root differs: $v_\pid \gets \bot$ \LineLabel{ln:qv:fastcheck}
\State \hskip2em \textbf{finalize} the \metablock $x = {}$$\bigl((v_\pid)_{\text{proposer } \proc_\pid}\bigr)$ and \textbf{broadcast} $\langle \textsc{Finalized}, x \rangle$, unless already finalized \LineLabel{ln:qv:fastdecide}

\fullmedskip
\State \textbf{upon} time reaches $D + 2\Delta$ \textbf{and} $\mathsf{clean}(\pid) \neq \mathsf{none}$ for every proposer $\proc_\pid$: \BlueComment{Step 5: adoption (runs once)}
\State \hskip2em \textbf{for each} proposer $\proc_\pid$:
\State \hskip4em $\mathit{adopted}_i[\pid] \gets \mathsf{adopt}(\mathsf{clean}(\pid))$ \LineLabel{ln:qv:adopt}
\State \hskip2em $a \gets (\mathit{adopted}_i[\pid])_{\text{proposer } \proc_\pid}$
\State \hskip2em \textbf{input} $a$ to the fallback consensus \LineLabel{ln:qv:input}

\fullmedskip
\State \textbf{upon} the fallback consensus \textbf{outputs} $x$: \BlueComment{Step 6}
\State \hskip2em $x[\pid] \gets \bot$ for every proposer $\proc_\pid$ with $x[\pid] = \ALLSAFE$
\State \hskip2em \textbf{finalize} the \metablock $x$ and \textbf{broadcast} $\langle \textsc{Finalized}, x \rangle$, unless already finalized \BlueComment{$x[\pid] {}= \bot$ excludes $\proc_\pid$} \LineLabel{ln:qv:fallbackdecide}

\fullmedskip
\State \textbf{upon} receipt of $\langle \textsc{Finalized}, x \rangle$ from $f + 1$ validators, for the same $x$: \LineLabel{ln:qv:echo}
\State \hskip2em \textbf{finalize} the \metablock $x$ and \textbf{broadcast} $\langle \textsc{Finalized}, x \rangle$, unless already finalized
\State \textbf{upon} receipt of $\langle \textsc{Finalized}, x \rangle$ from $2f + 1$ validators, for the same $x$: \LineLabel{ln:qv:halt}
\State \hskip2em \textbf{finalize} the \metablock $x$ and \textbf{broadcast} $\langle \textsc{Finalized}, x \rangle$, unless already finalized; \textbf{halt}

\end{algorithmic}
\end{algorithm}

\paragraph{Step 1: Dissemination ($\Delta$ time before the deadline).}
Dissemination proceeds exactly as in \chorus{}.
Each proposer $\proc_\pid$ encodes its proposal into $n$ erasure-coded symbols and builds a Merkle tree over them, with root $\mroot_\pid$.
It then sends to each validator $\proc_i$ the $i$-th symbol, $\proc_i$'s \emph{assigned chunk}, together with the Merkle proof of the chunk's inclusion under $\mroot_\pid$ and the header $\langle \pid, \mroot_\pid \rangle$ signed by $\proc_\pid$ (line~\ref{ln:qv:disseminate}).

\paragraph{Step 2: Voting (at the deadline).}
The deadline step, too, mirrors  \chorus.
When time reaches the deadline $D$, a correct validator $\proc_i$ forms one entry per proposer: a \emph{positive} entry $\langle \pid, \mroot_\pid \rangle$ if a valid chunk of $\proc_\pid$'s proposal has arrived (the chunk assigned to $\proc_i$, whose Merkle proof verifies against a header signed by $\proc_\pid$), and a \emph{negative} entry $\langle \pid, \bot \rangle$ otherwise, attesting that nothing arrived on time.
It broadcasts the entries in a single \textsc{Vote} message, attaching for every positive entry its assigned chunk, that is, the root, the symbol, and its Merkle proof (line~\ref{ln:qv:vote}).
Crucially, the message carries \emph{one} signature, computed over the entire vector of entries: unlike in \chorus{}, where each entry is signed on its own, no entry is ever signed individually, and a validator vouches for its whole vote.
We also underline that the signed header is not included in the vote.
Therefore, a vote contains a single signature.
A correct validator that holds a valid chunk from every proposer before the deadline sends this vote right away instead of at the deadline (\emph{early voting}, as in \cadence).

\paragraph{Step 3: Path votes.}
At the end of the first post-deadline round, namely, once $2f+1$ valid votes have been received and time has reached $D + \Delta$, a correct validator $\proc_i$ examines the received votes and forms, for each proposer $\proc_\pid$ separately, a \emph{path-vote entry}, distinguishing three cases.
\begin{compactitem}
   \item First, if $\proc_i$ observes $2f+1$ positive entries for $\proc_\pid$ with the same Merkle root $\mroot_\pid$, or $2f+1$ negative entries for $\proc_\pid$, its entry is a \emph{fast entry}, carrying $\mroot_\pid$ in the former case and $\bot$ in the latter (lines~\ref{ln:qv:fastroot} and~\ref{ln:qv:fastbot}).
   The important difference from \chorus lies in what a fast entry carries.
   In \chorus, a fast vote comes with its own justification: the $\FastQC$ that aggregates the $2f+1$ deadline votes behind it.
   Here, there is no certificate to attach.
   A fast entry is only a claim: $\proc_i$ states that it has seen a quorum for $\proc_\pid$, and offers no proof.

   \item Second, if no such quorum exists but $\proc_i$ observes $f+1$ positive entries for $\proc_\pid$ with the same root $\mroot_\pid$ (ties broken arbitrarily), it proceeds as in  \chorus: it collects the chunks attached to those entries, decodes the proposal, and re-encodes it. If the recomputed root matches, its entry is a \emph{fallback entry} for $\mroot_\pid$ (line~\ref{ln:qv:decode}).

   \item Third, if the reconstruction fails, or if no root gathers even $f+1$ positive entries, its entry is a fallback entry for $\bot$.
\end{compactitem}
Having formed all entries, $\proc_i$ broadcasts them in a single \textsc{PathVote} message (line~\ref{ln:qv:pathvote}).
A validator does not wait for $D + \Delta$ if, for every proposer, it already holds $2f+1$ votes whose entries for that proposer name the same value: all its entries are then fast, and waiting cannot change them, since no two values can each be matched by $2f+1$ of at most $n$ votes. From the deadline onward, the fast path thus proceeds at the speed of the network, as in \chorus.
As with the deadline vote, the message carries a single signature, computed over the entire vector of entries.
Finally, $p_i$ speculatively finalizes the meta-block formed by its fast entries if and only if \emph{every} entry of its path vote is fast and it has not finalized yet (line~\ref{ln:qv:speculate}), after checking the encodings. A fast entry is formed without decoding and may carry the root of an invalid encoding, so, for every fast entry for a root, $p_i$ rebuilds the proposal from the chunks attached to the votes behind the entry and re-encodes it; if the recomputed root differs, it replaces the root by $\bot$.
A speculatively finalized block can still be reverted, but any reversion leaves behind a proof of misbehavior (\Cref{lem:qv:speculation}).

\paragraph{Finalizing on the fast path (at any time).}
If, at any point in time, a correct validator holds, for \emph{every} proposer $\proc_\pid$, $2f+1$ path votes whose entries for $\proc_\pid$ are fast entries for the same value $v_\pid$, a root or $\bot$, it finalizes the \metablock $(v_\pid)_{\text{proposer } \proc_\pid}$ (line~\ref{ln:qv:fastdecide}).
Before finalizing, it waits until it holds, for every root $v_\pid$, $f+1$ \textsc{Vote}s positive for $v_\pid$ (the correct validators whose \textsc{Vote}s are behind the fast entries have broadcast them), rebuilds the proposal from their chunks, and re-encodes it; if the recomputed root differs from $v_\pid$, the encoding is invalid, and the validator replaces $v_\pid$ by $\bot$. The result of this check does not depend on which chunks are used: a valid encoding decodes to the same proposal from any $f+1$ of its symbols, and an invalid one never re-encodes to the root, whichever symbols are used. A correct validator finalizes at most once, and announces its finalization with a \textsc{Finalized} message, which the halting rule below uses. Finalizing does not end its part in the slot: it keeps running the protocol, the fallback consensus included, until it halts.

\paragraph{Step 4: Accepting second-round entries.}
We now turn to the second post-deadline round, the round during which path votes travel, and, from here on, refer to the entries of path votes collectively as \emph{second-round entries}.
The key notion of this step is that of \emph{accepting} a second-round entry. Crucially, accepting an entry is not the same as receiving the path vote that carries it, and acceptance is decided entry by entry (line~\ref{ln:qv:accept}).
\begin{compactitem}
   \item A fallback entry is accepted the moment it is received.

   \item A fast entry, in contrast, must earn its acceptance.
   Namely, recalling that a fast entry is a bare claim of a quorum, $\proc_i$ accepts one only once it sees the claim corroborated by $f+1$ matching deadline entries: a fast entry for a root $\mroot_\pid$ is accepted once $\proc_i$ has received $f+1$ votes with a positive entry $\langle \pid, \mroot_\pid \rangle$, each carrying its sender's chunk of the proposal, whereas a fast entry for $\bot$ is accepted once $\proc_i$ has received $f+1$ votes with a negative entry for $\proc_\pid$.
The chunks are essential for data availability (as we explain later).\end{compactitem}
A single path vote may therefore contain, at the same time, an accepted entry for one proposer and a not yet accepted entry for another.

Now, a \emph{clean set for proposer $\proc_\pid$} is a set of $2f+1$ accepted second-round entries for $\proc_\pid$, from different validators, whose fast entries all agree: any two fast entries in the set carry the same root, or both carry $\bot$.
Fallback entries are left unconstrained.
For example, a set of $2f+1$ accepted fallback entries for $\proc_\pid$ is clean: it holds no fast entries to disagree.
Similarly, a set in which five fast entries all carry the same root, and all remaining entries are fallback entries, is clean as well.
By contrast, a set holding a fast entry for one root and a fast entry for a different one is \emph{not} clean, and neither is a set holding a fast entry for some root alongside a fast entry for $\bot$.

With this notion in hand, the second post-deadline round ends at $\proc_i$ once (1) time has reached $D + 2\Delta$, and (2) $\proc_i$ holds a clean set for every proposer.
We underline that the clean sets of different proposers may come from different validators.
Also, several clean sets may exist for one proposer, and they need not agree, so $\proc_i$ reads the first one it obtains (line~\ref{ln:qv:clean}).

\paragraph{Step 5: Adopting a \metablock.}
With the second post-deadline round over, $\proc_i$ \emph{adopts} a value for every proposer, a Merkle root or $\bot$, thereby forming the \metablock it inputs to the fallback path.
The adoption rule for proposer $\proc_\pid$ reads $\proc_i$'s clean set for $\proc_\pid$ (line~\ref{ln:qv:adopt}).
\begin{compactenum}
   \item If the clean set holds a fast entry for $\bot$, $\proc_i$ adopts $\bot$.
   If it holds a fast entry for a root $\mroot_\pid$, $\proc_i$ does not take the root on faith: it decodes the proposal from the $f+1$ chunks that corroborated the entry and re-encodes it, and it adopts $\mroot_\pid$ if the recomputed root is $\mroot_\pid$, and $\bot$ otherwise (line~\ref{ln:qv:adoptcheck}).
   (The rule is unambiguous: the set being clean, all fast entries in it agree.)
   Adopting a root from a fast entry thus  rests on a proposal that $\proc_i$ has rebuilt itself.

   \item If, instead, the clean set consists of fallback entries alone, $\proc_i$ examines them: if $f+1$ of them carry the same root, $\proc_i$ adopts that root, and otherwise it adopts $\bot$.
\end{compactenum}
It then inputs the adopted \metablock to the fallback consensus (line~\ref{ln:qv:input}).

\paragraph{Step 6: The fallback consensus.}
The fallback path now takes over.
In  \chorus, the fallback consensus black box is an MVBA, whose external validity admits only certified \metablocks. Here, we replace it with \emph{Byzantine agreement with per-proposer strong unanimity}, which \coda implements (\Cref{sec:fallback-implementation}).
The primitive takes as input, from each validator, a \metablock: one entry per proposer, each a Merkle root or $\bot$.
It outputs a \metablock in which an entry may also take the special value $\ALLSAFE$, the primitive's own way of outputting nothing for a proposer (note that $\bot \neq \ALLSAFE$).
The primitive satisfies the following four properties:
\begin{compactitem}
   \item \emph{Agreement:} No two correct validators output different \metablocks.

   \item \emph{Termination:} If every correct validator inputs, every correct validator eventually outputs.

   \item \emph{Strong unanimity (per proposer):} If all correct validators that input a \metablock{} input the same value for proposer $\proc_\pid$, a root or $\bot$, then no correct validator outputs $\ALLSAFE$ for $\proc_\pid$.

   \item \emph{Obligation (per proposer):} If a correct validator outputs a value $v \neq \ALLSAFE$ for proposer $\proc_\pid$, then some correct validator input $v$ for $\proc_\pid$.
\end{compactitem}
Agreement, strong unanimity, and obligation hold even if some correct validators stop taking part in the primitive, possibly before inputting, as validators do when they halt (see the paragraph on halting below); termination is required only when no correct validator stops taking part, since once a correct validator halts, the halting gadget makes every correct validator that participates finalize without the fallback consensus.
With the primitive in place, the fallback path is simple.
Every correct validator inputs its adopted \metablock.
Once the primitive outputs a \metablock $x$, a correct validator replaces every entry $\ALLSAFE$ of $x$ by $\bot$ and finalizes the result: a root  includes its proposer, whereas $\bot$ excludes it (line~\ref{ln:qv:fallbackdecide}).
Inputting the adopted \metablock to the primitive (line~\ref{ln:qv:input}) and starting the primitive form one handler, and so do an output of the primitive and the finalization it causes (line~\ref{ln:qv:fallbackdecide}), so that a crash cannot fall between them.

\paragraph{On halting.}
In  \chorus, a correct validator halts the moment it finalizes --- and it can afford to.
Namely, it holds a quorum certificate (an aggregate signature) attesting to its finalization: a self-contained proof it can disseminate reliably, so that it reaches everyone even after the validator halts, to convince everyone else to stop as well.
Here, that luxury is gone: there is no aggregate signature to hand out.
And simply halting upon finalizing would be dangerous --- the protocol could silently lose a quorum of correct validators and, with it, termination for the validators still running.
In \qchorus, therefore, a correct validator must \emph{not} halt upon finalizing.

Fortunately, the problem admits an easy fix: a \emph{halting gadget}, in the style of Bracha's amplification~\cite{bracha87} (lines~\ref{ln:qv:echo} and~\ref{ln:qv:halt}).
\begin{compactitem}
   \item Upon finalizing a value $v$, a correct validator broadcasts a special message:  $\langle \textsc{Finalized}, v \rangle$.

   \item Upon receiving $f+1$ matching  \textsc{Finalized} messages, a correct validator  finalizes $v$ and broadcasts $\langle \textsc{Finalized}, v \rangle$ itself, unless it has already finalized.

   \item Upon receiving $2f+1$ matching  \textsc{Finalized} messages, a correct validator finalizes $v$ (if it has not already) and halts.
\end{compactitem}
The $f+1$ threshold keeps the echo honest: among $f+1$ matching messages, at least one comes from a correct validator, so the echoed value is indeed the finalized one.
Finalizing at this threshold is therefore safe, and it lets every correct validator that starts participating at time $S$ finalize by time $\max(S, H, \mathrm{GST}) + \Delta$, where $H$ is the time at which the first correct validator halts.
The $2f+1$ threshold makes halting safe: among $2f+1$ matching messages, at least $f+1$ come from correct validators, whose echoes eventually push every correct validator over the $2f+1$ bar: no correct validator halts alone.
Halting also has a constant bound: if every correct validator starts participating by time $T$, every correct validator halts by time $\max(T, H, \mathrm{GST}) + 2\Delta$, since it finalizes by time $\max(T, H, \mathrm{GST}) + \Delta$ and then halts within $\Delta$ (\Cref{lem:qv:eventualhalting}).
The \textsc{Finalized} messages of correct validators match, since all correct validators finalize the same \metablock. Until it halts, a correct validator keeps taking part in the fallback consensus, even after finalizing on the fast path. A validator that has halted handles no more messages of the slot, including those of the fallback consensus, but its ongoing broadcasts remain alive and keep retransmitting. In this version, these broadcasts never end. We expect the protocol that schedules the slots (in \quorus, the \conductor) to end them once it has moved far enough ahead that no correct validator needs the slot's messages any more (\Cref{sec:conductor}).
A correct validator that finalizes through the halting gadget finalizes a \metablock that some correct validator finalized on the fast path or on the fallback path: the first correct validator to broadcast $\langle \textsc{Finalized}, x \rangle$ did so upon finalizing $x$ on one of the two paths, since finalizing through the gadget requires $f+1$ \textsc{Finalized} messages for $x$, one of them from a correct validator that broadcast earlier.

This replaces the commitment certificates of \chorus: instead of receiving a transferable certificate from another validator, each validator observes a quorum of matching \textsc{Finalized} messages itself.

\para{Data availability.}\label{sec:qv:availability}
As described, validators agree, proposer by proposer, on a Merkle root or on exclusion. A Merkle root, however, is only a digest, whereas what \qchorus must ultimately provide is the proposal the proposer originally disseminated.
We must therefore argue that whenever a root is finalized for $\proc_\pid$, some correct validator holds the corresponding proposal and can send it to the others.
The argument is per proposer, so we fix a proposer $\proc_\pid$.

Consider first the fast path.
 A correct validator that finalizes a root $\mroot$ for $\proc_\pid$ on the fast path holds the proposal itself: before finalizing, it rebuilt the proposal from $f+1$ chunks under $\mroot$, and it kept $\mroot$ only because the proposal re-encodes to it (line~\ref{ln:qv:fastcheck}).

The fallback path asks for slightly more care.
Suppose a Merkle root $\mroot$ is output for $\proc_\pid$ on the fallback path.
As $\mroot \neq \ALLSAFE$, obligation guarantees that some correct validator input $\mroot$ for proposer $\proc_\pid$. We argue that some correct validator then holds the proposal behind $\mroot$ in full.
Recall from the adoption rule the two ways in which a correct validator comes to input $\mroot$.
\begin{compactenum}
   \item It may have adopted $\mroot$ off a clean set holding a fast entry for $\mroot$.
   Before adopting, it decoded the proposal from the $f+1$ chunks that corroborated the entry and checked that it re-encodes to $\mroot$ (line~\ref{ln:qv:adoptcheck}). It therefore holds the proposal.

   \item Otherwise, it adopted $\mroot$ off a clean set consisting of fallback entries alone, $f+1$ of which carry $\mroot$.
   Recall how a fallback entry for $\mroot$ comes to be: its sender collected chunks, decoded the proposal, and re-encoded it back to $\mroot$.
   Among these $f+1$ senders, at least one is correct, and that validator holds the proposal in its entirety.
\end{compactenum}
In either case, some correct validator holds the proposal and can send it to the others.
Finally, a correct validator that finalizes through the halting gadget finalizes a \metablock that some correct validator finalized on the fast path or on the fallback path (see the paragraph on halting), so some correct validator holds the proposal in this case too.
How a correct validator obtains chunks it lacks, for instance after finalizing through the halting gadget, lies outside the protocol.
We assume only a retrieval mechanism, with retransmission over the same unreliable channels, by which a correct validator eventually obtains data that another correct validator holds. It runs outside the slot's instance, so halting stops neither side: correct validators keep the data in persistent memory also after halting, and a halted validator still obtains data through it.

\subsection{Guarantees}

We now state the guarantees of the slot consensus; \Cref{app:slot} proves them.
Agreement, termination, and proposal inclusion are three of the properties that \cadence requires of a slot consensus, and we state them similarly to \cadence~\cite{cadence}; of the other three, slot safety (a finalized block carries the slot's number) holds trivially, and we do not address hiding and quiescence (a correct validator sends messages for a slot only while it participates in it).
Agreement, speculation, and the complexity bound hold for every slot, termination and halting for every slot in which every correct validator starts participating, and proposal inclusion and the accountability lemmas for every timely slot (\Cref{sec:model}).

\begin{restatable}[Agreement]{lemma}{qvagreement}\label{lem:qv:agreement}
\qchorus satisfies agreement: correct validators never finalize conflicting blocks for the slot.
\end{restatable}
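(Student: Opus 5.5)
The plan is a case analysis on how a correct validator finalizes: on the fast path (line~\ref{ln:qv:fastdecide}), on the fallback path (line~\ref{ln:qv:fallbackdecide}), or through the halting gadget (lines~\ref{ln:qv:echo} and~\ref{ln:qv:halt}). I dispose of the gadget first, by induction on the order of \textsc{Finalized} broadcasts. A correct validator finalizes $x$ through the gadget only after receiving $f+1$ \textsc{Finalized} messages for $x$, and at least one of them comes from a correct validator that broadcast earlier. Following this chain back, the first correct validator to broadcast $\langle \textsc{Finalized}, x \rangle$ did so upon finalizing $x$ on one of the two paths. So it suffices to show that any two \metablocks finalized by correct validators on the two paths are equal. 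There are three pairs of paths to compare.

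\emph{Fallback versus fallback.} By agreement of the fallback primitive (Step~6), correct validators output the same $x$. The replacement of $\ALLSAFE$ by $\bot$ is deterministic, so they finalize the same \metablock.

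\emph{Fast versus fast.} Suppose correct validators finalize on the fast path with values $v_\pid$ and $v'_\pid$ for some proposer $\proc_\pid$. Each holds $2f+1$ path votes with a fast entry for its value. Two sets of $2f+1$ senders out of $n=3f+1$ share at least $f+1$ validators, hence a correct one. A correct validator broadcasts only one \textsc{PathVote}, and at most one per sender is stored, so $v_\pid = v'_\pid$. The encoding check (line~\ref{ln:qv:fastcheck}) then gives the same outcome at both validators. A valid encoding decodes to the same proposal from any $f+1$ symbols under the root. An invalid one never re-encodes to the root, by the Merkle proofs and collision resistance of the hash.

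\emph{Fast versus fallback.} This is the main step. Fix a correct validator that finalizes $(w_\pid)$ on the fast path, where $w_\pid$ is $v_\pid$ after the encoding check, and fix a proposer $\proc_\pid$. Among its $2f+1$ path votes with $\langle \mathsf{fast}, v_\pid\rangle$, at least $f+1$ come from correct validators; call this set $C$. Consider any correct validator that inputs to the fallback primitive and its clean set $S$ for $\proc_\pid$, which has $2f+1$ senders. Then $|S| + |C| = 3f+2 > n$, so $S$ contains an entry from some validator in $C$. That validator is correct, so its stored entry is its actual $\langle \mathsf{fast}, v_\pid\rangle$. Because $S$ is clean, every fast entry in $S$ carries $v_\pid$, so the adoption rule takes its first branch. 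If $v_\pid = \bot$, it adopts $\bot$. If $v_\pid$ is a root, it decodes from $f+1$ chunks under $v_\pid$ and re-encodes. By the determinism argument above, this yields exactly $w_\pid$. Hence every correct validator that inputs to the primitive inputs $w_\pid$ for $\proc_\pid$. Strong unanimity rules out $\ALLSAFE$ as the output for $\proc_\pid$, and obligation forces the output to be a value that some correct validator input, namely $w_\pid$. Since this holds for every proposer, any fallback finalization equals $(w_\pid)$. I also need that strong unanimity and obligation survive correct validators halting without inputting; the statement of the primitive in Step~6 grants exactly this.

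The obstacle I expect is the care needed in the fast-versus-fallback case. One point is that acceptance of a fast entry does not create new entries, so the entry in $S \cap C$ really is the correct sender's own. The other is that the encoding check must be shown independent of which $f+1$ chunks are used, since the adopting validator and the fast finalizer may rely on different corroborating \textsc{Vote}s. For the latter I would rely on the binding of each chunk to $v_\pid$ by its Merkle proof, and invoke collision resistance of the hash.
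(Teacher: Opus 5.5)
Your proposal is correct and follows the paper's proof essentially step by step. It reduces finalizations through the halting gadget to the two paths, as the paper's claim on gadget finalizations does, and then treats the three path combinations. In the mixed case, it intersects every clean set with the fast quorum, so that every correct validator inputs the checked value and strong unanimity together with obligation force that value as the output. Your justification that the encoding check does not depend on the chunks used, via Merkle binding and collision resistance, makes explicit what the paper only asserts.
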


\begin{restatable}[Speculation]{lemma}{qvspeculation}\label{lem:qv:speculation}
In \qchorus, if a correct validator speculatively finalizes a block that differs from the block that a correct validator finalizes, then the misbehavior of at least one faulty validator can be proven.
\end{restatable}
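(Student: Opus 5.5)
We argue by contraposition: assuming no misbehavior can be proven, we show that a speculatively finalized block equals any finalized block. Let $\proc_i$ be correct and speculatively finalize at line~\ref{ln:qv:speculate}. Then every entry of its path vote is fast, so for every proposer $\proc_\pid$ it holds $2f+1$ valid, signed \textsc{Vote}s whose entries for $\proc_\pid$ all name the same value $v_\pid$. Let $\proc_k$ be a correct validator that finalizes a block $x$. By \Cref{lem:qv:agreement}, all correct validators finalize $x$, so we may pick whichever correct finalizer is convenient. Since any finalization through the halting gadget traces back to a correct validator that finalized $x$ on the fast path or on the fallback path, we may assume that $\proc_k$ finalized on one of these two paths. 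It suffices to show that, absent provable misbehavior, $x[\pid]=v_\pid$ for every proposer $\proc_\pid$ after the encoding checks. The checks coincide: both sides decode from $f+1$ chunks under the same root, and the result of the check does not depend on which chunks are used.

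The core observation concerns equivocation. A quorum of $2f+1$ signed votes naming $v_\pid$ intersects any other set of $2f+1$ votes in at least $f+1$ validators. If two \textsc{Vote}s signed by the same validator carry different entries for $\proc_\pid$, the pair of signatures is a transferable proof of equivocation. We therefore assume that every validator signed at most one entry vector, and derive that every correct validator sees at most $f$ \textsc{Vote}s, all from faulty validators, with an entry for $\proc_\pid$ different from $v_\pid$. Consequently, no correct validator forms a fast entry for a value $w\neq v_\pid$, since that would require $2f+1$ votes for $w$, which would intersect $\proc_i$'s quorum. For the same reason, no correct validator ever accepts a fast entry for $w\neq v_\pid$ whenever $w$ is $\bot$ or a root: acceptance requires $f+1$ votes for $w$, at least one of them from a correct validator, while only faulty validators can have voted for $w$. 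The counting needs care here, because faulty validators do not double-sign, so at most $f$ voters differ from $v_\pid$, and this remains below the $f+1$ threshold.

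\emph{Fast path.} If $\proc_k$ finalized on the fast path with value $v'_\pid$, then $2f+1$ path votes carry $\langle\mathsf{fast},v'_\pid\rangle$, and at least $f+1$ of them come from correct validators. Each such validator saw $2f+1$ votes for $v'_\pid$, so $v'_\pid=v_\pid$ by the intersection argument above.

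\emph{Fallback path.} Here we expect the main obstacle. We want every correct validator to adopt $v_\pid$, after which strong unanimity and obligation force the output to be $v_\pid$. Fix a correct validator $\proc_r$ and its clean set $S$ for $\proc_\pid$. Each fallback entry in $S$ from a correct sender arises when that sender saw fewer than $2f+1$ matching votes, yet it may still have seen $f+1$ votes for $v_\pid$. We would first show that every fast entry in $S$ carries $v_\pid$: by the acceptance observation, no fast entry for another value is ever accepted. If $S$ contains such a fast entry, $\proc_r$ adopts $v_\pid$ after the encoding check, which gives the same result as $\proc_i$'s check. If $S$ holds only fallback entries, we must rule out adopting a different root or $\bot$. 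A correct sender's fallback entry for a root $w$ requires $f+1$ votes for $w$, so $w=v_\pid$, or else the sender had at most $f$ votes for values other than $v_\pid$. The delicate case is a correct sender that timed out at $D+\Delta$ with fewer than $f+1$ votes for $v_\pid$ and produced $\langle\mathsf{fallback},\bot\rangle$ although $v_\pid$ is a root. This cannot be excluded by counting. We would therefore look for a signed artifact that proves misbehavior, or fall back on showing that the fast entries of correct validators who saw $\proc_i$'s quorum keep the adoption at $v_\pid$. If this fails, the lemma must rely on the premise that the finalized block differs, with proof extracted from faulty senders of fallback entries whose signed path votes contradict their signed \textsc{Vote}s. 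I would try this last route first: compare each faulty validator's signed \textsc{PathVote} against the signed votes it must have received.
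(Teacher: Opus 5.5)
There is a genuine gap, and it is the step where you stop. You say the case of a correct sender issuing $\langle\mathsf{fallback},\bot\rangle$ with fewer than $f+1$ \textsc{Vote}s for the root $v_\pid$ ``cannot be excluded by counting''. It can, and your own bound already excludes it.

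A correct validator issues a fallback entry only through the first trigger of Step~3, at time $D+\Delta$ with $|\mathit{votes}_i| \geq 2f+1$. A path vote formed through the early trigger consists of fast entries only. So if fewer than $f+1$ of its \textsc{Vote}s match $v_\pid$, at least $f+1$ of them do not. These, together with the $2f+1$ matching \textsc{Vote}s of the speculating validator, involve $3f+2 > n$ senders. Hence some validator signed two \textsc{Vote}s with different entries for $\proc_\pid$, and correct validators jointly hold both: a transferable proof. Equivalently, this contradicts your derived fact that each correct validator sees at most $f$ non-matching \textsc{Vote}s.

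The same bound shows that no root other than $v_\pid$ reaches $f+1$ positive entries at a correct validator. If $v_\pid$ is a root, a correct sender therefore decodes under $v_\pid$, and its fallback entry is the checked value $\hat v_\pid$. If $v_\pid=\bot$, no root reaches $f+1$, and its fallback entry is $\bot$. So every correct fallback entry carries $\hat v_\pid$. A clean set has at least $f+1$ entries from correct senders and at most $f$ from faulty ones, so a fallback-only clean set also yields $\hat v_\pid$. This is exactly what the paper's third scenario supplies: it exhibits a correct $\proc_w$ holding $f+1$ non-matching \textsc{Vote}s, either because few of its $2f+1$ \textsc{Vote}s are positive for the root, or because another root reached $f+1$.

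The routes you propose instead would not work. No one can prove which \textsc{Vote}s a validator held when it formed its \textsc{PathVote}. Correct validators also legitimately hold different sets of $2f+1$ \textsc{Vote}s at $D+\Delta$. So a signed \textsc{PathVote} is never evidence against its sender. Relying on fast entries of correct validators fails too, since a clean set may consist of fallback entries alone.

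Two smaller corrections. First, your contrapositive hypothesis should be that no validator has two different signed \textsc{Vote}s among those held by correct validators, not that no validator ever signed two vectors. Second, the at most $f$ non-matching \textsc{Vote}s need not come from faulty validators. A correct validator outside the speculating quorum may legitimately vote $\bot$, for instance after missing the chunk. Only the count matters.
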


\begin{restatable}[Proposal inclusion]{lemma}{qvcensorship}\label{lem:qv:censorship}
\qchorus satisfies proposal inclusion: if the slot is timely, the finalized block contains the proposals of all correct proposers.
\end{restatable}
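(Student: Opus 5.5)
We prove proposal inclusion by fixing a correct proposer $\proc_\pid$ with root $\mroot_\pid$. Let $\mathcal{C}$ be the set of correct validators. We show that every path by which a correct validator finalizes yields $\mroot_\pid$ for $\proc_\pid$. The argument has three parts.

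First, timeliness. The slot starts at $D-\Delta \ge \mathrm{GST}$, and every correct validator participates from $D-\Delta$. The proposer's point-to-point chunk sends are forever-alive, so every correct validator receives its valid chunk by $D$. Hence every correct validator's \textsc{Vote} carries the positive entry $\langle \pid, \mroot_\pid \rangle$ with a valid chunk under $\mroot_\pid$. A correct proposer signs only one root, and correct votes attach only chunks whose Merkle proofs verify under the proposer's signed header. Therefore the only positive entries for $\proc_\pid$ that can appear in the votes of correct validators name $\mroot_\pid$. Byzantine validators may still send negative entries or positive entries for other roots, but with fake chunks that fail verification. Since $\mroot_\pid$ is a valid encoding, every decode-and-re-encode check under $\mroot_\pid$ succeeds, whichever $f+1$ valid symbols are used.

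Second, the fast path. There are $2f+1$ correct positive votes, so at most $f$ negative votes exist. No validator can therefore hold $2f+1$ negative entries, and two roots cannot each get $2f+1$ entries. A root $\mroot' \ne \mroot_\pid$ with $2f+1$ entries would need $f+1$ correct votes for $\mroot'$, which is impossible. So every correct fast entry for $\proc_\pid$ is $\langle\mathsf{fast},\mroot_\pid\rangle$. Fast finalization requires $2f+1$ matching fast path votes, hence at least $f+1$ correct ones. So it finalizes $\mroot_\pid$, and the check at line~\ref{ln:qv:fastcheck} keeps it. Fallback entries of correct validators also carry $\mroot_\pid$. By the first part, each correct validator holds at least $f+1$ correct positive votes for $\mroot_\pid$ once it has $2f+1$ votes. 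No other root reaches $f+1$ without a correct vote for it. So with arbitrary tie-breaking, the only root that can be chosen is $\mroot_\pid$, and it re-encodes correctly.

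Third, the fallback path, which I expect to be the main obstacle. I would show that every correct validator adopts $\mroot_\pid$ for $\proc_\pid$. Then strong unanimity excludes $\ALLSAFE$, and obligation forces the output to be $\mroot_\pid$. Take any correct validator's clean set $S$, with $2f+1$ accepted entries. Of these, at least $f+1$ come from correct validators, and all of them are $\langle\cdot,\mroot_\pid\rangle$ by the second part.

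Byzantine entries could be a fast $\bot$ or a fast $\mroot'$. They are accepted only after $f+1$ matching deadline entries. A fast $\bot$ would need $f+1$ negative votes, which requires a correct negative vote and is impossible. A fast $\mroot'$ would need $f+1$ valid positive votes for $\mroot'$, which requires a correct one and is also impossible.

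So if $S$ holds any fast entry, it is fast $\mroot_\pid$, and the re-encode check passes. Otherwise $S$ consists of fallback entries, of which the at least $f+1$ correct ones carry $\mroot_\pid$. No other root can collect $f+1$, since only $f$ Byzantine entries are available. Hence every correct validator adopts $\mroot_\pid$.

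Finally, the halting gadget only finalizes values that some correct validator finalized on one of the two paths, so it also yields $\mroot_\pid$. The finalized \metablock thus selects $\mroot_\pid$, whose proposal is validly encoded. So the proposal is included, with data availability as argued in \Cref{sec:qv:availability}.
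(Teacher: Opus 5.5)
Your proof is correct and follows the paper's argument essentially step for step: every correct \textsc{Vote} is positive for $\mroot_\pid$, so correct fast and fallback entries name $\mroot_\pid$; no fast entry for another value is ever accepted; every clean set therefore adopts $\mroot_\pid$; and strong unanimity plus obligation, with \Cref{clm:qv:gadget} for the halting gadget, finish the proof. One slip does not affect the proof: Byzantine positive entries for another root $\mroot'$ do not ``fail verification''. A \textsc{Vote} carries no header signed by the proposer, so the check of line~\ref{ln:qv:allornothing} accepts any valid Merkle proof under $\mroot'$, which a Byzantine validator can produce from a tree of its own; since your counting arguments only ever bound such votes by $f$, they go through unchanged.
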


\begin{restatable}[Termination]{lemma}{qvtermination}\label{lem:qv:termination}
\qchorus satisfies termination: if every correct validator starts participating in the slot, every correct validator eventually finalizes a block for the slot.
\end{restatable}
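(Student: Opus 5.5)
The argument proceeds by case analysis on how the slot reaches its first finalization, with the halting gadget handling everyone else. Assume every correct validator eventually starts participating. The first step shows that if some correct validator halts, every correct validator finalizes. Halting at line~\ref{ln:qv:halt} requires $2f+1$ matching \textsc{Finalized} messages, so at least $f+1$ correct validators broadcast $\langle\textsc{Finalized},x\rangle$. These broadcasts are forever-alive, since \Cref{alg:qv1} never abandons them and crashes do not abandon them either. Every correct validator therefore eventually receives $f+1$ of them once it participates, and by the echo rule at line~\ref{ln:qv:echo} it finalizes. Agreement (\Cref{lem:qv:agreement}) is not even needed for this step. Consequently, we may assume from here on that no correct validator ever halts. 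Then all correct validators keep participating in every step and in \coda, and the termination property of \coda applies.

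Under this assumption, we show that every correct validator eventually inputs to the fallback consensus. This requires passing Steps~2 to~5, and Steps~2 to~4 are straightforward:
\begin{compactitem}
\item Step~2 fires at time $D$ at the latest, or immediately if $D$ has already passed when the validator starts.
\item All $2f+1$ correct votes are forever-alive and valid, so eventually $|\mathit{votes}_i|\ge 2f+1$ after $D+\Delta$, and Step~3 fires.
\item All correct path votes are eventually received.
\end{compactitem}

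The main obstacle is Step~5: showing that $\mathsf{clean}(\pid)\neq\mathsf{none}$ for every proposer. We must exhibit $2f+1$ accepted entries for $\proc_\pid$ whose fast entries agree. Take the $2f+1$ correct path votes. Fallback entries among them are accepted on receipt. A correct fast entry for $\mroot$ means its sender saw $2f+1$ votes positive for $\mroot$, of which at least $f+1$ are correct and forever-alive. Every correct validator eventually holds those votes with their chunks, so the entry is eventually accepted; the same reasoning applies to $\bot$.

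It remains to show that correct fast entries agree. Two correct fast entries for different values would each need $2f+1$ supporting votes. Since correct validators send only one vote, the two quorums intersect in at least $f+1$ validators, at least one of them correct, and that validator cannot have voted for both values. Hence the set of correct entries is clean. Note that $\proc_i$ may read a different first clean set, but existence is all that Step~5 needs. After $D+2\Delta$, Step~5 fires and $\proc_i$ inputs; for this the decoding step must always return, which it does, yielding either a root or $\bot$.

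Finally, every correct validator inputs and none stops participating, so by the termination property of \coda (\Cref{sec:fallback-implementation}) every correct validator outputs. It then finalizes at line~\ref{ln:qv:fallbackdecide}, unless it has already finalized on the fast path or through the gadget. In either case it finalizes, which proves the statement. One subtle point deserves checking: that handler atomicity and persistent memory ensure that crashes before GST do not lose the state on which these "eventually" arguments rely. Received votes and path votes are persisted, and retransmission redelivers anything that was missed.
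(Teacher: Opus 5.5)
Your proof is correct and takes essentially the same route as the paper's. You handle the case in which some correct validator halts through the halting gadget; the paper cites \Cref{clm:qv:halting} here, and you re-derive it inline. Otherwise you show that every correct validator's path-vote entries are eventually accepted. By quorum intersection on \textsc{Vote}s these entries form a clean set for every proposer, so every correct validator inputs to the fallback consensus, and the termination of the fallback consensus finishes the argument.
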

Each correct validator then obtains the block's proposals through the retrieval mechanism of \Cref{sec:qv:availability}.

A correct validator keeps taking part in the slot after finalizing, until it halts (see the paragraph on halting in \Cref{sec:qv:description}), and the protocol that schedules the slots, the \conductor in \quorus, would complete the slot at that point (\Cref{sec:conductor}).

\begin{restatable}[Halting]{lemma}{qveventualhalting}\label{lem:qv:eventualhalting}
If every correct validator starts participating in the slot, every correct validator eventually halts.
Moreover, if every correct validator has finalized by time $T$, every correct validator halts by time $\max(T, \mathrm{GST}) + \Delta$.
\end{restatable}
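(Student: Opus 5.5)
We prove the second claim first and derive the first from it and \Cref{lem:qv:termination}. Every correct validator that finalizes broadcasts $\langle \textsc{Finalized}, x \rangle$, whether it finalizes on the fast path (line~\ref{ln:qv:fastdecide}), on the fallback path (line~\ref{ln:qv:fallbackdecide}), or through the gadget (lines~\ref{ln:qv:echo} and~\ref{ln:qv:halt}). By \Cref{lem:qv:agreement}, all correct validators finalize the same \metablock $x$. So the \textsc{Finalized} messages of correct validators all name $x$, and there are at least $2f+1$ of them once every correct validator has finalized.

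For the timed bound, suppose every correct validator has finalized by time $T$. Each such broadcast is initiated by $T$ and, by the conventions of \Cref{sec:qv:description}, is never abandoned, including after the sender halts. Crashes also do not abandon broadcasts (\Cref{sec:model}). Hence each of these messages is alive throughout $[T, \max(T,\mathrm{GST})+\Delta]$. By the partial-synchrony assumption, it is received by every correct recipient during $[\max(T,\mathrm{GST}), \max(T,\mathrm{GST})+\Delta]$. This includes messages that were received and then lost in a crash before $T$, since the model explicitly covers those, and all correct validators are online after GST.

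Consider a correct validator $\proc_i$ that has not halted by $\max(T,\mathrm{GST})$. It is still participating, and it has already participated by $T$ because it finalized. So it stores each \textsc{Finalized} message it receives, at most one per sender. By $\max(T,\mathrm{GST})+\Delta$ it therefore holds $2f+1$ matching messages for $x$, and the rule of line~\ref{ln:qv:halt} fires and makes it halt. A validator that had halted earlier trivially satisfies the bound.

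For eventual halting, assume every correct validator starts participating. By \Cref{lem:qv:termination}, every correct validator eventually finalizes. Since there are finitely many validators, there is a time $T$ by which all correct validators have finalized, and the timed bound gives halting by $\max(T,\mathrm{GST})+\Delta$.

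The main obstacle is the subtle interaction of halting with termination. \Cref{lem:qv:termination} might itself rely on the fallback consensus, whose termination property is only guaranteed when no correct validator stops taking part. If the proof of \Cref{lem:qv:termination} already handles this case split (some correct validator halts, so the gadget pushes everyone to finalize via $f+1$ echoes, versus none halts, so \coda terminates), we simply invoke it. Otherwise I would do the split here:
\begin{itemize}
\item If some correct validator halts, it saw $2f+1$ \textsc{Finalized} messages for $x$, at least $f+1$ of them from correct validators. Those forever-alive broadcasts reach every correct validator, which then finalizes via line~\ref{ln:qv:echo} and rebroadcasts. This yields $2f+1$ correct \textsc{Finalized} messages, and everyone halts.
\item If no correct validator ever halts, \coda's termination applies, so all correct validators finalize, and the timed argument produces a halt anyway.
\end{itemize}
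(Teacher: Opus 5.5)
Your proof is correct and follows the paper's own route. The timed bound comes from the forever-alive \textsc{Finalized} broadcasts of the at least $2f+1$ correct validators, all naming the same \metablock by agreement, and eventual halting then follows from \Cref{lem:qv:termination}. The obstacle you anticipate is already handled there: the paper's termination proof splits on whether some correct validator halts (via \Cref{clm:qv:halting}) or none does (so \coda terminates), so invoking the lemma directly suffices.
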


\begin{restatable}[Message complexity \& signature count]{lemma}{qvcomplexity}\label{lem:qv:complexity}
In \qchorus, every  message that a correct validator sends in dissemination (\textsc{Chunk}) or in voting (\textsc{Vote} and \textsc{PathVote}) carries exactly one signature.
Moreover, a correct validator  initiates at most one broadcast of a \textsc{Vote}, at most one of a \textsc{PathVote}, and at most one of a \textsc{Finalized} message, and, if it is a proposer, initiates at most one \textsc{Chunk} send to each validator, retransmissions aside.
\end{restatable}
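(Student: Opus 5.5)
The statement is purely syntactic, so I would prove it by inspecting the pseudocode of \Cref{alg:qv1} handler by handler, without any appeal to the network model beyond the convention that retransmissions are not new initiations. The plan has two parts.

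The first part concerns signature counts, and I would treat the three message types separately.
\begin{compactitem}
   \item \emph{\textsc{Chunk}.} The only place a correct proposer builds a \textsc{Chunk} is line~\ref{ln:qv:disseminate}. The tuple $\langle \textsc{Chunk}, \mroot_i, \cdata_r, \pi_r, \sigma_i \rangle$ carries the single signature $\sigma_i$ on $\langle i, \mroot_i \rangle$. The Merkle proof $\pi_r$ and the symbol are hashes and data, not signatures.
   \item \emph{\textsc{Vote}.} The message of line~\ref{ln:qv:vote} carries $\sigma_i = \mathsf{Sign}_i(E)$. The attachment $C$ is explicitly built from the triples $\langle \mroot, \cdata, \pi \rangle$, with the signed headers stripped, so $C$ contains no signature.
   \item \emph{\textsc{PathVote}.} The message of line~\ref{ln:qv:pathvote} carries only $F$ and $\mathsf{Sign}_i(F)$. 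The entries of $F$ are tags together with roots or $\bot$, so they contain no signatures.
\end{compactitem}

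The second part bounds the number of initiations, again per message type.
\begin{compactitem}
   \item \emph{\textsc{Vote} and \textsc{PathVote}.} Each is broadcast in exactly one handler, Step~2 and Step~3 respectively. Both handlers are annotated ``runs once'', and their completion is recorded in persistent memory (``completion of once-only steps''). Hence a crash and recovery cannot re-trigger them. The model also stipulates that recovery resumes existing broadcasts rather than re-initiating them.
   \item \emph{\textsc{Finalized}.} This message is broadcast in four handlers: lines~\ref{ln:qv:fastdecide}, \ref{ln:qv:fallbackdecide}, \ref{ln:qv:echo} and~\ref{ln:qv:halt}. Each of them broadcasts ``unless already finalized''. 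The first broadcast happens together with setting $\mathit{finalized}_i$, which is persistent, and every later broadcast is suppressed. So there is at most one initiation, and its content is the single $x$ finalized. Speculative finalization broadcasts nothing.
   \item \emph{\textsc{Chunk}.} The sends occur only in the ``upon starting'' handler, which executes once per slot; the once-only completion record again survives crashes. That handler sends one \textsc{Chunk} per validator $\proc_r$.
\end{compactitem}
Finally, I would note that messages of \coda are not covered by the statement. Its claim that \coda uses no signatures belongs to \Cref{sec:fallback-implementation}.

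The only delicate step is the crash interaction. A handler could fire again after recovery if its once-only status or $\mathit{finalized}_i$ were lost. I would settle this by citing the persistent-memory list of \Cref{alg:qv1} and the crash model of \Cref{sec:model}, under which crashes strike only between handlers and keep persistent memory. The atomicity of each handler, broadcast together with its status update, then rules out duplicate initiation.
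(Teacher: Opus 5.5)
Your proposal is correct and follows the paper's proof. Both arguments inspect \Cref{alg:qv1} message type by message type for the signature count. Both then bound initiations by the once-only handlers, whose completion is kept in persistent memory, together with the ``unless already finalized'' guard, and neither counts retransmissions. Your treatment of crash recovery and of the four \textsc{Finalized} broadcast sites is more explicit than the paper's, which simply asserts that no rule re-initiates these sends, also after recovery.
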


\para{Accountability.}

We now turn the signatures of the voting path into evidence.
When the fast path fails in a timely slot, someone is to blame, and we show that, as long as the Byzantine validators are few enough, the correct validators collectively hold a transferable proof against a specific proposer.
The threshold is half the fault tolerance: with up to $f$ Byzantine validators, $f$ of them can each sign a negative entry for a correct proposer, so no count of negative entries below $f+1$ proves anything, and, as it turns out, the protocol cannot guarantee $f+1$ of them per proposer.
With at most $\lfloor f/2 \rfloor$ Byzantine validators, however, $\lfloor f/2 \rfloor + 1$ signed negative entries are enough to convict, and the protocol does guarantee them.

\begin{restatable}[Accountability under $\lfloor f/2 \rfloor$ faults]{lemma}{qvacchalf}\label{lem:qv:acc-half}
Suppose that the slot is timely and that at most $\lfloor f/2 \rfloor$ validators are Byzantine.
If some correct validator has not finalized by time $D + 2\Delta$, then there exists a proposer $\proc_\pid$ such that the correct validators collectively hold either
(1) two \textsc{Chunk} messages signed by $\proc_\pid$ that carry different roots, or
(2) a \textsc{Chunk} message signed by $\proc_\pid$ together with at least $\lfloor f/2 \rfloor + 1$ \textsc{Vote} messages, from different validators, whose entry for $\proc_\pid$ is negative.
\end{restatable}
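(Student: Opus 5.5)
We argue by contraposition. Assume the slot is timely, at most $b \le \lfloor f/2 \rfloor$ validators are Byzantine, and for no proposer do the correct validators hold evidence (1) or (2). We then show that every correct validator finalizes on the fast path by $D + 2\Delta$.

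The first step uses timeliness. Every correct validator participates from $D-\Delta \ge \mathrm{GST}$, so every correct proposer's \textsc{Chunk} messages reach all correct validators by $D$. Hence every correct validator votes by $D$, and its \textsc{Vote} reaches every correct validator by $D+\Delta$. There are $n - b \ge 2f+1+\lceil f/2\rceil$ correct validators, so each correct validator holds all correct votes by $D+\Delta$ and casts its path vote by then. These path votes reach every correct validator by $D+2\Delta$.

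The core step is to show that, for each proposer $\proc_\pid$, all correct validators cast the same deadline entry for $\proc_\pid$. The choice of entry splits into two cases.
\begin{itemize}
\item \emph{Some correct validator holds a \textsc{Chunk} signed by $\proc_\pid$.} Since (1) fails, all signed \textsc{Chunk}s of $\proc_\pid$ held by correct validators carry one root $\mroot$. Every correct positive entry for $\proc_\pid$ is therefore $\langle \pid, \mroot\rangle$. Since (2) fails, fewer than $\lfloor f/2\rfloor + 1$ \textsc{Vote}s received by correct validators are negative for $\proc_\pid$. Because every correct \textsc{Vote} reaches every correct validator, the number of correct negative voters is at most $\lfloor f/2 \rfloor$.
\item \emph{No correct validator holds a signed \textsc{Chunk} of $\proc_\pid$.} Then every correct entry for $\proc_\pid$ is negative.
\end{itemize}

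Here lies the main obstacle. In the first case, up to $\lfloor f/2\rfloor$ correct validators may still vote $\bot$, for example when a Byzantine proposer sends chunks to only some validators. Negating (2) is then not enough to obtain full unanimity. I would instead target a count: at least $2f+1$ matching entries for $\proc_\pid$ are held by every correct validator by $D+\Delta$.

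In the first case, the correct positive voters number at least $n - b - \lfloor f/2\rfloor \ge 3f+1 - 2\lfloor f/2\rfloor \ge 2f+1$. This follows from $b \le \lfloor f/2\rfloor$ and the failure of (2), and the positive entries all carry $\mroot$. In the second case, the correct negative voters number $n-b \ge 2f+1$. So in both cases each correct validator sees $2f+1$ matching entries among the correct votes alone, whatever the Byzantine validators send. Consequently, every correct path vote holds the same fast entry $\langle \mathsf{fast}, v_\pid\rangle$ for every $\proc_\pid$. The check of lines~\ref{ln:qv:fastroot}--\ref{ln:qv:fastbot} fires on that value, since two distinct values cannot each gather $2f+1$ votes.

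The last step concludes. Each correct validator receives $n-b \ge 2f+1$ correct path votes by $D+2\Delta$, all with matching fast entries for every proposer. For every root $v_\pid$, it also holds at least $2f+1 \ge f+1$ correct positive \textsc{Vote}s for $v_\pid$ by $D+\Delta$. The fast-path rule (line~\ref{ln:qv:fastdecide}) therefore fires, and the correct validator finalizes by $D+2\Delta$, unless it finalized earlier. This contradicts the hypothesis, which proves the lemma.

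Two further checks remain. First, the contradiction uses only correct validators' receptions, so "collectively hold" in (2) covers the negative votes. Second, the re-encoding check at line~\ref{ln:qv:fastcheck} may replace a root by $\bot$, but this does not delay finalization.
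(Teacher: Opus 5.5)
Your counting is sound, and it is the paper's argument run in contrapositive form. The paper starts from a correct validator that has not finalized. It deduces that some correct validator $\proc_w$ issued a fallback entry for some $\proc_\pid$, and reads the evidence off the correct \textsc{Vote}s that $\proc_w$ held at $D+\Delta$. At most $2f$ of them are negative, so some are positive and a signed \textsc{Chunk} is held. If the positive ones share one root, at most $2f$ are positive, so at least $\lceil f/2 \rceil + 1$ are negative. You instead show that, without evidence, every proposer gets at least $2f+1$ matching correct entries, so the fast path completes everywhere by $D+2\Delta$. The inequalities are the same.

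What is missing is the halting case, which the paper handles explicitly before its counting. You assert that every correct validator votes by $D$ and casts its path vote by $D+\Delta$, and then that each correct validator receives $n-b \ge 2f+1$ correct path votes. But a correct validator that receives $2f+1$ \textsc{Finalized} messages halts and executes no further handlers of the slot. With early voting and early path votes, this can happen before $D$, or before it has path-voted. If enough correct validators halt this way, your counts of correct votes and path votes can fall below $2f+1$. The fast-path rule then need not fire at the remaining validators, so your argument does not show that they finalize by $D+2\Delta$.

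The fix is short. If some correct validator halts by $D+\Delta$, then \Cref{clm:qv:halting} applies: every correct validator participates from $D-\Delta$ and $\mathrm{GST} \le D-\Delta$, so every correct validator finalizes by $D+2\Delta$. Otherwise, every correct validator votes by $D$ and path-votes by $D+\Delta$, and its broadcasts survive a later halt. A validator that halts later has already finalized. One that has not finalized by $D+2\Delta$ has not halted and still processes the incoming path votes, so your argument then applies unchanged.
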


Lastly, we exploit a feature of the fallback consensus presented in \Cref{sec:fallback-implementation}.
Its first view is a leaderless view, in which every validator votes for its own input. When the inputs of all correct validators coincide, this view decides within two message delays, whatever the faulty validators do.
A failure to decide quickly therefore reveals that correct validators adopted different values, and the evidence behind that disagreement can be traced back either to a proposer or to a validator that equivocated.

\begin{restatable}[Accountability under $f$ faults]{lemma}{qvaccf}\label{lem:qv:acc-f}
Suppose that \qchorus runs with \coda as its fallback consensus and that the slot is timely.
If some correct validator has not finalized by time $D + 4\Delta$, then there exists a proposer $\proc_\pid$ such that either some correct validator holds $2f + 1$ \textsc{PathVote} messages, from different validators, whose entry for $\proc_\pid$ is a fallback entry, or two correct validators together hold two \textsc{Vote} messages signed by the same validator whose entries for $\proc_\pid$ differ.
\end{restatable}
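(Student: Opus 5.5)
We prove the contrapositive. Assume that, for every proposer $\proc_\pid$, no correct validator holds $2f+1$ \textsc{PathVote}s whose entry for $\proc_\pid$ is a fallback entry. Assume also that no validator's two signed \textsc{Vote}s are split between correct validators with different entries for $\proc_\pid$; in effect, every validator, faulty ones included, is seen by all correct validators with the same entry vector. We then show that every correct validator finalizes by $D+4\Delta$.

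\emph{Step 1: correct validators adopt the same values.} The slot is timely, so every correct validator votes by $D$ and path-votes by $D+\Delta$, and every correct message arrives within $\Delta$ of its sending. Fix $\proc_\pid$. Consider the clean set $S$ that a correct $\proc_i$ reads at $D+2\Delta$.

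Suppose first that $S$ contains a fast entry. We want every correct validator's clean set for $\proc_\pid$ to contain a fast entry with the same value, so that all of them run the same adoption branch and obtain the same result. Here we use the absence of equivocation. Every \textsc{Vote} entry for $\proc_\pid$ is the same at all correct receivers, so two correct validators can only disagree on which votes they have received by a given time, not on their content. A fast entry for $v$ is accepted only after $f+1$ corroborating votes, so at least one correct vote carries $v$. With no equivocation, two conflicting fast values could each be backed by $2f+1$ votes only if those quorums were drawn from more than $n$ vote vectors, which is impossible. Hence all accepted fast entries for $\proc_\pid$ carry one value $v^\ast$, and every clean set is consistent with $v^\ast$. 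The decoding check at line~\ref{ln:qv:adoptcheck} is deterministic in $v^\ast$, since any $f+1$ valid symbols give the same verdict. It therefore returns the same value at every correct validator.

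Suppose instead that some correct validator's clean set consists of fallback entries only. This is the step I expect to be the main obstacle: a correct validator that saw no fast entry must be made to agree with one that did. By assumption, no correct validator holds $2f+1$ fallback entries for $\proc_\pid$. I would try to exploit the fact that a clean set of $2f+1$ entries with no fast entry is, verbatim, $2f+1$ fallback \textsc{PathVote}s held by that validator. The hypothesis directly contradicts this, so every clean set contains a fast entry and the previous case applies.

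\emph{Step 2: the first view of \coda decides fast.} Every correct validator inputs the same \metablock $a$ by $D+2\Delta$. We invoke the property stated before the lemma: the leaderless first view decides within two message delays when all correct inputs coincide. So every correct validator outputs by $D+4\Delta$, and finalizes at line~\ref{ln:qv:fallbackdecide} by that time, if it has not already finalized on the fast path. This contradicts the assumption that some correct validator has not finalized by $D+4\Delta$, which proves the lemma.

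One further check is needed: we must ensure that the path votes arrive in time, so that a clean set exists at $D+2\Delta$. This follows from timeliness. All $2f+1$ correct path votes arrive by $D+2\Delta$, and their fast entries are accepted because the corroborating correct votes arrived by $D+\Delta$.
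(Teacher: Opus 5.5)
Your contrapositive skeleton is the paper's argument read backwards. Identical inputs make the leaderless view decide by $D+4\Delta$, and a fallback-only clean set is itself $2f+1$ fallback \textsc{PathVote}s. However, the step that carries the lemma is not justified.

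\textbf{The gap in Step~1.} You conclude that all accepted fast entries for $\proc_\pid$ carry one value $v^\ast$ by intersecting two quorums of $2f+1$ \textsc{Vote}s. Only a fast entry issued by a \emph{correct} validator rests on $2f+1$ \textsc{Vote}s. A fast entry from a faulty validator is a bare claim, accepted once the receiver holds just $f+1$ matching \textsc{Vote}s. Two disjoint groups of $f+1$ consistent \textsc{Vote}s fit among $n = 3f+1$ validators.

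Concretely, let $f+1$ correct validators vote positive for a root $\mroot$, and let the remaining $f$ correct and $f$ faulty ones vote negative, with nobody equivocating. Suppose one faulty validator sends a fast entry for $\mroot$ and another a fast entry for $\bot$. Every correct validator accepts both, so two correct validators can read clean sets with different fast values and adopt different values. Your no-equivocation hypothesis alone cannot exclude this, and neither can the remark that some correct \textsc{Vote} carries $v$.

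\textbf{The missing anchor.} It must come from your other hypothesis together with timeliness, which you use only to rule out fallback-only clean sets. Suppose no correct validator issued a fast entry for $\proc_\pid$. Then all of the at least $2f+1$ correct \textsc{PathVote}s carry fallback entries for $\proc_\pid$, and every correct validator holds them by $D+2\Delta$, contradicting that hypothesis. This is exactly what happens in the scenario above.

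Hence some correct $\proc_c$ issued a fast entry for some $v$ and holds $2f+1$ \textsc{Vote}s matching $v$. If a correct $\proc_i$ accepted a fast entry for $u \neq v$, its $f+1$ corroborating \textsc{Vote}s share a sender with those $2f+1$, since $(f+1)+(2f+1) > n$. So $\proc_i$ and $\proc_c$ together hold two differing \textsc{Vote}s of one validator, contradicting your no-equivocation hypothesis. With this anchor, Step~1 goes through and coincides with the last case of the paper's proof.

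\textbf{A separate omission in Step~2: halting.} A correct validator may halt before $D+2\Delta$, after finalizing on the fast path, and then never inputs to \coda. View~$1$ then need not gather $2f+1$ \VOTEONE messages. You must first dispose of the case that some correct validator halts by $D+3\Delta$, using \Cref{clm:qv:halting}, as the paper does. You should also check, rather than cite the informal remark, that the view-$1$ broadcasts stay alive and the validators remain in view~$1$ through $D+4\Delta$; the $5\Delta$ gate of \coda ensures this.
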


\section{\coda{}: Our Fallback Consensus}
\label{sec:fallback-implementation}

This section presents \coda, an implementation of the Byzantine agreement with per-proposer strong unanimity that \qchorus{} requires.
Unlike a generic Byzantine agreement, \coda is built for multiple concurrent proposers: validators input and decide \metablocks, with one entry per proposer, and strong unanimity and obligation hold for each proposer separately.
\coda builds on Forget-IT~\cite{forgetit}, a leader-based Byzantine agreement protocol that needs no cryptography beyond authenticated channels and keeps its persistent memory constant.
\Cref{sec:coda:differences} lists how it differs from Forget-IT.
\coda satisfies the four properties that Step~6 of \Cref{sec:protocol} requires, which \Cref{sec:coda:model} restates in the terms of \coda.
Under favorable conditions, \coda{} decides within two message delays, in its first view, which has no leader.
Moreover, \coda{} bounds its persistent memory: for a fixed number of proposers, it is constant in the number of validators when no proposer equivocated before \coda started, and at most linear in it otherwise.
It does not grow with the number of views, so a long asynchronous period costs no persistent memory.

\para{The leaderless view.}
The first view of \coda has no leader.
It consists of two voting rounds, with an implicit proposal step: each validator votes right away for its own input.
When all correct validators start \coda at the same time, after GST, with the same \metablock, this view decides within two message delays.
For the guarantees of \coda, the leaderless view is an optimization.
Without it, \coda would begin with a view change that starts at once and turns the validators' inputs into the values that the first leader proposes, and would remain correct, but its best-case latency would grow from two message delays to five: two for the view change and three for the view of the first leader, if that leader is correct.

The leaderless view has a practical motivation.
In a deployment, proposers do not send their chunks at the slot's starting time, but as late as they can while still reaching all correct validators by the deadline, so that their proposals include the transactions that arrived shortly before it.
Occasionally, a proposer fails to do so, and its chunks reach only between $f+1$ and $2f$ validators in time.
The votes for that proposer then hold neither $2f+1$ positive nor $2f+1$ negative entries, and the fast path fails.
Suppose that the slot is timely and that all validators follow the protocol, except that some proposers send their chunks after the starting time, so that these reach only between $f+1$ and $2f$ validators in time, which makes these proposers faulty in the model of \Cref{sec:model}.
Suppose also that their other messages still arrive within $\Delta$, which the model guarantees only for correct validators.
Then every correct validator receives the same votes and path votes, adopts the same \metablock at time $D + 2\Delta$, with the late proposals included through fallback entries, and finalizes in the leaderless view by $D + 4\Delta$, instead of by $D + 7\Delta$ without it, and then only if the first leader is correct.
Omission failures, asynchrony, and Byzantine voters can give correct validators different inputs or delay the votes of view~$1$ past its timeout, in which case the leaderless view may end without a decision and a later view decides.

The leaderless view also gives \Cref{lem:qv:acc-f} its time bound.
Faulty validators can make the fast path fail without leaving evidence of their faults while all correct validators still adopt the same \metablock.
Without the leaderless view, a faulty first leader could then delay the decision further, again without evidence.
With it, if all correct validators adopt the same \metablock, they finalize by $D + 4\Delta$ whatever the faulty validators do.
If some correct validator has not finalized by then, the correct validators thus adopted different values for some proposer, and such a difference always leaves evidence, as the proof of the lemma shows.

\subsection{Protocol Description}\label{sec:coda:description}

\para{Model.}
\coda runs in the model of \Cref{sec:model}, but it does not need the synchronized clocks of that model: each validator uses only its local clock, and only to measure the time elapsed since it entered a view.
Its proofs allow arbitrary drift of the local clocks before GST and require only drift-free clocks after GST.
\coda uses no cryptography beyond the authentication of channels, so its guarantees hold even against a computationally unbounded adversary, unlike those of \qchorus, which relies on a collision-resistant hash function.
The conventions of \Cref{sec:qv:description} on firing rules, time guards, and counting messages apply to \Cref{alg:fallback} as well. From the time it starts \coda, a validator keeps the \coda messages it receives for its previous view and later views. Upon entering a view, it discards the kept messages of views below its new previous view, and a crash erases them all. Kept messages count toward every rule whose condition they complete later, also one that a view entry completes.
Before it starts \coda, a validator executes none of its rules.
A correct validator may also stop taking part in \coda, that is, execute none of its handlers from some time on while its broadcasts remain alive, as validators of \qchorus do when they halt.

\para{Protocol overview.}
Every validator starts \coda with a \metablock, its \emph{input} ($\propi$ for validator $p_i$): in \qchorus, the \metablock it adopted in Step~5 of \Cref{sec:protocol}, which holds a root or $\bot$ for each proposer.
\coda decides a \metablock whose entry for each proposer $p_\pid$ is either a value that a correct validator input for $p_\pid$ or the special value $\ALLSAFE$, which, like $\bot$, excludes $p_\pid$.
\coda decides $\ALLSAFE$ for $p_\pid$ only if the inputs of correct validators differ for $p_\pid$.
Inside \coda, an entry of a \metablock may also be $\ALLSAFE$.
The protocol unfolds in views $1, 2, 3, \ldots$.
View~$1$ has no leader, and every later view $V$ has a leader, $\leader(V)$, which rotates round-robin among the validators and proposes a \metablock at the beginning of its view.
In each view, validators vote in two rounds.
They first send $\VOTEONE$ messages for a \metablock, in view~$1$ for their own input and later for the leader's proposal.
A validator that receives $2f+1$ $\VOTEONE$ messages for the same \metablock $x$ sends a $\VOTETWO$ message for $x$, unless it has sent one in the view already.
A validator that receives $2f+1$ $\VOTETWO$ messages for $x$ decides $x$.
Every view, also one that decided, ends with a \emph{view change}, in which validators exchange $\ECHO$ messages.
These announce, for each proposer, values that are \emph{safe} to carry into the next view, that is, consistent with every decision made in this view or an earlier one.
A validator enters the next view with a \metablock of such values, its \emph{guard} for that view.
\coda has no rule for stopping: validators keep running it after they decide, since the others may still need their messages to decide.
In \qchorus, a validator stops running \coda when it halts through the halting gadget of \Cref{sec:protocol}, which it does only after finalizing.

\para{State.}
A validator keeps two \metablocks for each of its current and previous views.
Its \emph{guard} for view $V$ is the \metablock with which it entered $V$: its input for view~$1$, and safe values from the view change for later views.
A correct leader of $V$ proposes its guard.
The guard reflects the earlier views: if a \metablock was decided in a view smaller than $V$, it is the guard for $V$ of every correct validator that enters $V$.
A validator's \emph{lock} for view $V$ is the \metablock for which it received $2f+1$ $\VOTEONE$ messages in $V$ before sending its $\VOTETWO$ message, or $\NONE$ if there is none.
The lock reflects view $V$ itself: a view can decide only a \metablock that at least $f+1$ correct validators locked in it.
The $\VOTETWO$ message of a validator in view $V$ carries its lock and its guard: the view change needs both to determine which values are safe.

In persistent memory, a validator keeps its current view, the time it entered that view, and whether it has decided.
For its current and previous views, it also keeps its guard, its lock, and the votes and echoes it has broadcast.
The \coda messages that a validator receives, echoes included, are not kept in persistent memory, so a crash erases them.
After a crash, retransmission delivers again the messages whose broadcasts are still alive.

\para{Corroboration.}
Without certificates, a validator cannot show others that it holds a quorum.
\coda relies instead, in two places, on corroboration by $f+1$ messages, as \qchorus does for fast entries.
A validator votes for a leader's proposal only once the proposal is \emph{justified}, that is, for each of its entries, $f+1$ echoes of the previous view announce that entry as safe.
The view change counts a $\VOTETWO$ message that claims a lock only once it is \emph{accepted}, that is, $f+1$ $\VOTEONE$ messages for the lock have arrived.
Among $f+1$ messages from different validators, at least one comes from a correct validator.
Every entry of a justified proposal thus has an echo from a correct validator, possibly a different one for each entry, and every accepted lock has a $\VOTEONE$ message for it from a correct validator.

\para{The steps of a view.}
\Cref{alg:fallback} gives the pseudocode, which we describe in seven steps, shown in \Cref{fig:coda-view}: Steps~1 to~4 follow the message rounds of a view, and Steps~5 to~7 form the view change.

\begin{figure}[t]
\centering
\begin{tikzpicture}[
  font=\footnotesize,
  box/.style={draw=black!60, rounded corners=2pt, align=center, inner sep=3.5pt, execute at begin node={\hyphenpenalty=10000}},
  decbox/.style={box, fill=green!12, anchor=north west, minimum height=2.6cm},
  vcbox/.style={box, fill=orange!15, anchor=north west, minimum height=2.95cm},
  rowtitle/.style={font=\footnotesize\itshape, anchor=base west, inner sep=0pt},
  arr/.style={-stealth, semithick},
]
\def\ytop{0}      %
\def\yhub{-3.21}  %
\def\ybot{-4.0}   %
\def\yloop{-7.25} %
\node[rowtitle] at (0,\ytop+0.15) {Deciding in view $V$};
\node[decbox, text width=3.1cm] (s1) at (0,\ytop)
  {\textbf{Step 1: proposal}\\ $\leader(V)$ broadcasts its guard $x$: $\langle \PREPARE, V, x \rangle$\\[3pt]
   {\scriptsize\color{black!65} view 1: skipped (no leader)}};
\node[decbox, text width=3.75cm] (s2) at (3.75,\ytop)
  {\textbf{Step 2: first vote}\\ once $x$ is justified ($f{+}1$ $\ECHO$s of $V{-}1$ for each entry): broadcast $\langle \VOTEONE, V, x \rangle$\\[3pt]
   {\scriptsize\color{black!65} view 1: own input, at once}};
\node[decbox, text width=3.45cm] (s3) at (8.15,\ytop)
  {\textbf{Step 3: second vote}\\ on $2f{+}1$ $\langle \VOTEONE, V, x \rangle$, if no $\VOTETWO$ sent in~$V$: lock $x$, broadcast $\langle \VOTETWO, V, x, g \rangle$, where $g$ is its own guard};
\node[decbox, text width=3.4cm] (s4) at (12.25,\ytop)
  {\textbf{Step 4: decide}\\ on $2f{+}1$ $\VOTETWO$ of $V$ or $V{-}1$ with the same lock \mbox{$x \neq \NONE$}: decide $x$\\[3pt]
   {\scriptsize\color{black!65} the view continues}};
\node[box, minimum width=11.6cm, minimum height=0.55cm] (v2) at (8.9,\yhub)
  {$\VOTETWO$ messages of view $V$, each with its sender's lock (or $\NONE$) and guard};
\node[rowtitle] at (0.7,\ybot+0.15) {The view change};
\node[vcbox, text width=2.95cm] (s5) at (0.7,\ybot)
  {\textbf{Step 5: timeout}\\ $5\Delta$ after entry, if no $\VOTETWO$ sent in~$V$: broadcast $\langle \VOTETWO, V, \NONE, g \rangle$ (no lock), where $g$ is its own guard};
\node[vcbox, text width=6.65cm] (s6) at (4.3,\ybot)
  {\textbf{Step 6: echo safe values}\\ from $5\Delta$ after entry, for every set $S$ of $2f{+}1$ accepted $\VOTETWO$ of $V$ with no two different locks, and every proposer $p_j$: broadcast $\langle \ECHO, V, j, v \rangle$, where $v$ is the lock's entry for $p_j$ if $S$ holds a lock, else the entry for $p_j$ that $f{+}1$ guards in $S$ share, else $\ALLSAFE$};
\node[vcbox, text width=4.05cm] (s7) at (11.6,\ybot)
  {\textbf{Step 7: next view}\\ on $f{+}1$ $\ECHO$s with the same value for the same proposer: echo it too\\[2pt]
   on $2f{+}1$ $\langle \ECHO, V, j, x[j] \rangle$ for every proposer $p_j$: enter view $V{+}1$ with guard $x$};
\draw[arr] (s1) -- (s2);
\draw[arr] (s2) -- (s3);
\draw[arr] (s3.south) -- (s3.south |- v2.north);
\draw[arr] (s4.south |- v2.north) -- (s4.south);
\draw[arr] (3.4,0 |- s5.north) -- (3.4,0 |- v2.south);
\draw[arr] (6.4,0 |- v2.south) -- (6.4,0 |- s6.north);
\draw[arr] (s6) -- (s7);
\draw[arr, rounded corners=4pt] (s7.south) -- (s7.south |- 0,\yloop) -| (0.35,0 |- s1.south)
  node[pos=0.25, below, font=\scriptsize] {view $V{+}1$, with guard $x$};
\end{tikzpicture}
\caption{One view $V$ of \coda.
A correct validator sends at most one $\VOTETWO$ message in $V$, with its lock in Step~3 or without a lock in Step~5, and both the decision of Step~4 and the echoes of Step~6 use these messages.
The view change ends every view, also one that decided, and moves the validator to view $V+1$ with a guard, which it proposes in Step~1 if it leads $V+1$.
View~1 has no leader and starts at Step~2.}
\label{fig:coda-view}
\end{figure}

\paragraph{Step 1: Proposal (views $V \geq 2$).}
The leader of a view $V \geq 2$ opens it by broadcasting its guard $x$ in a message $\langle \PREPARE, V, x \rangle$ (line~\ref{ln:core:on_enter_if_leader}).
View~$1$ has no leader and skips this step.

\paragraph{Step 2: First vote.}
In view~$1$, every validator broadcasts $\langle \VOTEONE, 1, x \rangle$ upon starting, for its own input $x$ (line~\ref{ln:core:on_enter_dummy_vote1}).
A validator in a view $V \geq 2$ that has received the leader's $\langle \PREPARE, V, x \rangle$ broadcasts $\langle \VOTEONE, V, x \rangle$ as soon as $x$ is \emph{justified}: for every proposer $p_\pid$, it has received $f+1$ $\ECHO$ messages of view $V - 1$ that announce $x[\pid]$ as safe for $p_\pid$, so that at least one correct validator announced it (line~\ref{ln:core:on_prepare_header}).
A validator sends at most one $\VOTEONE$ message per view.

\paragraph{Step 3: Second vote.}
A validator that receives $2f+1$ $\VOTEONE$ messages of its current view $V$ for the same \metablock $x$ sets its lock for $V$ to $x$, whether or not it voted for $x$ itself, and broadcasts $\langle \VOTETWO, V, x, g \rangle$, where $g$ is its guard for $V$ (lines~\ref{ln:core:on_vote1_quorum_header} to~\ref{ln:core:on_vote1_quorum_set_vote2}).
It skips this step if it has already broadcast a $\VOTETWO$ message in $V$.
No two correct validators lock different \metablocks in the same view, since two sets of $2f+1$ $\VOTEONE$ messages from different validators share a correct sender, which sends only one $\VOTEONE$ message per view.

\paragraph{Step 4: Decision.}
Upon receiving $2f+1$ $\VOTETWO$ messages of the same view, its current or its previous one, with the same lock $x \neq \NONE$, a validator decides $x$, unless it has decided before (line~\ref{ln:core:on_vote2_quorum_header}).
The previous view counts as well, since a validator may enter the next view before these messages reach it.

\para{The view change.}
A view may end without a decision, for instance view~$1$ when correct validators have different inputs, and a later view when its leader is faulty.
The view change, Steps~5 to~7, ends every view and moves the validators to the next one.
It carries every decision forward: if a \metablock $x$ was decided in view $V$ or earlier, every correct validator that enters a later view has guard $x$ there, and correct validators vote only for $x$ in later views.

\paragraph{Step 5: Timeout.}
A validator that has been in a view $V$ for $5\Delta$, measured on its local clock, and has not broadcast a $\VOTETWO$ message in $V$ broadcasts one without a lock, $\langle \VOTETWO, V, \NONE, g \rangle$, where $g$ is its guard for $V$ (line~\ref{ln:core:on_timer_set_vote2}).
The view change, which needs $2f+1$ $\VOTETWO$ messages, can thus proceed even if no \metablock gathers $2f+1$ $\VOTEONE$ messages.

\paragraph{Step 6: Echoing safe values.}
The view change reads the locks and guards that the $\VOTETWO$ messages of the view carry, and it counts only \emph{accepted} messages (see the paragraph on corroboration): a $\VOTETWO$ message with a lock $x$ is accepted once $f+1$ $\VOTEONE$ messages for $x$ of the same view have arrived (line~\ref{ln:vc:rule_accept_vote2_lock}), one without a lock on receipt, and at most one per sender and view.
A lock needs this corroboration here because a single lock determines the values echoed for a whole set of messages.
The decision of Step~4 needs none, as it counts $2f+1$ messages with the same lock.

Once $5\Delta$ has elapsed since the validator entered its current view $V$, for every set $S$ of $2f+1$ accepted $\VOTETWO$ messages of $V$ whose locks are all the same \metablock, and for every proposer $p_\pid$, it \emph{echoes} a value $v$, that is, it broadcasts $\langle \ECHO, V, \pid, v \rangle$, which announces $v$ as safe for $p_\pid$ in view $V + 1$ (line~\ref{ln:vc:on_vote2_quorum_compute_approved}). 
It chooses $v$ as follows.
\begin{compactitem}
   \item If $S$ holds a lock $\ell$, then $v = \ell[\pid]$.
   \item Otherwise, $v$ is the entry for $p_\pid$ that the guards of $f+1$ messages in $S$ share, or $\ALLSAFE$ if there is none.
\end{compactitem}
The set $S$ may mix messages without a lock and messages with the same lock.
Two different locks, however, are never allowed in the same set $S$.
Since no two correct validators lock different \metablocks in a view (Step~3), one of the senders of conflicting locks is faulty, and the messages of the correct validators carry no conflicting locks.
As long as it cannot form a set without conflicting locks, the validator therefore waits for further messages.
As the rule fires for every such set $S$, a validator may echo several values for one proposer in a view.
A validator need not enumerate the sets $S$, however: it suffices to count the accepted $\VOTETWO$ messages by lock and by guard entry, and \Cref{app:coda:count} gives the conditions on these counts.
The delay of $5\Delta$ is long enough that, if every correct validator has started \coda by GST and none stops taking part, then in a view whose leader is correct and which no correct validator enters before GST, every correct validator locks the leader's \metablock before any correct validator leaves the view.

The values that correct validators echo are safe.
If view $V$ decided a \metablock $x$, at least $f+1$ correct validators locked $x$, so every such set $S$ holds the lock $x$.
If an earlier view decided $x$, every correct validator entered $V$ with guard $x$ and votes in $V$ only for $x$, so an accepted lock in $S$ is $x$, and if $S$ holds no lock, the guards of the correct validators in $S$ share the entry $x[\pid]$.

\paragraph{Step 7: Entering the next view.}
Echoes are amplified as in Bracha's broadcast~\cite{bracha87}: upon $f+1$ echoes with the same value for the same proposer, of its current or previous view, a validator echoes that value too, for the same view (line~\ref{ln:vc:upon_echo_amplification}).
A validator enters the next view once, for every proposer, $2f+1$ echoes of the same view agree.
Precisely, let $V$ be its current view, or a later one if the validator has fallen behind.
Once it has received, for every proposer $p_\pid$, $2f+1$ messages $\langle \ECHO, V, \pid, v \rangle$ with the same value $v$, it enters view $V + 1$ with the guard whose entry for each proposer $p_\pid$ is the value $v$ of these echoes (line~\ref{ln:vc:upon_enter_next_view}).
The $2f+1$ echoes for one proposer may come from other validators than those for another proposer.
A validator keeps the state of its current and previous views only, and upon entering a view, it abandons its broadcasts of the views before the previous one (lines~\ref{line:persistant_memory} and~\ref{ln:rule_abandon}), which bounds its persistent memory however many views pass.
A validator that has fallen behind may therefore never receive the messages of the views it missed.
Instead, once it has received, for a later view $V$ and every proposer $p_\pid$, $f+1$ messages $\langle \ECHO, V, \pid, v \rangle$ with the same value $v$, it catches up by entering view $V$ directly, with the guard built in the same way from these echoes (line~\ref{ln:vc:upon_catch_up}).
One of these echoes comes from a correct validator, which has thus reached view $V$, and values safe for view $V+1$ are safe for $V$ as well.

\providecommand{\LineLabel}[1]{\label{#1}}

\begin{algorithm}[p]
\caption{\coda: Pseudocode (for validator $p_i$)}
\label{alg:fallback}
\begin{algorithmic}[1]
\scriptsize

\State \textbf{Notation:} \LineLabel{ln:core:notation_header}
\State \hskip2em - A \metablock{} $x$ assigns a value $x[j] \in \VALUE \cup \{ \ALLSAFE \}$ to each proposer $p_j$. \LineLabel{ln:core:notation_metablock}
\State \hskip2em - $\VALUE$ is the set of values that validators input for a proposer (in \qchorus, a root or $\bot$), and $\ALLSAFE \notin \VALUE$. \LineLabel{ln:core:const_allsafe}
\State \hskip2em - $\NONE$ denotes absent state (no lock, no message); it is never proposed, voted for, or decided. \LineLabel{ln:core:notation_none}

\fullmedskip
\State \textbf{Global variables (persistent memory):} \LineLabel{ln:core:global_header}
\State \hskip2em $\VIEW$ $\mathit{view}_i$ \BlueComment{$p_i$'s current view} \LineLabel{ln:core:ps_view}
\State \hskip2em Time of entry into the current view and whether $p_i$ has decided

\fullmedskip
\State \textbf{View-specific variables (persistent memory), indexed by view $V$:} \LineLabel{ln:core:vs_header}
\State \hskip2em $\METABLOCK$ $\guardi[V] \gets \NONE$ \BlueComment{the \metablock{} with which $p_i$ entered view $V$, which it proposes if it leads view $V$} \LineLabel{ln:core:pv_guard}
\State \hskip2em $\METABLOCK$ $\mathit{lock}_i[V] \gets \NONE$ \BlueComment{$p_i$'s locked \metablock{} for view $V$, if any} \LineLabel{ln:core:pv_lock}
\State \hskip2em $\MESSAGE$ $\voteone[V]$, $\votetwo[V] \gets \NONE$ \BlueComment{$\VOTEONE$ and $\VOTETWO$ messages broadcast by $p_i$ in view $V$} \LineLabel{ln:core:pv_votes}
\State \hskip2em $\mathsf{Set}(\MESSAGE)$ $\echosone[V] \gets \emptyset$ \BlueComment{$\ECHO$ messages of view $V$ broadcast by $p_i$} \LineLabel{ln:core:pv_echo_set}

\fullmedskip
\State \textbf{Rules:} \LineLabel{ln:core:storage_rule}
\State \hskip2em - Validator $p_i$ maintains (in persistent memory) view-specific variables for views $\viewi$ and $\viewi - 1$ only, and no received \coda messages. \label{line:persistant_memory}
\State \hskip2em - Upon entering view $V$, abandon all ongoing broadcasts of \coda{} messages from views smaller than $V - 1$. \LineLabel{ln:rule_abandon}
\State \hskip2em - A $\VOTETWO$ message is accepted only if its guard is a \metablock{} and its lock is either $\NONE$ or a \metablock{}. \LineLabel{ln:vc:rule_vote2_format}
\State \hskip2em - A received message $\langle \VOTETWO, V, x, \cdot \rangle$ with $x \neq \NONE$ is \emph{accepted} only after $p_i$ has received $f+1$ messages $\langle \VOTEONE, V, x \rangle$. 
\LineLabel{ln:vc:rule_accept_vote2_lock}
\State \hskip2em - A received message $\langle \VOTETWO, V, \NONE, \cdot \rangle$ is \emph{accepted} immediately upon reception. \LineLabel{ln:vc:rule_accept_vote2_bot}
\State \hskip2em - For each view $V$, at most one $\langle \VOTETWO, V, \cdot, \cdot \rangle$ message is accepted from each validator. \LineLabel{ln:vc:rule_one_vote2_per_sender}

\fullmedskip
\State \textbf{Functions:} \LineLabel{ln:core:functions_header}
\State \hskip2em - $\mathsf{echo}(\VIEW \text{ } V, \PROPOSER \text{ } p_j, \VALUE \cup \{ \ALLSAFE \} \text{ } v)$: \BlueComment{broadcasts each echo at most once, and records it} \LineLabel{ln:vc:fn_echo}
\State \hskip4em \textbf{if} $\langle \ECHO, V, j, v \rangle \notin \echosone[V]$: \LineLabel{ln:vc:fn_echo_if}
\State \hskip6em $\echosone[V] \gets \echosone[V] \cup \{ \langle \ECHO, V, j, v \rangle \}$; \textbf{broadcast} $\langle \ECHO, V, j, v \rangle$ \LineLabel{ln:vc:fn_echo_broadcast}

\fullmedskip
\State \hskip2em - $\justified(\METABLOCK \text{ } x)$: \BlueComment{checks whether every entry of $x$ was echoed as safe in the previous view} \LineLabel{ln:core:fn_justified}
\State \hskip4em \textbf{return} whether $p_i$ has received $f+1$ messages $\langle \ECHO, \viewi - 1, j, x[j] \rangle$ for every proposer $p_j$ \LineLabel{ln:core:fn_justified_return}

\fullmedskip
\State \textbf{upon starting:} \LineLabel{ln:core:init_header}
\State \hskip2em $\guardi[1] \gets$ $p_i$'s input \LineLabel{ln:core:init_set_proposal}
\State \hskip2em \textbf{enter} view $1$ \LineLabel{ln:core:init_enter_view1}

\fullmedskip
\State \textbf{upon entering} view $V$: \LineLabel{ln:core:on_enter_header}
\State \hskip2em $\mathit{view}_i \gets V$; record the time of entry \BlueComment{update the current view} \LineLabel{ln:core:on_enter_set_view}
\State \hskip2em \textbf{if} $V = 1$: \BlueComment{view $1$ is the leaderless view} \LineLabel{ln:core:on_enter_dummy_view}
\State \hskip2em\hskip2em $\voteone[1] \gets$ $\langle \VOTEONE, 1, \guardi[1] \rangle$; \textbf{broadcast} $\voteone[1]$ \BlueComment{everyone votes for its own input} \LineLabel{ln:core:on_enter_dummy_vote1}
\State \hskip2em \textbf{else if} $p_i = \leader(V)$: \textbf{broadcast} $\langle \textsc{prepare}, V, \guardi[V] \rangle$ \BlueComment{leader proposes a \metablock} \LineLabel{ln:core:on_enter_if_leader}

\fullmedskip
\State \textbf{upon receipt of} $\langle \PREPARE, \viewi, x \rangle$ \textbf{from} $\leader(\viewi)$ \textbf{and} $\justified(x)$: \LineLabel{ln:core:on_prepare_header}\LineLabel{ln:core:on_prepare_wait_guard}
\State \hskip2em \textbf{if} $\voteone[\viewi] = \NONE$: \LineLabel{ln:core:on_prepare_if_no_vote1}
\State \hskip4em $\voteone[\viewi] \gets \langle \VOTEONE, \viewi, x \rangle$; \textbf{broadcast} $\voteone[\viewi]$ \LineLabel{ln:core:on_prepare_broadcast_vote1}

\fullmedskip
\State \textbf{upon receipt of} $2f + 1$ $\langle \VOTEONE, \viewi, x \rangle$ messages for the same \metablock{} $x$: \LineLabel{ln:core:on_vote1_quorum_header}
\State \hskip2em \textbf{if} $\votetwo[\viewi] = \NONE$: \LineLabel{ln:core:on_vote1_quorum_if_no_vote2}
\State \hskip4em $\locki[\viewi] \gets x$ \BlueComment{lock on the \metablock{} $x$} \LineLabel{ln:core:on_vote1_quorum_set_lock}
\State \hskip4em $\votetwo[\viewi] \gets \langle \VOTETWO, \viewi, \locki[\viewi], \guardi[\viewi] \rangle$; \textbf{broadcast} $\votetwo[\viewi]$ \LineLabel{ln:core:on_vote1_quorum_set_vote2}

\fullmedskip
\State \textbf{upon receipt of} $2f + 1$ $\langle \VOTETWO, V, x, \cdot \rangle$ messages for the same \metablock{} $x \neq \NONE$ and some $V \in \{ \viewi - 1, \viewi \}$: \LineLabel{ln:core:on_vote2_quorum_header}
\State \hskip2em \textbf{if} $p_i$ has not decided yet: \textbf{trigger} $\mathsf{decide}(x)$ \BlueComment{decide the \metablock{} $x$} \LineLabel{ln:core:on_vote2_quorum_if_undecided}

\fullmedskip
\State \textbf{upon}  $5\Delta$ time has elapsed since entering view $\viewi$: \LineLabel{ln:core:on_timer_header}
\State \hskip2em \textbf{if} $\votetwo[\viewi] = \NONE$: \LineLabel{ln:core:on_timer_if_no_vote2}
\State \hskip4em $\votetwo[\viewi] \gets$ $\langle \VOTETWO, \viewi, \NONE, \guardi[\viewi] \rangle$; $\textbf{broadcast}$ $\votetwo[\viewi]$ \BlueComment{timeout: no lock} \LineLabel{ln:core:on_timer_set_vote2}

\fullmedskip
\State \textcolor{blue}{$\triangleright$ the view change: what is safe to carry into the next view, proposer by proposer}
\State \textbf{upon} $5\Delta$ time has elapsed since entering view $\viewi$ \textbf{and} \textbf{existence of} a set $S$ of $2f + 1$ accepted $\langle \VOTETWO, \viewi, \cdot, \cdot \rangle$ messages whose non-$\NONE$ locks are all identical: \BlueComment{fires once for every such set $S$} \LineLabel{ln:vc:on_vote2_quorum_compute_approved}\LineLabel{ln:rule_view_change_gate}\LineLabel{ln:core:notation_every_witness}
\State \hskip2em \textbf{if} there exists a non-$\NONE$ lock $\ell$ in $S$: \textbf{invoke} $\mathsf{echo}(\viewi, p_j, \ell[j])$ for every proposer $p_j$ \BlueComment{a lock exists: echo it} \LineLabel{ln:vc:on_vote2_quorum_set_approved_lock}
\State \hskip2em \textbf{else: for every proposer} $p_j$: \LineLabel{ln:vc:on_vote2_quorum_set_approved_allsafe}
\State \hskip4em \textbf{if} some $G \in \VALUE \cup \{ \ALLSAFE \}$ is the entry for $p_j$ in the guards of $f + 1$ messages in $S$: \LineLabel{ln:vc:on_vote2_quorum_guard_if}
\State \hskip6em \textbf{invoke} $\mathsf{echo}(\viewi, p_j, G)$ \BlueComment{echo the entry for $p_j$ that $f+1$ guards in $S$ share} \LineLabel{ln:vc:on_vote2_quorum_guard_echo}
\State \hskip4em \textbf{else: invoke} $\mathsf{echo}(\viewi, p_j, \ALLSAFE)$ \BlueComment{the guards for $p_j$ diverge: echo $\ALLSAFE$} \LineLabel{ln:vc:on_vote2_quorum_allsafe_echo}

\fullmedskip
\State \textbf{upon} receipt of $2f + 1$ messages $\langle \ECHO, V, j, x[j] \rangle$ \textbf{for every proposer} $p_j$, for some \metablock{} $x$ and some $V \geq \viewi$: \LineLabel{ln:vc:upon_enter_next_view}
\State \hskip2em $\guardi[V + 1] \gets x$ \LineLabel{ln:vc:upon_enter_set_guard}
\State \hskip2em \textbf{enter} view $V + 1$ \LineLabel{ln:vc:upon_enter_enter}

\fullmedskip
\State \textcolor{blue}{$\triangleright$ catching up: a correct validator has reached view $V$, and $x$ is safe to carry into it}
\State \textbf{upon} receipt of $f + 1$ messages $\langle \ECHO, V, j, x[j] \rangle$ \textbf{for every proposer} $p_j$, for some \metablock{} $x$ and some $V > \viewi$: \LineLabel{ln:vc:upon_catch_up}
\State \hskip2em $\guardi[V] \gets x$ \LineLabel{ln:vc:upon_catch_up_set_guard}
\State \hskip2em \textbf{enter} view $V$ \LineLabel{ln:vc:upon_catch_up_enter}

\fullmedskip
\State \textcolor{blue}{$\triangleright$ amplification, for the current view and the one below it}
\State \textbf{upon} receipt of $f + 1$ messages $\langle \ECHO, V, j, v \rangle$ for some $p_j$, some $v \in \VALUE \cup \{ \ALLSAFE \}$, and some $V \in \{ \viewi - 1, \viewi \}$: \LineLabel{ln:vc:upon_echo_amplification}
\State \hskip2em \textbf{invoke} $\mathsf{echo}(V, p_j, v)$ \LineLabel{ln:vc:upon_echo_amplification_invoke}

\end{algorithmic}
\end{algorithm}

\subsection{Guarantees}\label{sec:coda:model}

In this subsection, we state the correctness and complexity guarantees of \coda{}, which \Cref{app:coda} proves.

\coda satisfies the four properties that Step~6 of \Cref{sec:protocol} requires of the fallback consensus (\Cref{lem:coda:agreement,lem:coda:termination,lem:coda:strong_unanimity,lem:coda:obligation}).
In the terms of \coda, in which a validator inputs a \metablock and decides one, they read as follows.
\begin{compactitem}
   \item \emph{Agreement:} No two correct validators decide different \metablocks.
   \item \emph{Termination:} If every correct validator starts \coda, every correct validator eventually decides.
   \item \emph{Strong unanimity (per proposer):} If all correct validators that input a \metablock input the same value other than $\ALLSAFE$ for a proposer $p_\pid$, no correct validator decides a \metablock whose entry for $p_\pid$ is $\ALLSAFE$.
   \item \emph{Obligation (per proposer):} If a correct validator decides a \metablock whose entry for a proposer $p_\pid$ is a value $v \neq \ALLSAFE$, some correct validator input $v$ for $p_\pid$.
\end{compactitem}

Agreement, strong unanimity, and obligation hold in every execution, whatever the timing, and also when correct validators stop taking part in \coda, possibly before inputting (\Cref{app:coda}).
Termination needs every correct validator to start \coda at some point and none to stop taking part.
To simplify its proof, \Cref{app:coda} adopts the \emph{participation convention} that every correct validator starts \coda by GST.
Under this convention, \Cref{lem:coda:termination} also bounds the time. Let $V$ be the first view greater than $1$ that no correct validator enters before GST and whose leader is correct. Then every correct validator decides within $5\Delta$ of the first entry into $V$ by a correct validator.
In \qchorus, a correct validator starts \coda when it adopts a \metablock (Step~5 of \Cref{sec:protocol}), possibly after GST.
The convention costs no generality: if the last correct validator starts \coda at a time $T_{\mathrm{start}}$ after GST, the termination proofs hold with $T_{\mathrm{start}}$ in place of GST, as \Cref{app:coda} shows.
In \qchorus, a correct validator may also halt before it inputs, in which case it never starts \coda.
Once a correct validator halts, the halting gadget of \Cref{sec:protocol} makes every correct validator that participates finalize (\Cref{clm:qv:halting}).

\subsubsection{Complexity}

Finally, we bound, in terms of the number $k$ of proposers, the \emph{message complexity} of \coda, the number of messages that correct validators send per view, retransmissions aside, and the size of the persistent memory of a correct validator. We count memory and message sizes in \emph{words}: a word holds a value (a root, $\bot$, or $\ALLSAFE$), a view number, or a validator identifier, so a \metablock occupies $k$ words, and a message occupies a constant number of words, for its type and view number, beyond the \metablocks or value that it carries.
For every proposer $p_\pid$, let $P_\pid = \{ \mathit{input}_w[\pid] \mid p_w \text{ is correct and inputs} \}$ denote the set of values that correct validators input for $p_\pid$, and let $c = \max_\pid |P_\pid|$.
In general, $|P_\pid| \leq n$, as each correct validator inputs one value per proposer.
When \qchorus{} invokes \coda{} and the proposer $p_\pid$ did not equivocate, every correct validator inputs for $p_\pid$ either the Merkle root of $p_\pid$'s proposal or $\bot$, so $|P_\pid| \leq 2$.
The per-validator broadcast counts and the persistent-memory bound below are stated in terms of $c$: they do not depend on $n$ when $c = O(1)$, and they grow by a factor of $n$ in the worst case.

\begin{restatable}[Message complexity]{lemma}{codamessages}\label{lem:coda:message_complexity}
For every view $V$, each correct validator initiates at most $3 + (c + 1) k$ broadcasts of messages of view $V$.
Hence, retransmissions aside, correct validators send $O(c k n^2)$ messages of each view, which is $O(k n^2)$ when $c = O(1)$.
\end{restatable}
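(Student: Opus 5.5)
We bound, view by view, the broadcasts a correct validator $p_i$ initiates for messages of a fixed view $V$, by message type, and then multiply by the $n$ recipients and $n$ validators. A broadcast of a view-$V$ message is initiated only by the pseudocode of \Cref{alg:fallback} at specific lines. We also note that a crash does not re-initiate broadcasts: $\voteone[V]$, $\votetwo[V]$ and $\echosone[V]$ are persistent, so the once-only guards survive crashes.

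\emph{Constant part.} There is at most one $\PREPARE$, since only $\leader(V)$ sends it, upon entering $V$. A validator enters a given view at most once, because views only increase. There is at most one $\VOTEONE$, guarded by $\voteone[V]=\NONE$ in line~\ref{ln:core:on_prepare_if_no_vote1}, or sent once upon entering view~$1$. There is at most one $\VOTETWO$, since both line~\ref{ln:core:on_vote1_quorum_set_vote2} and line~\ref{ln:core:on_timer_set_vote2} are guarded by $\votetwo[V]=\NONE$. This gives the $3$.

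\emph{Echoes.} By the function $\mathsf{echo}$, $p_i$ broadcasts each distinct tuple $\langle \ECHO, V, \pid, v\rangle$ at most once. It therefore suffices to show that, for each proposer $p_\pid$, the set of values $v$ that a correct validator ever echoes in view $V$ has size at most $c+1$, which yields $(c+1)k$. The key invariant, proved by induction on views together with amplification, is the following. Every value that a correct validator echoes for $p_\pid$ in any view, every entry $\guardi[V][\pid]$ of a correct guard, and every entry of a correct lock lies in $P_\pid \cup \{\ALLSAFE\}$.

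\emph{Base.} Guards in view~$1$ are inputs, hence lie in $P_\pid$.

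\emph{Locks.} A lock $x$ in view $V$ has $2f+1$ $\VOTEONE$ messages, hence one from a correct validator. That correct validator voted either for its own input (view~$1$) or for a justified proposal, whose entries each have $f+1$ echoes of view $V-1$. At least one of those echoes is correct and so lies in the set by induction.

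\emph{Echoes in Step~6.} In the lock case, the accepted lock has $f+1$ $\VOTEONE$ messages, one of them correct, so the argument for locks applies. In the guard case, $f+1$ guards share $G$, so one of them is a correct guard. Otherwise the value echoed is $\ALLSAFE$.

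\emph{Amplified echoes and guards.} An amplified echo needs $f+1$ echoes, one of them correct. Guards for later views, both in line~\ref{ln:vc:upon_enter_set_guard} and line~\ref{ln:vc:upon_catch_up_set_guard}, are built from at least $f+1$ echoes per entry, again including a correct one.

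The main obstacle is making this induction well-founded. Amplification and catch-up refer to echoes of the same or a later view, so the induction must be over the global time order of correct events rather than over views. We state it as follows: ``the first correct event that produces a value outside $P_\pid\cup\{\ALLSAFE\}$'' leads to a contradiction, because every rule that produces a value relies on a strictly earlier correct message carrying the same value. Faulty senders cannot break this, since each rule requires at least $f+1$ distinct senders, or $2f+1$ for locks.

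Finally, each validator initiates at most $3+(c+1)k$ broadcasts of view $V$, and each broadcast unicasts to $n$ validators. Summing over the at most $n$ correct validators gives $O(ckn^2)$ messages per view, retransmissions aside, which is $O(kn^2)$ when $c=O(1)$.
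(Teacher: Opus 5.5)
Your proof is correct and follows the paper's: the paper also counts one $\PREPARE$, one $\VOTEONE$ and one $\VOTETWO$ per view. It bounds the echoes per proposer by $|P_\pid| + 1$ via the admissibility invariant proved for obligation (\Cref{clm:coda:obligation_invariant}), using the same induction over the order of events at correct validators that you reconstruct, and then multiplies by $n$ recipients and $n$ senders. The one detail the paper states that you leave implicit is that the once-only records ($\echosone[V]$, and likewise $\voteone[V]$ and $\votetwo[V]$) must survive not only crashes but also the two-view memory rule; this holds because $\mathsf{echo}(V,\cdot,\cdot)$ is invoked only while $V \in \{\viewi - 1, \viewi\}$ and views never decrease.
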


The same accounting bounds the communication of a correct validator in words: a $\PREPARE$ or a $\VOTEONE$ message carries one \metablock, a $\VOTETWO$ message carries two, and an $\ECHO$ message carries one value, for $O(c k n)$ words per view.
We now turn to memory.

\begin{restatable}[Persistent memory]{lemma}{codamemory}\label{lem:coda:memory_complexity}
Each correct validator stores $O(c k)$ words in persistent memory for \coda, which is $O(k)$ words when $c = O(1)$.
\end{restatable}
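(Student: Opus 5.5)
The plan is to bound directly what the persistent state of a correct validator holds. The rule of line~\ref{line:persistant_memory} keeps view-specific variables only for $\viewi$ and $\viewi - 1$, and no received \coda messages. So it suffices to show that, for any single view $V$, the state $\guardi[V]$, $\locki[V]$, $\voteone[V]$, $\votetwo[V]$ and $\echosone[V]$ occupies $O(ck)$ words. Beyond this, the global variables (current view, entry time, decided flag) take $O(1)$ words. Crash recovery adds nothing, since received messages are never persisted. The abandoned broadcasts of line~\ref{ln:rule_abandon} need not be stored either.

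The first three pieces are immediate. The guard and the lock are each a \metablock or $\NONE$, hence $O(k)$ words. $\voteone[V]$ carries one \metablock, and $\votetwo[V]$ carries two (lock and guard). Each variable is written at most once per view, because of the guards $\voteone = \NONE$ and $\votetwo = \NONE$. So these pieces contribute $O(k)$ words.

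The main step is bounding $|\echosone[V]|$, and I would reuse the accounting behind \Cref{lem:coda:message_complexity}. That lemma bounds each correct validator's broadcasts of view $V$ by $3 + (c+1)k$. Every element of $\echosone[V]$ is a distinct echo that $\mathsf{echo}$ broadcast exactly once (line~\ref{ln:vc:fn_echo_if}), so the $\ECHO$ broadcasts of $V$ are in bijection with $\echosone[V]$. Hence $|\echosone[V]| \le (c+1)k$, and each echo is a constant number of words, giving $O(ck)$ words.

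If I must argue this without leaning on that lemma, the obstacle is showing that, for each proposer $p_\pid$, a correct validator echoes at most $c+1$ distinct values in a view. These are values in $P_\pid \cup \{\ALLSAFE\}$. I would prove this by induction on views, with the claim that every value a correct validator echoes for $p_\pid$, and every entry of a correct guard or lock, lies in $P_\pid \cup \{\ALLSAFE\}$.
\begin{itemize}
\item \emph{Base.} View~1 guards are inputs.
\item \emph{Locks.} A correct lock has $2f+1$ $\VOTEONE$ supporters. A correct $\VOTEONE$ is for a justified proposal, whose entries have echoes from correct validators, or for an input in view~1.
\item \emph{Echoes of a lock.} An echoed lock entry comes from an accepted lock, which has a correct $\VOTEONE$ sender.
\item \emph{Echoes of a guard entry.} An echoed guard entry is shared by $f+1$ guards, so it appears in some correct guard.
\item \emph{Amplified echoes.} An amplified echo has a correct echoer behind it.
\item \emph{New guards.} New guards come from $f+1$ or $2f+1$ matching echoes, so some correct validator echoed them.
\end{itemize}
The delicate point is that guards and locks are \metablocks, while the invariant is entrywise. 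I would therefore state and prove the invariant per proposer, and the induction goes through. The bound then gives $O(ck)$ words in total, and $O(k)$ when $c = O(1)$.
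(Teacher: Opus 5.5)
Your proposal is correct and follows essentially the paper's proof: the memory rule leaves the variables of two views plus $O(1)$ global words, the guard, lock, and vote messages cost $O(k)$, and $\echosone[V]$ holds at most $(c+1)k$ constant-size echoes because every value echoed for $p_\pid$ lies in $P_\pid \cup \{\ALLSAFE\}$, which is the paper's \Cref{clm:coda:echoes_per_view}, resting on the obligation invariant (\Cref{clm:coda:obligation_invariant}) that your fallback sketch reconstructs. The paper also notes that the broadcasts still alive belong to views $\viewi - 1$ and $\viewi$ and are either recorded in these variables or the unrecorded $\PREPARE$ carrying $\guardi[V]$; and if you write out the fallback invariant yourself, induct on the order of events rather than on views alone, since a catch-up entry into view $V$ and an amplified echo of view $V$ both rest on echoes of view $V$ itself.
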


\subsection{Differences from Forget-IT}\label{sec:coda:differences}

\coda keeps the structure of Forget-IT's views: a $\PREPARE$ message, two rounds of votes, locks, guards carried in $\VOTETWO$ messages, the acceptance rule for these messages, and a view change that reads their locks and guards.
It differs in five main points.
First, every view of Forget-IT has a leader, and its guards start as a special value $\textsc{safe}$, which means that every value is safe.
\coda adds the leaderless view~$1$, and its guards start as the validators' inputs.
Second, a Forget-IT validator votes for a proposal only if it equals its guard or its guard is $\textsc{safe}$, it may relax its guard to $\textsc{safe}$ within a view, and a leader whose guard is $\textsc{safe}$ proposes its own input.
A \coda validator votes for any justified proposal of the leader, its guard stays fixed within a view, and a leader proposes exactly its guard.
Third, Forget-IT passes the outcome of a view change through three phases of echoes, in which $\textsc{safe}$ counts toward a quorum for every value.
\coda has a single phase of echoes per proposer, amplified as in Bracha's broadcast, in which $\ALLSAFE$ is an ordinary value that counts only for itself.
These changes give \coda strong unanimity and obligation per proposer, whereas Forget-IT guarantees validity only when all processes are correct and the network is synchronous from the start.
Fourth, a Forget-IT validator sends at most two echoes per phase and view, the second one for $\textsc{safe}$.
As $\ALLSAFE$ counts for no other value, a \coda validator may instead echo, for one proposer, $\ALLSAFE$ and every value that correct validators input for it, so its messages and persistent memory grow with the number of such values (\Cref{sec:coda:model}).
Fifth, \coda lacks two entry rules of Forget-IT, by which the next leader enters on $f+1$ echoes of the last phase and the other validators on its proposal and $f+1$ echoes of the second phase, and which let Forget-IT decide within three message delays of a correct leader's entry after GST.
In \coda, the first view with a correct leader that starts after GST decides within $5\Delta$ of the first entry into it (\Cref{lem:coda:termination}).

\section{Integration with the \conductor}\label{sec:conductor}

Within \cadence, the slot consensus runs under the \conductor, which opens the slots.
The \conductor groups slots into windows of consecutive slots with evenly spaced deadlines, and a validator opens the next window only once enough slots of the earlier windows are \emph{complete}, which in \cadence means that the validator has finalized the blocks of these slots; this keeps the number of open slots bounded.
Once per window, the validators agree on the first deadline of the next window through an agreement on a core set (ACS)~\cite{cadence}.
We expect \quorus to keep the algorithm of the \conductor and to change two things around it: the ACS and the point at which a validator completes a slot.
This version of the paper outlines these changes but does not specify them in full or analyze them.
In the model of \Cref{sec:model}, in which validators may crash and channels are unreliable, the \conductor also needs changes similar to those made to the slot consensus: a validator keeps retransmitting the messages that others may still need, such as its \textsc{Finalized} messages and those of the ACS, until the \conductor has moved far enough ahead; we leave these changes to a later version.

\para{A signature-free ACS.}
The ACS can be built without signatures, following Ben-Or, Kelmer, and Rabin~\cite{bkr94}: each validator disseminates its proposal with a reliable broadcast~\cite{bracha87}, and one binary agreement per validator decides whether that validator's proposal enters the set.
For the binary agreements, \coda can be used with a single proposer and the values $0$ and $1$, reading an output of $\ALLSAFE$ as $0$.
An ACS instance would end with the same halting rule as a slot (\Cref{sec:qv:description}), with the set in place of the \metablock: a validator would hand the set to the \conductor when it halts in the instance, and then stop running it.
\para{Completion at halting.}
The analysis of the \conductor in \cadence relies on totality: once a correct validator completes a slot after GST, every correct validator does so shortly after.
In \qchorus, finalizing has no such constant bound, but halting does (\Cref{sec:qv:description}): if every correct validator starts participating by time $T$, every correct validator halts by time $\max(T, H, \mathrm{GST}) + 2\Delta$, where $H$ is the time at which the first correct validator halts.
A validator therefore completes a slot when it halts, and it opens a window when it halts in the ACS instance that fixes the window's first deadline.

This bound counts from the last correct validator's start, however, and a validator that is behind does not participate yet.
To bring it up to date, each validator retransmits one cumulative \textsc{FinalizedUpTo} message, which it replaces every $\Delta$ with an up-to-date one, and keeps only the most advanced one from each peer.
The message refers to a single chain that lists the finalized \metablocks in slot order, with the output of each window's ACS instance inserted where the \conductor may open that window, that is, after the earlier slots that must be complete first.
It names the last point of the chain up to which its sender has finalized every slot and every ACS instance, with a hash over the chain up to that point.
These messages need not name the same point: once $f+1$ validators have named a point or a later one, with hashes that fit a single chain, a validator fetches the chain from them, checks it against the hashes, and finalizes every slot and every ACS instance up to that point, and once $2f+1$ have, it halts in them, which completes these slots and opens these windows.
Totality would then hold within a constant time, whatever a validator's lag, for the prefixes of completed slots that the \conductor counts.

\medskip\noindent
With these changes, we expect the guarantees of the \conductor to carry over: the number of open slots would stay bounded at all times, correct validators would open each slot within a constant time of one another after GST, and, for a constant grace period $\mathcal{G}$, every slot whose starting time is at least $\mathrm{GST} + \mathcal{G}$ would be timely (\Cref{sec:model}).

\section{Evaluation}\label{sec:evaluation}

We compare \qchorus with \chorus in simulation, for 200 validators of equal stake, each with a 1\,Gbps link, whose locations follow the geographic distribution of stake in the Monad validator set on October 1, 2026.
The simulation covers propagation, transmission and queueing delays, but no local computation. We simulate only the fast path, without faults or message loss.
Slots start every 100\,ms, and each carries 500\,KB of proposal data, divided equally among 5, 20 or 100 proposers.
We do not fix a signature scheme and instead vary the signature size up to 5000 bytes.
This range covers the parameter sets of ML-DSA~\cite{fips204} and of Falcon~\cite{falcon}, which NIST has selected for standardization.
It excludes the larger SLH-DSA signatures~\cite{fips205}.
Since safety and liveness need no signatures, we also simulate a signature size of zero, which gives the cost of the protocol without signatures.

\begin{figure}[t]
\centering
\includegraphics[width=\linewidth]{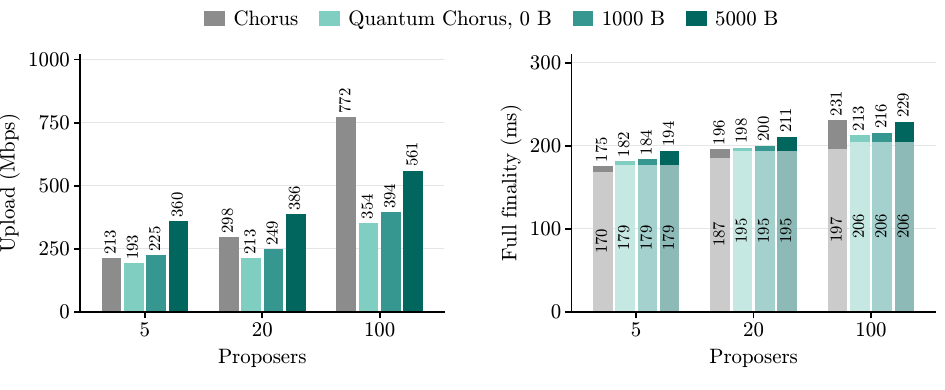}
\caption{Average upload per validator (left) and mean latency to full finality (right), for 200 validators with 1\,Gbps links. \chorus uses BLS signatures, and the signature sizes refer to \qchorus. Each latency bar consists of the latency with unlimited bandwidth (light, value inside the bar) and the increase at 1\,Gbps (dark), with the total above the bar. With unlimited bandwidth, the latency of \qchorus does not depend on the signature size.}
\label{fig:evaluation}
\end{figure}

\Cref{fig:evaluation} shows the average upload of a validator and the mean latency from a proposer's broadcast until a validator finalizes the slot (full finality, as opposed to speculative finality), averaged over all pairs of proposer and validator. Speculative finality comes about 50\,ms earlier: with 20 proposers, for instance, \qchorus reaches it after 145 to 155\,ms on average, depending on the signature size.

In terms of bandwidth, the two protocols differ mainly in the size of their consensus messages. Chunks account for about 180\,Mbps of every validator's upload, in every configuration and for both protocols, and consensus messages for the rest.
Since post-quantum signatures are much larger than BLS signatures, \qchorus with large signatures needs more upload than \chorus when there are few proposers.
However, a \qchorus vote carries one signature, while a \chorus vote carries a signature or certificate for every proposer, so \qchorus scales better with the number of proposers.

Message sizes matter less for latency.
With 1\,Gbps links, they cause transmission and queueing delays that are small relative to the propagation delays: compared with no signatures, even 5000-byte signatures add at most 16\,ms to the latency of \qchorus.
With 5 or 20 proposers, \chorus finalizes up to 19\,ms earlier than \qchorus.
At 100 proposers, the larger messages of \chorus delay it enough that \qchorus finalizes up to 18\,ms earlier.

The two protocols send the same kinds of messages in the same rounds, dissemination followed by two rounds of votes, so one might expect them to have the same latency when bandwidth is unlimited and message sizes do not matter.
In fact, \chorus is then slightly faster (by less than 10\,ms), because its validators broadcast the certificates they form or receive: a validator can cast its fast vote as soon as it receives a \metablock of certificates from another validator, and can finalize as soon as it receives a certificate of $2f+1$ fast votes.
\qchorus forms no compact certificates, so each validator collects the votes itself.
Without this forwarding, the two protocols behave identically from a networking standpoint and have the same latency.

\para{Simulation details.}
Each run covers three overlapping slots, of which we measure the middle one, and we average over 200 trials.
Both protocols use early voting and quorums of 135 validators.
All messages of a validator share its 1\,Gbps uplink, and propagation delays follow RIPE Atlas measurements.
We make optimizations that a deployment would make. Votes do not carry chunks. Instead, as in Deterministic RaptorCast~\cite{mip10}, which \cadence uses in practice~\cite{cadence}, each proposer sends every validator its chunks in separate packets, and validators forward enough of these packets for every validator to decode every proposal. A validator's votes take priority over the chunks it forwards, and, as in the evaluation of \cadence~\cite[Section~8]{cadence}, each proposer starts disseminating at its own lead time before the deadline rather than at the starting time $D - \Delta$, as in \Cref{alg:qv1}.

\section{Related Work}\label{sec:related}

\para{Byzantine consensus.}
Byzantine consensus asks $n$ processes, of which at most $f$ may deviate arbitrarily from the protocol, to agree on a common value~\cite{psl80,lsp82}.
Fischer, Lynch, and Paterson showed that no deterministic protocol solves it under asynchrony, even if only one process may crash~\cite{flp85}.
Deterministic protocols do solve it under synchrony~\cite{ds83} and under partial synchrony~\cite{dls88}, the model of \cadence and of this paper, and randomized protocols solve it under asynchrony~\cite{benor83,rabin83}.
\qchorus and \coda are deterministic.

\para{Byzantine consensus without signatures.}
Protocols that use no cryptography beyond authenticated channels, called information-theoretic or unauthenticated, tolerate a computationally unbounded adversary.
Their safety and liveness thus rest on no signature that a quantum computer could forge.
Since a validator cannot prove a quorum to another without signatures, these protocols often use the amplification of Bracha's reliable broadcast~\cite{bracha87}: a validator that receives $f+1$ matching messages repeats the message, and one that receives $2f+1$ acts on it.
\coda uses this rule for the echoes of its view change (\Cref{sec:coda:description}), and \qchorus for the \textsc{Finalized} messages of its halting gadget (\Cref{sec:qv:description}).
Among the information-theoretic protocols for partial synchrony, IT-HotStuff~\cite{iths20} and TetraBFT~\cite{tetrabft24} keep persistent memory constant, send $O(n^2)$ messages per view, and are optimistically responsive.
Their good-case latency, the latency when the network is synchronous from the start and the leader of the first view is correct, is six and five message delays, respectively.
Forget-IT~\cite{forgetit} reduces it to three, which is optimal~\cite{anrx21}.
\coda builds on Forget-IT (\Cref{sec:coda:differences}).
Simple-IT~\cite{simpleit26} replaces quorum certificates by reliable broadcast and a reliable notification primitive.
With Bracha's reliable broadcast, its good-case latency is four message delays, one more than that of Forget-IT, but its design allows pipelining.

\para{Consensus with post-quantum signatures.}
Protocols can keep their certificates with post-quantum signatures, but at a cost.
Post-quantum signatures can be aggregated, although the aggregates are much larger than a BLS aggregate signature and take much longer to compute~\cite{dqs26,aerie26}.
Certificates can also stay lists of ordinary signatures: in HotStuff-style protocols~\cite{hotstuff19}, the leader collects the votes and disseminates the certificate together with its next proposal, for instance with AVID~\cite{avid05}.
The drawback of these protocols is latency: every round of votes passes through the leader, so a decision typically takes more message delays than in protocols in which validators send their votes to all.
The distributed quorum signature (DQS)~\cite{dqs26} instead stops sending certificates: it turns them from network messages into local events, weak certificates for safety and strong ones for liveness, and is built from ordinary signatures and a Bracha-style broadcast of approvals.
It takes the place of quorum signatures in an existing protocol: a validator that creates a weak certificate broadcasts an approval for it.
Quantumglow, the post-quantum version of Alpenglow, follows this approach~\cite[Section~5]{alpenglow}.

\section*{Use of AI Tools}
The protocols and the main ideas of this paper are our own. We used large language models to help draft and revise parts of the text and the simulator. The authors take full responsibility for the content of this paper.

\bibliographystyle{splncs04}
\bibliography{references}

\clearpage
\appendix
\crefalias{section}{appendix}\crefalias{subsection}{subappendix}

\section{Proofs for \qchorus}\label{app:slot}

This appendix proves the lemmas of \Cref{sec:protocol}.

\subsection{Core Properties}

A correct validator finalizes on the fast path (line~\ref{ln:qv:fastdecide}), on the fallback path (line~\ref{ln:qv:fallbackdecide}), or through the halting gadget (lines~\ref{ln:qv:echo} and~\ref{ln:qv:halt}).
We first show that a finalization through the halting gadget repeats a finalization on one of the two paths.

\begin{claim}[Finalizations through the halting gadget]\label{clm:qv:gadget}
If a correct validator finalizes a \metablock $x$ through the halting gadget at time $t$, then some correct validator finalized $x$ on the fast path or on the fallback path by time $t$.
\end{claim}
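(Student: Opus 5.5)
We argue by considering the earliest \textsc{Finalized} broadcast for $x$ by a correct validator. A correct validator broadcasts $\langle \textsc{Finalized}, x \rangle$ only in four places: on the fast path (line~\ref{ln:qv:fastdecide}), on the fallback path (line~\ref{ln:qv:fallbackdecide}), and in the two gadget rules (lines~\ref{ln:qv:echo} and~\ref{ln:qv:halt}). In each case the broadcast happens in the same handler as, and at the same instant as, its finalization of $x$. Moreover, by Lemma~\ref{lem:qv:complexity}, each correct validator initiates at most one such broadcast.

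Suppose a correct validator $p$ finalizes $x$ through the gadget at time $t$. Then by time $t$ it has received $\langle \textsc{Finalized}, x \rangle$ from at least $f+1$ distinct validators. Since at most $f$ validators are Byzantine, one of these senders is correct, so some correct validator broadcast $\langle \textsc{Finalized}, x \rangle$ by time $t$. Let $q$ be the correct validator whose broadcast of $\langle \textsc{Finalized}, x \rangle$ is initiated earliest, at time $t_0 \le t$.

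The claim then reduces to showing that $q$ finalized $x$ on the fast path or the fallback path. Suppose instead that $q$ finalized through a gadget rule. Then, at the moment that rule fired, $q$ already held $f+1$ \textsc{Finalized} messages for $x$. One of them came from a correct validator whose broadcast was initiated before $q$'s handler ran. This contradicts the minimality of $t_0$. Hence $q$ finalized $x$ at line~\ref{ln:qv:fastdecide} or~\ref{ln:qv:fallbackdecide} at time $t_0 \le t$.

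The main obstacle is making the notion of \emph{earliest} rigorous. Several handlers may run at the same clock time, so I would order broadcast initiations by handler execution, a total order on events, rather than by time. Handlers run atomically, and a message can only be received after its broadcast is initiated, so the received messages strictly precede the receiving handler. Crashes do not break this argument. The relevant state (messages sent and received \textsc{Finalized} messages) is persistent, and finalization and the broadcast occur in one handler, so no crash can separate them.
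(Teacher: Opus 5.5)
Your proof is correct and takes the same approach as the paper. Both arguments take the first correct validator to finalize $x$ (equivalently, the first to broadcast $\langle \textsc{Finalized}, x \rangle$, since the two happen in the same handler) and observe that a finalization through the gadget would need a \textsc{Finalized} message from an earlier correct validator, contradicting minimality; your ordering of events by handler execution and your remark on crashes only make explicit what the paper leaves implicit.
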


\begin{proof}
A correct validator broadcasts $\langle \textsc{Finalized}, x \rangle$ exactly when it finalizes $x$.
Finalizing $x$ through the halting gadget requires $\langle \textsc{Finalized}, x \rangle$ messages from $f+1$ validators, at least one of them correct, which therefore finalized $x$ earlier.
Hence, the first correct validator to finalize $x$ did so on the fast path or on the fallback path, and if a correct validator finalizes $x$ through the halting gadget at time $t$, this first finalization happened by time $t$.
\qed
\end{proof}

The proofs below compare finalized \metablocks entry by entry.
A speculative or fast-path finalization does not take the values of the fast entries as they are: before finalizing, a validator replaces a root whose proposal does not re-encode to it by $\bot$ (lines~\ref{ln:qv:speccheck} and~\ref{ln:qv:fastcheck}).
For a value $v$, a root or $\bot$, we call the result of this check the \emph{checked value} of $v$: $\bot$ if $v$ is a root whose proposal does not re-encode to $v$, and $v$ otherwise.
It does not depend on which chunks are used (see finalizing on the fast path in \Cref{sec:protocol}).

\qvagreement*
\begin{proof}
It suffices to show that correct validators finalize the same \metablock, as a \metablock determines its block.
Fix a proposer $\proc_\pid$, and consider two correct validators finalizing \metablocks $x$ and $y$.
By \Cref{clm:qv:gadget}, some correct validator also finalizes $x$ on the fast path or on the fallback path, and so does one for $y$; we may therefore assume that both finalizations are made on these two paths.
To prove the lemma, we show that  $x[\pid] = y[\pid]$, going through the three possible combinations of paths.
\begin{compactitem}
   \item If both validators finalize on the fast path, quorum intersection settles the matter: there are $2f+1$ path votes whose entry for $\proc_\pid$ is  a fast entry for some value $v$ whose checked value is $x[\pid]$, and $2f+1$ whose entry for $\proc_\pid$ is  a fast entry for some value $w$ whose checked value is $y[\pid]$.
   The two sets of senders intersect in a correct validator, which issued a single path vote (\Cref{lem:qv:complexity}), and thus a single entry for $\proc_\pid$.
   Hence, $v = w$, and $x[\pid] = y[\pid]$.

   \item If both validators finalize on the fallback path,  the agreement property of the fallback consensus gives them the same output. Both finalize this output after replacing its entries $\ALLSAFE$ by $\bot$ (line~\ref{ln:qv:fallbackdecide}), so $x = y$.

   \item The interesting case is the mixed one: without loss of generality, say $x$ is finalized on the fast path and $y$ on the fallback path.
   As $x$ is finalized on the fast path, there exist $2f+1$ path votes whose entry for $\proc_\pid$ is  a fast entry for some value $v$ whose checked value is $x[\pid]$.
   By quorum intersection, any clean set for $\proc_\pid$, being a set of $2f+1$ accepted entries for $\proc_\pid$ from different validators, contains at least one of these fast entries: the two sets of senders intersect in a correct validator, whose unique entry for $\proc_\pid$ is a fast entry for  $v$.
   The adoption rule then leaves little choice.
   If  $v = \bot$, every correct validator adopts $\bot$ for $\proc_\pid$.
   If  $v$ is a root $\mroot$, every correct validator decodes the proposal from the $f+1$ chunks that corroborated the fast entry and re-encodes it. As the result of the check does not depend on the chunks used, either every correct validator adopts $\mroot$ (if the check passes) or every correct validator adopts $\bot$ (if it fails).
   In all cases, all correct validators that input a \metablock{} input the same value, namely the checked value of $v$, for $\proc_\pid$.
   Strong unanimity now ensures that no correct validator outputs $\ALLSAFE$ for $\proc_\pid$, while obligation ensures that the output value for $\proc_\pid$ was input by a correct validator. Hence,  $y[\pid]$ is the checked value of $v$, that is, $y[\pid] = x[\pid]$.
\end{compactitem}
In every case,  $x[\pid] = y[\pid]$: the two validators finalize the same \metablock, and therefore the same block, which concludes the proof.
\qed
\end{proof}

\qvspeculation*
\begin{proof}
Suppose a correct validator $\proc_s$ speculatively finalizes a \metablock $x$, and fix a proposer $\proc_\pid$. Let $v$ be the value of the fast entry of $\proc_s$ for $\proc_\pid$, a Merkle root or $\bot$, and let $\hat{v}$ be its checked value, so that $x[\pid] = \hat{v}$ (line~\ref{ln:qv:speculate}).
Throughout the proof, we say that a \textsc{Vote} message \emph{matches} $v$ if its entry for $\proc_\pid$ is $\langle \pid, v \rangle$: a positive entry for $v$ when $v$ is a root, and a negative entry when $v = \bot$.
The speculation rests on $2f+1$ \textsc{Vote} messages matching $v$, sent by $2f+1$ distinct validators, and $\proc_s$ holds all of them (lines~\ref{ln:qv:fastroot}, \ref{ln:qv:fastbot}, and~\ref{ln:qv:speculate}).

A finalization through the halting gadget repeats one on the fast path or on the fallback path (\Cref{clm:qv:gadget}), so it suffices to consider these two paths.

Observe first that every fast-path finalization rests, for $\proc_\pid$, on fast entries for $v$, and thus finalizes $\hat{v}$ for $\proc_\pid$.
Namely, a fast-path finalization rests on $2f+1$ \textsc{PathVote}s whose entries for $\proc_\pid$ are fast entries for the same value, and at least one of them comes from a correct validator. That validator issued its fast entry upon receiving $2f+1$ \textsc{Vote} messages matching the entry's value, while $\proc_s$ holds $2f+1$ \textsc{Vote} messages matching $v$. The two sets of senders intersect in a correct validator, which sent a single \textsc{Vote} (\Cref{lem:qv:complexity}), so the fast entry is for $v$.
A reversion, therefore, can only originate from the fallback path.

On the fallback path, the proof relies on signatures. A \textsc{Vote} message is signed by its sender as a whole: it carries a single signature over the entire vector of entries.
Consequently, two \textsc{Vote} messages from the same validator whose entries for $\proc_\pid$ differ are two distinct messages signed by that validator, which is a transferable proof of its misbehavior.
In particular, if $f+1$ distinct validators send \textsc{Vote} messages that do \emph{not} match $v$, then, as $(f+1) + (2f+1) = 3f+2 > n$, one of them also sent one of the $2f+1$ \textsc{Vote}s matching $v$ that $\proc_s$ holds, and whoever holds the two messages holds the proof.
The argument below produces exactly such a pair of sets: the $2f+1$ matching \textsc{Vote}s held by $\proc_s$, and $f+1$ non-matching \textsc{Vote}s held by some correct validator.

We claim that, short of provable misbehavior, no correct validator inputs to the fallback consensus a value $y$ for $\proc_\pid$ with $y \neq \hat{v}$.
Strong unanimity and obligation, applied to proposer $\proc_\pid$, then ensure that the fallback consensus outputs $\hat{v}$, and the lemma follows: every correct validator that inputs a \metablock inputs $\hat{v}$ for $\proc_\pid$, so strong unanimity rules out $\ALLSAFE$, and obligation gives $\hat{v}$. As $\proc_\pid$ is arbitrary and $x[\pid] = \hat{v}$, every \metablock that a correct validator finalizes is $x$ unless misbehavior can be proven, and its block is then the block that $\proc_s$ speculatively finalized.

To prove this claim, suppose that a correct validator $\proc_i$ inputs a value $y$ for $\proc_\pid$ with $y \neq \hat{v}$, and recall the three ways in which it may have adopted $y$ (line~\ref{ln:qv:adopt}).
This implies that $y \neq v$: a root that a correct validator inputs passes the check, as the paragraph on data availability shows, so $y = v$ would give $\hat{v} = v = y$.
In each of the three scenarios below, we show that some correct validator holds $f+1$ \textsc{Vote} messages that do not match $v$, which gives the proof of misbehavior described above.
\begin{compactitem}
   \item In the first scenario, $\proc_i$'s clean set for $\proc_\pid$ holds a \emph{fast entry} for $y$, and $y$ is either $\bot$ or a root that passed the decode--re-encode check.
   For that fast entry to enter the clean set, it had to be accepted: $\proc_i$ received $f+1$ \textsc{Vote} messages matching $y$ (line~\ref{ln:qv:accept}).
   As $y \neq v$, these messages do not match $v$, and $\proc_i$ holds them.

   \item In the second scenario, $\proc_i$'s clean set holds a fast entry for a root $\mroot$ that \emph{failed} the check, so $\proc_i$ adopted $y = \bot$.
   If $\mroot \neq v$, the $f+1$ \textsc{Vote}s that corroborated the fast entry do not match $v$, and $\proc_i$ holds them.
   If $\mroot = v$, then $v$ fails the check as well, as its result does not depend on the chunks used, so $\hat{v} = \bot = y$, which contradicts $y \neq \hat{v}$.

   \item In the third scenario, $\proc_i$'s clean set consists of \emph{fallback entries} alone.
   If $y$ is a root, the clean set holds $f+1$ fallback entries for $y$, at least one issued by a correct validator $\proc_w$, which issued it only upon receiving $f+1$ \textsc{Vote} messages with a positive entry for $y$ (line~\ref{ln:qv:decode}).
   As $y \neq v$, these $f+1$ messages do not match $v$, and $\proc_w$ holds them.

   If $y = \bot$, then $\hat{v} \neq \bot$: $v$ is a root $\mroot$ with a valid encoding.
   Moreover, fewer than $f+1$ fallback entries of the clean set carry $\mroot$, for otherwise $\proc_i$ would have adopted $\mroot$.
   The clean set holds $2f+1$ entries from distinct validators, of which at least $f+1$ come from correct validators, so some correct validator $\proc_w$ issued a fallback entry for $\proc_\pid$ that does not carry $\mroot$.
   Consider the \textsc{Vote} messages that $\proc_w$ held when it issued its \textsc{PathVote}, at least $2f+1$ of them (line~\ref{ln:qv:pathvote}). We claim that at least $f+1$ of them are not positive for $\mroot$.
   If they hold fewer than $f+1$ positive entries for $\mroot$, the claim is immediate.
   Otherwise, they hold $f+1$ positive entries for $\mroot$.
   As the encoding is valid, decoding them would have yielded a fallback entry for $\mroot$, which $\proc_w$ did not issue. The rule (line~\ref{ln:qv:decode}) allows this only if $\proc_w$ decoded another root instead, that is, only if some $\mroot' \neq \mroot$ also gathered $f+1$ positive entries at $\proc_w$.
   Those $f+1$ messages, positive for $\mroot'$, are not positive for $\mroot$.
   In both cases, $\proc_w$ holds $f+1$ \textsc{Vote} messages that do not match $v$, as required.
\end{compactitem}
In every case that does not contradict $y \neq \hat{v}$, a correct validator holds $f+1$ signed \textsc{Vote} messages that do not match $v$, while $\proc_s$ holds $2f+1$ signed \textsc{Vote} messages that do.
Both sets remain in persistent memory across crashes.
Together, the two sets expose a validator that signed two different \textsc{Vote} messages, which is precisely the misbehavior the lemma promises to prove.
\qed
\end{proof}

\qvcensorship*
\begin{proof}
Fix a correct proposer $p_j$.
Let $\mroot_\pid$ denote the Merkle root of $\proc_\pid$'s proposal.
Every correct validator receives its chunk, with a valid Merkle proof under $\mroot_\pid$ and a header signed by $\proc_\pid$, by the deadline, while it participates: as the slot is timely, $\proc_\pid$ initiates its send upon starting, by time $D - \Delta \geq \mathrm{GST}$ (line~\ref{ln:qv:disseminate}), and never abandons it, and every correct validator participates from $D - \Delta$ on, so the delay bound of \Cref{sec:model} delivers the chunk during $[D - \Delta, D]$, and the validator records it (line~\ref{ln:qv:chunk}) unless it has halted by then.
As channels are authenticated and $\proc_\pid$ sends each validator a single chunk, every \textsc{Vote} that a correct validator broadcasts thus includes a positive entry $\langle \pid, \mroot_\pid \rangle$. Hence, every fast entry for $\proc_\pid$ that a correct validator issues names $\mroot_\pid$: it rests on $2f+1$ \textsc{Vote}s with matching entries for $\proc_\pid$, at least $f+1$ of them from correct validators.
Moreover, as $\proc_\pid$ is correct, its encoding is valid: any $f+1$ chunks under $\mroot_\pid$ decode to its proposal, which re-encodes to $\mroot_\pid$.
It suffices to show that every \metablock $x$ that a correct validator finalizes satisfies $x[\pid] = \mroot_\pid$, as the block of such a \metablock contains the proposal of $\proc_\pid$.

Consider first a correct validator finalizing a \metablock $x$ on the fast path.
It holds $2f+1$ \textsc{PathVote}s whose entry for $\proc_\pid$ is  a fast entry for some value $v$ whose checked value is $x[\pid]$. At least one of them comes from a correct validator, whose fast entry for $\proc_\pid$ names $\mroot_\pid$.
Hence,  $v = \mroot_\pid$, and, as the encoding is valid, $x[\pid] = \mroot_\pid$.

The fallback path is no more accommodating.
First, no correct validator ever accepts a fast entry for $\proc_\pid$ naming a value $y \neq \mroot_\pid$: acceptance requires $f+1$ \textsc{Vote}s whose entry for $\proc_\pid$ matches $y$, whereas every correct validator's entry for $\proc_\pid$ is positive for $\mroot_\pid$, leaving at most $f$ such \textsc{Vote}s.
Second, every fallback entry for $\proc_\pid$ that a correct validator issues names $\mroot_\pid$: the validator holds at least $2f+1$ \textsc{Vote}s, at least $f+1$ of them from correct validators and thus positive for $\mroot_\pid$, and only the at most $f$ Byzantine validators can send \textsc{Vote}s that are not positive for $\mroot_\pid$, whether negative or positive for another root (a \textsc{Vote} carries no header signed by the proposer, so the checks of line~\ref{ln:qv:allornothing} allow any root), so no other root reaches $f+1$ positive entries. It therefore decodes the proposal from chunks under $\mroot_\pid$, and the proposal re-encodes to $\mroot_\pid$.
Consequently, every set of $2f+1$ accepted entries for $\proc_\pid$ from different validators is clean, its fast entries name $\mroot_\pid$, and at least $f+1$ of its entries, those from correct validators, are fast or fallback entries for $\mroot_\pid$.
The adoption rule therefore returns $\mroot_\pid$ for every correct validator's clean set for $\proc_\pid$: if the set holds a fast entry, the entry names $\mroot_\pid$, and the decode--re-encode check passes, the encoding being valid; otherwise, at least $f+1$ of its fallback entries, those from correct validators, name $\mroot_\pid$, and only the at most $f$ entries from Byzantine validators can name another root or $\bot$, so no other root reaches $f+1$ entries.
Every correct validator thus adopts $\mroot_\pid$ for $\proc_\pid$, and inputs a \metablock whose entry for $\proc_\pid$ is $\mroot_\pid$, if it inputs one.
Strong unanimity for proposer $\proc_\pid$ ensures that no correct validator outputs $\ALLSAFE$ for $\proc_\pid$, and obligation ensures that the output value for $\proc_\pid$ was input by a correct validator. Hence, the output \metablock $x$ satisfies $x[\pid] = \mroot_\pid$.
A finalization through the halting gadget repeats one on the two paths (\Cref{clm:qv:gadget}), so its \metablock satisfies this as well.

In both cases, every \metablock $x$ that a correct validator finalizes satisfies $x[\pid] = \mroot_\pid$, and its block contains the proposal of $\proc_\pid$.
\qed
\end{proof}

Once a correct validator halts, the halting gadget makes every correct validator that participates finalize; the proofs of termination and accountability below use this.

\begin{claim}[Finalization after a halt]\label{clm:qv:halting}
If a correct validator halts at time $H$, then every correct validator that starts participating at time $S$ finalizes by time $\max(S, H, \mathrm{GST}) + \Delta$.
\end{claim}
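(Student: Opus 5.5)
The argument traces the $2f+1$ \textsc{Finalized} messages behind the halt at time $H$ and uses the forever-alive property of broadcasts together with the post-GST delay bound. Let $\proc_h$ be the correct validator that halts at $H$. By line~\ref{ln:qv:halt}, at time $H$ it has received $\langle \textsc{Finalized}, x \rangle$ from $2f+1$ validators for one \metablock $x$. At least $f+1$ of these senders are correct. Each of them broadcast $\langle \textsc{Finalized}, x \rangle$ upon finalizing, so by time $H$ each has initiated this broadcast. Since \Cref{alg:qv1} never abandons a broadcast, not even upon halting, and crashes do not abandon broadcasts (\Cref{sec:model}), these $f+1$ messages are forever-alive. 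In particular they are alive throughout $[\tau, \max(\tau,\mathrm{GST})+\Delta]$ for $\tau = \max(S,H)$.

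Now fix a correct validator $\proc_r$ that starts participating at time $S$. I will apply the delay bound of \Cref{sec:model} with $\tau = \max(S, H)$. It yields that $\proc_r$ receives all $f+1$ correct \textsc{Finalized} messages for $x$ during $[\max(S,H,\mathrm{GST}), \max(S,H,\mathrm{GST})+\Delta]$. These receptions happen at or after $S$, so $\proc_r$ is participating and keeps them; the model's convention that retransmission redelivers messages dropped before participation covers exactly this. Moreover, the \textsc{Finalized} messages are stored in persistent memory (``Received \textsc{Finalized} messages''), and GST-onward validators do not crash, so nothing is lost. Once $\proc_r$ holds $f+1$ matching messages, line~\ref{ln:qv:echo} fires and $\proc_r$ finalizes $x$ unless it already finalized, in either case by $\max(S,H,\mathrm{GST})+\Delta$.

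Two side cases remain. First, $\proc_r$ may have halted before receiving these messages; halting requires finalizing first, so that case is immediate. Second, $\proc_r$ may have earlier received some message from a Byzantine validator, but at most one \textsc{Finalized} message per validator is stored. This does not block counting the correct ones, which suffices. A Byzantine message stored for a different value only adds to a different count, and any finalization it triggers still counts as finalizing.

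The main subtlety is justifying the application of the delay bound when the messages were initiated before $\tau$, possibly before GST, and possibly before $\proc_r$ participated. The model's clause ``including messages received and lost before $\tau$'' handles this. I expect to spell out that the bound is applied with $\tau=\max(S,H)$, which is at or after both the initiation of every relevant broadcast and the start of $\proc_r$'s participation.
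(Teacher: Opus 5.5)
Your proof is correct and follows essentially the same route as the paper's. You take the $f+1$ correct senders among the $2f+1$ \textsc{Finalized} messages behind the halt, apply the delay bound with $\tau = \max(S, H)$ to their forever-alive broadcasts, and conclude through the $f+1$ echo rule, treating an already-halted validator the same way the paper does; your remark about Byzantine \textsc{Finalized} messages is harmless extra care that the paper leaves implicit.
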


\begin{proof}
By time $H$, the halting validator received $\langle \textsc{Finalized}, x \rangle$ messages, for some \metablock $x$, from $2f+1$ validators, at least $f+1$ of them correct (line~\ref{ln:qv:halt}).
These correct validators initiated their broadcasts by time $H$ and never abandon them. By the delay bound of \Cref{sec:model} with $\tau = \max(S, H)$, a correct validator that starts participating at time $S$ therefore receives their messages during $[\max(S, H, \mathrm{GST}), \max(S, H, \mathrm{GST}) + \Delta]$.
This interval starts no earlier than $S$, so the validator participates when the messages arrive, unless it has halted.
A correct validator that has halted by then has finalized, as it finalizes when it halts.
Any other correct validator keeps the messages in persistent memory, and thus holds $f+1$ of them by time $\max(S, H, \mathrm{GST}) + \Delta$ and finalizes by then (line~\ref{ln:qv:echo}).
\qed
\end{proof}

\qvtermination*
\begin{proof}
 If some correct validator halts, every correct validator finalizes by \Cref{clm:qv:halting}, as it eventually participates. Otherwise, no correct validator halts, so that all of them keep taking part in the fallback consensus, and it suffices to prove that every correct validator eventually inputs a \metablock to the fallback consensus: from there, the termination of the fallback consensus carries the day, as a correct validator finalizes the output \metablock (line~\ref{ln:qv:fallbackdecide}).
We prove this below.

Every correct validator eventually participates and broadcasts a \textsc{Vote} (line~\ref{ln:qv:vote}), so every correct validator eventually holds $2f+1$ \textsc{Vote}s and, once time reaches $D + \Delta$ (or earlier, see Step~3), forms an entry for every proposer and broadcasts its \textsc{PathVote} (line~\ref{ln:qv:pathvote}).
Fix a proposer $\proc_\pid$.
Crucially, if a correct validator issues a fast entry for $\proc_\pid$, then every correct validator eventually accepts it.
Namely, the issuer formed its fast entry upon receiving $2f+1$ \textsc{Vote}s whose entry for $\proc_\pid$ matches the fast entry, at least $f+1$ of which come from correct validators, and correct validators keep their \textsc{Vote} broadcasts forever-alive.
The $f+1$ corroborating \textsc{Vote}s thus eventually reach every correct validator, and the fast entry is accepted (line~\ref{ln:qv:accept}).
Fallback entries, recall, are accepted the moment they are received.
Consequently, every correct validator eventually accepts the entries for $\proc_\pid$ of \emph{all} correct validators, of which there are at least $2f+1$.

Once that happens, if not before, the accepted entries for $\proc_\pid$ contain a clean set.
Indeed, no two correct validators issue fast entries for $\proc_\pid$ naming different values: each rests on $2f+1$ \textsc{Vote}s with matching entries for $\proc_\pid$, and two such sets of \textsc{Vote}s intersect in a correct validator, which sent a single \textsc{Vote} and thus a single entry for $\proc_\pid$.
The entries of the correct validators for $\proc_\pid$ thus form a set of $2f+1$ accepted entries whose fast entries all agree: a clean set.
Hence, every correct validator eventually holds a clean set for every proposer, and, once time has reached $D + 2\Delta$, adopts a value for each of them (line~\ref{ln:qv:adopt}).
The adoption rule is a local computation that always completes: where it calls for decoding, the $f+1$ chunks it needs arrived with the \textsc{Vote}s that corroborated the fast entry in question.
Every correct validator therefore inputs a \metablock to the fallback consensus, which is all we needed.
\qed
\end{proof}

\qveventualhalting*
\begin{proof}
For the time bound, suppose that every correct validator has finalized by time $T$.
All correct validators finalize the same \metablock $x$ (see the proof of \Cref{lem:qv:agreement}), and each of them broadcasts $\langle \textsc{Finalized}, x \rangle$ when it finalizes.
By time $T$, the at least $2f+1$ correct validators have thus initiated these broadcasts, which are forever-alive and so reach every correct validator by time $\max(T, \mathrm{GST}) + \Delta$.
A correct validator that has not halted by then holds $\langle \textsc{Finalized}, x \rangle$ from $2f+1$ validators, participates, as it has finalized, and halts by that time (line~\ref{ln:qv:halt}).

For the first statement, if every correct validator starts participating in the slot, every correct validator eventually finalizes (\Cref{lem:qv:termination}), so there is a time $T$ by which all of them have, and the bound applies.
\qed
\end{proof}

\subsection{Complexity}

\qvcomplexity*
\begin{proof}
We inspect the three message types (\Cref{alg:qv1}).
A \textsc{Chunk} carries the proposer's signature on its header $\langle \pid, \mroot_\pid \rangle$, together with a symbol and a Merkle proof.
A \textsc{Vote} carries the sender's single signature over its entire vector of entries.
Its entries are unsigned, and the chunks attached to positive entries consist of a root, a symbol, and a Merkle proof: the proposer's signed header, which the sender verified upon receiving the chunk, is stored locally and never forwarded (line~\ref{ln:qv:vote}).
A \textsc{PathVote} only carries the sender's single signature over its entire vector of entries (line~\ref{ln:qv:pathvote}).
As for the number of messages, Step~1 runs at most once at each proposer and sends one \textsc{Chunk} to each validator (line~\ref{ln:qv:disseminate}). Step~2 runs at most once and broadcasts one \textsc{Vote} (line~\ref{ln:qv:vote}). Step~3 runs at most once and broadcasts one \textsc{PathVote} (line~\ref{ln:qv:pathvote}).
No rule initiates any of these sends or broadcasts a second time, also after recovery. Retransmissions repeat a send or broadcast that was already initiated, and the counts exclude them. The halting gadget adds at most one \textsc{Finalized} message per correct validator, which carries no signature. \qed
\end{proof}

\subsection{Accountability}

\qvacchalf*
\begin{proof}
As the slot is timely, every correct validator starts participating in it by time $D - \Delta \geq \mathrm{GST}$, and none crashes from then on (\Cref{sec:model}).
Let $t \leq \lfloor f/2 \rfloor$ denote the number of Byzantine validators, so that at least $n - t \geq 2f + 1 + \lceil f/2 \rceil$ validators are correct.

We first show that halting does not interfere with the argument below.
If some correct validator halts by time $D + \Delta$, every correct validator finalizes by $D + 2\Delta$ (\Cref{clm:qv:halting}, as $\mathrm{GST} \leq D - \Delta$ and every correct validator starts participating by $D - \Delta$), and there is nothing to prove. Otherwise, no correct validator halts by $D + \Delta$, and every correct validator broadcasts its \textsc{PathVote} by then, as used below. A correct validator that halts later has finalized before halting; in particular, a correct validator that has not finalized by $D + 2\Delta$ has not halted by then.

Next, we show that some correct validator $\proc_w$ issued a fallback entry for some proposer $\proc_\pid$.
Let $\proc_i$ be a correct validator that has not finalized by time $D + 2\Delta$, and hence has not halted by then.
As the slot is timely, $\proc_i$ has received, by time $D + 2\Delta$, the \textsc{PathVote}s of all correct validators, broadcast by time $D + \Delta$.
If, for every proposer, the entries of all correct validators were fast entries naming the same value, $\proc_i$ would hold, for every proposer, $n - t \geq 2f + 1$ \textsc{PathVote}s with matching fast entries, and would finalize by time $D + 2\Delta$ (line~\ref{ln:qv:fastdecide}), the chunks it needs having arrived with the \textsc{Vote}s by $D + \Delta$, contrary to its choice.
Hence, for some proposer $\proc_\pid$, either two correct validators issued fast entries for $\proc_\pid$ naming different values, or some correct validator issued a fallback entry for $\proc_\pid$.
The former is impossible: each of the two fast entries rests on $2f + 1$ \textsc{Vote}s whose entries for $\proc_\pid$ match its value, the two sets of senders share at least $(2f+1) + (2f+1) - n = f + 1 > t$ validators, hence a correct one, and a correct validator sends a single \textsc{Vote}, and thus a single entry for $\proc_\pid$.
Therefore, some correct validator $\proc_w$ issued a fallback entry for $\proc_\pid$.

We now examine what $\proc_w$ saw.
As $\proc_w$ issued a fallback entry, it formed its path vote at time $D + \Delta$, since a path vote formed earlier holds fast entries only.
As the slot is timely, by time $D + \Delta$, $\proc_w$ has received the \textsc{Vote}s of all $n - t$ correct validators, broadcast by time $D$.
Let $P$ denote the number of these \textsc{Vote}s whose entry for $\proc_\pid$ is positive, and $N$ the number whose entry for $\proc_\pid$ is negative, so that $P + N = n - t \geq 2f + 1 + \lceil f/2 \rceil$.
As $\proc_w$ issued a fallback entry for $\proc_\pid$, it observed fewer than $2f + 1$ negative entries for $\proc_\pid$, and, for every root $\mroot$, fewer than $2f + 1$ positive entries naming $\mroot$ (lines~\ref{ln:qv:fastroot} and~\ref{ln:qv:fastbot}).
In particular, $N \leq 2f$, hence $P \geq \lceil f/2 \rceil + 1 \geq 1$: some correct validator formed a positive entry for $\proc_\pid$, which it does only upon receiving a \textsc{Chunk} whose header is signed by $\proc_\pid$ (line~\ref{ln:qv:chunk}), and which it stores.
The correct validators thus hold a \textsc{Chunk} signed by $\proc_\pid$.
Two cases remain.
\begin{compactitem}
   \item If the positive entries of correct validators for $\proc_\pid$ name two different roots, then two correct validators hold \textsc{Chunk} messages signed by $\proc_\pid$ with different roots: this is item~(1).

   \item Otherwise, all $P$ positive entries name the same root $\mroot$, so $P \leq 2f$, and $N = (n - t) - P \geq 2f + 1 + \lceil f/2 \rceil - 2f = \lceil f/2 \rceil + 1 \geq \lfloor f/2 \rfloor + 1$.
   These $N$ \textsc{Vote}s come from different correct validators, are signed by their senders, and $\proc_w$ holds them, as it received the \textsc{Vote}s of all correct validators by $D + \Delta$: together with the \textsc{Chunk} above, this is item~(2). \qed
\end{compactitem}
\end{proof}

\qvaccf*
\begin{proof}
As the slot is timely, every correct validator starts participating in it by time $D - \Delta \geq \mathrm{GST}$, and none crashes from then on (\Cref{sec:model}).
At least $2f+1$ validators are correct.
We first show that halting does not interfere with the argument below.
If some correct validator halts by time $D + 3\Delta$, every correct validator finalizes by $D + 4\Delta$ (\Cref{clm:qv:halting}, as $\mathrm{GST} \leq D - \Delta$ and every correct validator starts participating by $D - \Delta$), and there is nothing to prove. Otherwise, no correct validator halts by $D + 3\Delta$, and the broadcasts below are made by then. A correct validator that halts later has finalized before halting; in particular, a correct validator that has not finalized by $D + 4\Delta$ has not halted by then.

Next, we pin down the timeline.
Every correct validator broadcasts its \textsc{Vote} by time $D$ and its \textsc{PathVote} by time $D + \Delta$, and every correct validator receives them by times $D + \Delta$ and $D + 2\Delta$, respectively.
By time $D + 2\Delta$, moreover, every correct validator has accepted every entry of every correct \textsc{PathVote}: fallback entries are accepted on receipt, and a correct fast entry is corroborated by $f+1$ correct \textsc{Vote}s that arrived by $D + \Delta$.
As in the proof of termination, the entries of the correct validators form a clean set for every proposer, so every correct validator adopts a \metablock and inputs it to the fallback consensus at time $D + 2\Delta$ (line~\ref{ln:qv:input}). The pre-fallback broadcasts are forever-alive, and the view-$1$ broadcasts of \coda remain alive through $D + 4\Delta$: as all correct validators enter view~$1$ at time $D + 2\Delta$, none enters a higher view before $D + 7\Delta$ (by the proof of \Cref{clm:coda:view_floor}, which applies unchanged when validators halt), and a validator abandons a broadcast of \coda only upon entering a higher view (line~\ref{ln:rule_abandon}), not when it decides or halts.

Suppose that all correct validators input the same \metablock $a$.
Upon entering the leaderless view~$1$ at time $D + 2\Delta$, every correct validator broadcasts $\langle \VOTEONE, 1, a \rangle$ (\Cref{alg:fallback}).
By time $D + 3\Delta$, every correct validator has received $2f+1$ such messages, locks $a$, and broadcasts $\langle \VOTETWO, 1, a, \cdot \rangle$.
By time $D + 4\Delta$, every correct validator has received $2f+1$ $\VOTETWO$ messages with lock $a$ (line~\ref{ln:core:on_vote2_quorum_header}), and finalizes.
Consequently, if some correct validator has not finalized by time $D + 4\Delta$, the inputs of the correct validators are not all identical: two correct validators $\proc_i$ and $\proc_k$ adopted different values for some proposer $\proc_\pid$.
We fix this $\proc_\pid$ and distinguish two cases, according to the fast entries for $\proc_\pid$ issued by correct validators.
Recall that all such fast entries name the same value $v$: each rests on $2f+1$ \textsc{Vote}s matching it, and two such sets of \textsc{Vote}s intersect in a correct validator, which sent a single \textsc{Vote}.

\begin{compactitem}
   \item \emph{No correct validator issued a fast entry for $\proc_\pid$.}
   Then all $2f+1$ or more correct validators issued fallback entries for $\proc_\pid$, and, by time $D + 2\Delta$, every correct validator holds their signed \textsc{PathVote}s: the conclusion of the lemma.

   \item \emph{Some correct validator $\proc_c$ issued a fast entry for $\proc_\pid$, naming $v$.}
   If the clean set of $\proc_i$ or of $\proc_k$ for $\proc_\pid$ consists of fallback entries alone, that validator holds $2f+1$ signed \textsc{PathVote}s, from different validators, with fallback entries for $\proc_\pid$: the conclusion of the lemma.
   Otherwise, both clean sets hold fast entries, naming values $u_i$ and $u_k$.
   Adopting from a fast entry is a deterministic function of the value it names (line~\ref{ln:qv:adopt}): $\bot$ is adopted as $\bot$, and a root is adopted as itself if the decode--re-encode check passes and as $\bot$ if it fails, where the result of the check depends only on the root: every $f+1$ chunks under a valid encoding decode to the same proposal, and no chunks under an invalid one re-encode to it.
   Had $u_i = u_k$, the two validators would therefore have adopted the same value, decoding failure or not. Hence $u_i \neq u_k$, and at least one of them, say $u_i$, differs from $v$.
   The fast entry for $u_i$ was accepted by $\proc_i$, which therefore holds $f+1$ \textsc{Vote}s, from different validators, matching $u_i$ (line~\ref{ln:qv:accept}). Meanwhile, $\proc_c$ holds $2f+1$ \textsc{Vote}s, from different validators, matching $v$.
   As $(f+1) + (2f+1) = 3f+2 > n$, some validator sent both a \textsc{Vote} matching $u_i$ and a \textsc{Vote} matching $v$, that is, two \textsc{Vote}s with different entries for $\proc_\pid$. Validators $\proc_i$ and $\proc_c$ together hold them: two \textsc{Vote}s signed by the same validator that differ, a transferable proof of that validator's misbehavior. \qed
\end{compactitem}
\end{proof}

\section{Proofs for \coda}\label{app:coda}
This appendix proves the guarantees of \coda stated in \Cref{sec:coda:model}, under the model and conventions of \Cref{sec:coda:description}.

\subsection{Core Properties}

The proofs of agreement, strong unanimity, and obligation allow correct validators to stop taking part in \coda (\Cref{sec:coda:description}). They reason only about the events that correct validators cause: the messages they broadcast and accept, the views they enter, the variables they write, and their decisions. They trace each such event to the rule that caused it and to messages received, or events that occurred, before it, and none of them uses that a correct validator keeps running, which only the termination proofs assume. A validator that stops taking part causes no events from then on: it executes no handler, and its retransmissions repeat messages it broadcast before.

We start by proving that \coda satisfies agreement.

\paragraph{Agreement.}

We begin with a few structural observations about \coda, and then prove agreement.
The first observation is that the current view of a correct validator only grows.

\begin{claim} \label{clm:coda:monotonic}
If a correct validator $p_i$ enters a view $V_1 \geq 1$ at some time, and enters a view $V_2 \geq 1$ at a later time, then $V_2 > V_1$.
\end{claim}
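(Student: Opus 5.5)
The plan is to go through every rule of \Cref{alg:fallback} that makes a correct validator enter a view, and to check that each one moves $\mathit{view}_i$ strictly upward. Only three rules do so: the initial entry into view~$1$ upon starting (line~\ref{ln:core:init_enter_view1}), the rule that enters $V+1$ upon $2f+1$ echoes of some view $V \geq \mathit{view}_i$ (line~\ref{ln:vc:upon_enter_next_view}), and the catch-up rule that enters $V$ upon $f+1$ echoes for some $V > \mathit{view}_i$ (line~\ref{ln:vc:upon_catch_up}). Entering a view sets $\mathit{view}_i$ to that view (line~\ref{ln:core:on_enter_set_view}), and $\mathit{view}_i$ changes nowhere else. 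So, between consecutive entries, $\mathit{view}_i$ equals the most recently entered view.

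First, I handle the starting rule. It fires only once, when the validator starts \coda. Before that, no rule of \coda executes, so the entry into view~$1$ is the first entry of all.

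Next, I check each later entry, made while $\mathit{view}_i = V_1$, where $V_1$ is the last view entered. The next-view rule requires $V \geq \mathit{view}_i = V_1$ and enters $V+1 > V_1$. The catch-up rule requires $V > \mathit{view}_i = V_1$ and enters $V$. Chaining these steps by induction over the sequence of entries shows that the entered views are strictly increasing. This gives $V_2 > V_1$ for any two entries at distinct times.

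The only real obstacle is crashes. I must make sure that a recovering validator neither re-executes the starting rule nor resets $\mathit{view}_i$ to a smaller value. $\mathit{view}_i$ and the time of entry are kept in persistent memory (line~\ref{ln:core:ps_view}), as is the completion of once-only steps, such as having started. So after a crash the validator resumes with the same $\mathit{view}_i$, and the starting handler does not fire again. Crashes also strike only between handlers, so no entry is left half done with $\mathit{view}_i$ not yet updated. The comparisons above therefore always use the view most recently entered.
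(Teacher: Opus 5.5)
Your proof is correct and follows the paper's argument essentially step by step. Both use the same three entry sites, the observation that the two non-initial sites enter a view strictly above $\mathit{view}_i$, and the fact that recovery preserves $\mathit{view}_i$ without repeating the start; your added remark that crashes strike only between handlers is a refinement the paper leaves implicit.
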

\begin{proof}
Validator $p_i$ updates $\viewi$ exactly when it enters a view, setting $\viewi$ to the entered view (line~\ref{ln:core:on_enter_set_view}).
Views are entered at three places only. The first is line~\ref{ln:core:init_enter_view1}, upon starting, which happens once and precedes every other entry. The second is line~\ref{ln:vc:upon_enter_enter}, which enters a view $V + 1$ under the condition $V \geq \viewi$ (line~\ref{ln:vc:upon_enter_next_view}). The third is line~\ref{ln:vc:upon_catch_up_enter}, which enters a view $V$ under the condition $V > \viewi$ (line~\ref{ln:vc:upon_catch_up}).
The last two thus enter a view greater than $\viewi$. Recovery preserves $\viewi$ and does not repeat initialization.
Hence, $V_2 > V_1$.
\qed
\end{proof}

Next, we show that a correct validator commits to a single vote of each kind per view, and that its lock, once set, is the one its $\VOTETWO$ message carries.

\begin{claim} \label{clm:coda:unique_votes}
Let $p_i$ be a correct validator, and let $V \geq 1$ be a view.
Then:
\begin{compactenum}
    \item Validator $p_i$ writes $\voteone[V]$, $\votetwo[V]$, and $\locki[V]$ only at times at which $\viewi = V$, and at most once each. \label{item:coda-write-once}
    \item Validator $p_i$ broadcasts at most one $\langle \VOTEONE, V, \cdot \rangle$ and at most one $\langle \VOTETWO, V, \cdot, \cdot \rangle$ message, each at a time at which $\viewi = V$. \label{item:coda-one-message}
    \item If $p_i$ broadcasts $\langle \VOTETWO, V, x, G \rangle$, then $x$ is the value of $\locki[V]$ and $G$ is the value of $\guardi[V]$ at the time of the broadcast, and $\locki[V] = x$ holds from that time onward. \label{item:coda-vote2-carries-lock}
\end{compactenum}
\end{claim}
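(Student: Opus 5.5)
The plan is a direct inspection of \Cref{alg:fallback}: list every line that writes each of the three variables, read off the guard of that line, and then use the monotonicity of views (\Cref{clm:coda:monotonic}) together with the persistence of these variables across crashes.

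For item~\ref{item:coda-write-once}, I would first enumerate the writes. $\voteone[V]$ is written only at line~\ref{ln:core:on_enter_dummy_vote1}, upon entering view $1$ with $V = 1$, and at line~\ref{ln:core:on_prepare_broadcast_vote1} with index $\viewi$. $\votetwo[V]$ is written only at lines~\ref{ln:core:on_vote1_quorum_set_vote2} and~\ref{ln:core:on_timer_set_vote2}, both with index $\viewi$. $\locki[V]$ is written only at line~\ref{ln:core:on_vote1_quorum_set_lock}, with index $\viewi$. Every write therefore happens while $\viewi = V$. For view~$1$, the entry handler sets $\viewi = 1$ first (line~\ref{ln:core:on_enter_set_view}).

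To show each variable is written at most once, I would rely on the guards. Line~\ref{ln:core:on_prepare_broadcast_vote1} requires $\voteone[\viewi] = \NONE$. The lock and $\VOTETWO$ writes require $\votetwo[\viewi] = \NONE$, and the lock write occurs in the same handler as a $\VOTETWO$ write. The entry write for view~$1$ happens only once, because view~$1$ is entered only once. That follows from \Cref{clm:coda:monotonic} and from the fact that initialization is not repeated after recovery.

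The step I expect to be delicate is the interaction with crashes and with the variables of old views. The guard $\voteone[V] = \NONE$ is meaningful only if the variable has kept its value since the earlier write. This holds because these variables live in persistent memory and are retained for $\viewi$ and $\viewi - 1$ (line~\ref{line:persistant_memory}). Every write to index $V$ happens while $\viewi = V$, and by \Cref{clm:coda:monotonic} the validator never returns to $V$ after leaving it. So all writes to index $V$ fall within a single stay in view $V$, during which the variable is retained and survives crashes.

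The only other write to $\voteone[1]$ would be line~\ref{ln:core:on_prepare_broadcast_vote1} in view~$1$. That line requires a $\PREPARE$ from $\leader(1)$, and view~$1$ has no leader, so this rule never fires there; I would make this explicit.

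Item~\ref{item:coda-one-message} then follows, since each $\VOTEONE$ or $\VOTETWO$ broadcast occurs exactly in the handler that performs the corresponding write, in the same step. Retransmissions are not new broadcasts.

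For item~\ref{item:coda-vote2-carries-lock}, there are two cases. At line~\ref{ln:core:on_vote1_quorum_set_vote2}, the message is built from the current $\locki[\viewi]$ and $\guardi[\viewi]$ right after the lock is set. At line~\ref{ln:core:on_timer_set_vote2}, the message carries $\NONE$ and $\guardi[\viewi]$. There $\locki[V]$ is still $\NONE$, because a lock write would have been accompanied by a $\VOTETWO$ write, and the guard of the timeout line rules that out.

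After the broadcast, $\locki[V]$ is never written again, by the write-once part of item~\ref{item:coda-write-once}. Hence $\locki[V] = x$ from that point on, for as long as the variable is retained. The guard $\guardi[V]$ is written only upon entering $V$, so its value at broadcast time is the entry value.
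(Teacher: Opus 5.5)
Your proof is correct and follows essentially the same route as the paper. Both arguments enumerate the write sites, rely on their $\mathsf{none}$-guards and on the leaderlessness of view~$1$, use the monotonicity of views to get at-most-once writes, and derive items~2 and~3 from the fact that each broadcast happens in the handler that writes the corresponding variable. Your persistence argument, that all writes to index $V$ fall within the single stay in view $V$, when its variables are retained, is only a rephrasing of the paper's observation that the memory rule discards the variables of a view only once that view can never be re-entered.
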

\begin{proof}
We first prove part~(\ref{item:coda-write-once}).
The variable $\voteone[V]$ is written at line~\ref{ln:core:on_enter_dummy_vote1}, within the handler that enters view $V = 1$ and hence after $\viewi$ is set to $1$ (line~\ref{ln:core:on_enter_set_view}), and at line~\ref{ln:core:on_prepare_broadcast_vote1}, which acts on view $\viewi$.
The two sites are mutually exclusive: the second belongs to a rule triggered by a $\PREPARE$ message from $\leader(\viewi)$ (line~\ref{ln:core:on_prepare_header}), whereas view $1$ has no leader (line~\ref{ln:core:on_enter_dummy_view}).
The first site is reached only once, as view $1$ is entered only once (\Cref{clm:coda:monotonic}), and the second is guarded by $\voteone[\viewi] = \NONE$ (line~\ref{ln:core:on_prepare_if_no_vote1}).
The variable $\votetwo[V]$ is written at line~\ref{ln:core:on_vote1_quorum_set_vote2} and at line~\ref{ln:core:on_timer_set_vote2}. Both sites act on view $\viewi$, and both are guarded by $\votetwo[\viewi] = \NONE$ (lines~\ref{ln:core:on_vote1_quorum_if_no_vote2} and~\ref{ln:core:on_timer_if_no_vote2}).
The variable $\locki[V]$ is written at line~\ref{ln:core:on_vote1_quorum_set_lock} alone, within the same handler and under the same guard.
It remains to observe that these variables are never reset: the only mechanism that discards them is the memory rule (line~\ref{line:persistant_memory}), which discards the variables of a view once it falls below $\viewi - 1$. By \Cref{clm:coda:monotonic}, such a view is never entered again, and, as shown above, the variables of a view other than $\viewi$ are never written.

Part~(\ref{item:coda-one-message}) follows: each of these write sites broadcasts the message it has just written, while $\viewi = V$ by part~(\ref{item:coda-write-once}) (lines~\ref{ln:core:on_enter_dummy_vote1}, \ref{ln:core:on_prepare_broadcast_vote1}, \ref{ln:core:on_vote1_quorum_set_vote2}, and~\ref{ln:core:on_timer_set_vote2}), and no other line of \coda broadcasts a $\VOTEONE$ or a $\VOTETWO$ message.
Part~(\ref{item:coda-vote2-carries-lock}) follows as well. Both sites that create a $\VOTETWO$ message for view $V$ fill it with the current value of $\guardi[V]$. The site at line~\ref{ln:core:on_vote1_quorum_set_vote2} fills in the current value of $\locki[V]$, and the site at line~\ref{ln:core:on_timer_set_vote2} fills in $\NONE$, which is the current value of $\locki[V]$ there: $\locki[V]$ is written only in the handler that also writes $\votetwo[V]$ (line~\ref{ln:core:on_vote1_quorum_set_lock}), and $\votetwo[V] = \NONE$ at the timeout site (line~\ref{ln:core:on_timer_if_no_vote2}). By part~(\ref{item:coda-write-once}), $\locki[V]$ is never written again.
\qed
\end{proof}

The next two claims trace the values that correct validators vote for, and the locks that they accept, back to their origin.

\begin{claim} \label{clm:coda:vote1_guard}
If a correct validator $p_i$ broadcasts $\langle \VOTEONE, V, x \rangle$ for a view $V > 1$, then, for every proposer $p_j$, validator $p_i$ has received $f+1$ messages $\langle \ECHO, V - 1, j, x[j] \rangle$ from distinct validators.
\end{claim}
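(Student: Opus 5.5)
The claim follows by tracing the $\VOTEONE$ broadcast back to the rule that produced it. By \Cref{clm:coda:unique_votes}, part~(\ref{item:coda-one-message}), $p_i$ broadcasts $\langle \VOTEONE, V, x \rangle$ only at a time at which $\viewi = V$. There are only two broadcast sites for $\VOTEONE$: line~\ref{ln:core:on_enter_dummy_vote1} and line~\ref{ln:core:on_prepare_broadcast_vote1}. The first lies in the handler that enters view~$1$, so it is excluded because $V > 1$. Hence the broadcast happens at line~\ref{ln:core:on_prepare_broadcast_vote1}, inside the handler of line~\ref{ln:core:on_prepare_header}, with $\viewi = V$ and the message carrying the \metablock $x$ of the received $\PREPARE$.

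That handler fires only when $\justified(x)$ holds, evaluated at the time it fires, when $\viewi = V$. By line~\ref{ln:core:fn_justified_return}, $\justified(x)$ means that $p_i$ has received, for every proposer $p_j$, $f+1$ messages $\langle \ECHO, \viewi - 1, j, x[j] \rangle$, that is, $\langle \ECHO, V - 1, j, x[j] \rangle$. The count is from distinct validators by the counting convention of \Cref{sec:qv:description}, which \coda adopts. Because channels are authenticated, the senders are genuine.

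The only point that needs care is that ``received'' here refers to messages $p_i$ actually holds when the handler runs. Received \coda messages are not persistent and may have been discarded, either by a crash or by the rule that drops kept messages of views below the previous view. Either way, discarding only shrinks the set of held messages. So if the guard holds, $p_i$ did receive those $f+1$ echoes, which is all the claim asserts. Since the handler runs atomically and crashes strike only between handlers, the guard and the broadcast refer to the same state. I expect no real obstacle; the main risk is mis-identifying the broadcast site, which the exclusivity argument in \Cref{clm:coda:unique_votes} already settles.
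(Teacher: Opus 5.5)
Your proposal is correct and follows the paper's proof. For $V > 1$, the only site that creates a \VOTEONE message is the \textsc{prepare} handler, and its guard $\justified(x)$, evaluated while $\mathit{view}_i = V$, says exactly that $p_i$ has received $f+1$ messages $\langle \ECHO, V-1, j, x[j] \rangle$ for every proposer $p_j$. Your extra remarks on discarded messages and handler atomicity are harmless but not needed.
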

\begin{proof}
As established in the proof of \Cref{clm:coda:unique_votes}, the only site that creates a $\VOTEONE$ message for a view $V > 1$ is line~\ref{ln:core:on_prepare_broadcast_vote1}, and its rule is triggered only by a message $\langle \PREPARE, \viewi, x \rangle$ that is justified (line~\ref{ln:core:on_prepare_wait_guard}), where $\viewi = V$.
A \metablock{} $x$ is justified only if, for every proposer $p_j$, the validator has received $f+1$ messages $\langle \ECHO, V - 1, j, x[j] \rangle$ from distinct validators (line~\ref{ln:core:fn_justified_return}).
\qed
\end{proof}

\begin{claim} \label{clm:coda:accepted_lock}
If a correct validator accepts a message $\langle \VOTETWO, V, x, \cdot \rangle$ with $x \neq \NONE$, then some correct validator has broadcast $\langle \VOTEONE, V, x \rangle$.
\end{claim}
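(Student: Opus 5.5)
The plan is to unfold the acceptance rule and then apply the usual counting argument. By line~\ref{ln:vc:rule_accept_vote2_lock}, a correct validator $p_i$ accepts a message $\langle \VOTETWO, V, x, \cdot \rangle$ with $x \neq \NONE$ only after it has received $f+1$ messages $\langle \VOTEONE, V, x \rangle$. By the counting convention of \Cref{sec:qv:description}, these come from $f+1$ distinct validators. At most $f$ validators are Byzantine, so at least one sender, call it $p_k$, is correct.

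Next I would argue that $p_k$ really broadcast $\langle \VOTEONE, V, x \rangle$. Channels are authenticated, so a message that $p_i$ receives as coming from $p_k$ was sent by $p_k$. By the model, the only sends of a validator are broadcasts and point-to-point sends that it initiated, together with their retransmissions. \coda sends $\VOTEONE$ messages only by broadcast (\Cref{clm:coda:unique_votes}, part~(\ref{item:coda-one-message}), and the two write sites, lines~\ref{ln:core:on_enter_dummy_vote1} and~\ref{ln:core:on_prepare_broadcast_vote1}). Hence $p_k$ initiated a broadcast of exactly the message $\langle \VOTEONE, V, x \rangle$.

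The only delicate point is that the argument must survive crashes, and validators that stop taking part. First, messages received before a crash are erased, but acceptance happens at some moment at which $p_i$ holds the $f+1$ messages, so the counting applies at that moment. Second, a retransmission of $p_k$ still repeats a message that $p_k$ broadcast earlier, so $p_k$'s broadcast happened no later than $p_i$'s reception. These observations give the claim, and also the causal ordering that the later agreement proof will presumably use.
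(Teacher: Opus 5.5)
Your proposal is correct and follows the paper's own proof. The paper likewise unfolds the acceptance rule of line~\ref{ln:vc:rule_accept_vote2_lock} to obtain $f+1$ $\VOTEONE$ messages from distinct senders, at most $f$ of them faulty. Your extra remarks on authenticated channels, on $\VOTEONE$ being sent only by broadcast, and on crashes spell out what the paper leaves implicit.
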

\begin{proof}
Such a message is accepted only once its recipient has received $f+1$ messages $\langle \VOTEONE, V, x \rangle$ from distinct validators (line~\ref{ln:vc:rule_accept_vote2_lock}), at most $f$ of which are faulty.
\qed
\end{proof}

We now introduce the first decision of the execution.
If no correct validator ever decides, agreement holds vacuously. We therefore assume throughout the remainder of the agreement proof that some correct validator does decide.

We say that a correct validator \emph{decides in view $V$} if it triggers $\mathsf{decide}$ on the strength of $2f+1$ $\langle \VOTETWO, V, \cdot, \cdot \rangle$ messages (line~\ref{ln:core:on_vote2_quorum_header}). Its current view is then $V$ or $V+1$.

\begin{definition} [View $\vdec$]
Let $\vdec$ denote the smallest view in which a correct validator decides.
\end{definition}

Before naming the \metablock{} that view $\vdec$ decides, we check that there is only one.

\begin{claim} \label{clm:coda:unique_in_vdec}
If a correct validator decides a \metablock{} $x_1$ in view $\vdec$, and a correct validator decides a \metablock{} $x_2$ in view $\vdec$, then $x_1 = x_2$.
\end{claim}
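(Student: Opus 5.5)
The plan is a standard quorum-intersection argument on the $\VOTETWO$ messages of view $\vdec$, followed by an appeal to the uniqueness of correct locks within a view.

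First, deciding $x_1$ in view $\vdec$ means some correct validator received $2f+1$ messages $\langle \VOTETWO, \vdec, x_1, \cdot \rangle$ from distinct validators (line~\ref{ln:core:on_vote2_quorum_header}). Likewise, some correct validator received $2f+1$ messages $\langle \VOTETWO, \vdec, x_2, \cdot \rangle$. Since $n = 3f+1$, two sets of $2f+1$ senders share at least $f+1$ validators, hence at least one correct validator $p_c$. By \Cref{clm:coda:unique_votes}, part~(\ref{item:coda-one-message}), $p_c$ broadcasts at most one $\VOTETWO$ message in view $\vdec$. Channels are authenticated, so both received messages attributed to $p_c$ are that single message, and therefore $x_1 = x_2$.

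This already closes the claim. If a cross-check is wanted, one can also argue via locks. By part~(\ref{item:coda-vote2-carries-lock}) of \Cref{clm:coda:unique_votes}, each correct sender's lock equals the carried value. At least $f+1$ correct validators locked $x_1$, and at least $f+1$ locked $x_2$. Each lock rests on $2f+1$ $\VOTEONE$ messages of view $\vdec$, and two such quorums share a correct sender, which sends only one $\VOTEONE$ per view. That route is unnecessary here, though.

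The only subtle point is that a validator may decide in view $\vdec$ while its current view is $\vdec + 1$. The rule still counts messages tagged with view $\vdec$, so the intersection argument is unaffected. Otherwise I expect no real obstacle; the claim is essentially immediate from \Cref{clm:coda:unique_votes}.
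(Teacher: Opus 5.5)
Your proof is correct and follows the paper's argument: the two sets of $2f+1$ senders of $\VOTETWO$ messages of view $\vdec$ share a correct validator, and by \Cref{clm:coda:unique_votes}(\ref{item:coda-one-message}) its single $\VOTETWO$ message for $\vdec$ carries both $x_1$ and $x_2$. The lock-based cross-check is also sound, but, as you note, it is not needed.
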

\begin{proof}
Each of the two decisions requires the receipt of $2f+1$ $\langle \VOTETWO, \vdec, \cdot, \cdot \rangle$ messages carrying the decided \metablock, from distinct senders (line~\ref{ln:core:on_vote2_quorum_header}).
As $n = 3f+1$, the two sets of senders share at least $2(2f+1) - n = f+1$ validators, at least one of which is correct.
By \Cref{clm:coda:unique_votes}(\ref{item:coda-one-message}), that validator broadcasts a single $\VOTETWO$ message for view $\vdec$, which implies $x_1 = x_2$.
\qed
\end{proof}

We may therefore speak of \emph{the} \metablock{} that view $\vdec$ decides.

\begin{definition} [\Metablock{} $\valdec$]
Let $\valdec$ denote the \metablock{} decided by a correct validator in view $\vdec$.
\end{definition}

The following claim is the counting heart of the proof.
A decision in view $\vdec$ leaves behind a residue of correct $\VOTETWO$ messages so large that no quorum of $\VOTETWO$ messages for view $\vdec$ can avoid it, however it is assembled.

\begin{claim} \label{clm:coda:residue}
Let $D$ denote the set of correct validators that broadcast $\langle \VOTETWO, \vdec, \valdec, \cdot \rangle$.
Then:
\begin{compactitem}
    \item $|D| \geq f+1$.
    \item Every set of $2f+1$ $\langle \VOTETWO, \vdec, \cdot, \cdot \rangle$ messages from distinct senders contains a message broadcast by a member of $D$.
\end{compactitem}
\end{claim}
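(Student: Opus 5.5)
Fix a correct validator $p_d$ that decides $\valdec$ in view $\vdec$; such a validator exists by the definition of $\valdec$. By line~\ref{ln:core:on_vote2_quorum_header}, $p_d$ has received $2f+1$ messages $\langle \VOTETWO, \vdec, \valdec, \cdot \rangle$ from distinct senders. Authenticated channels guarantee that a message received from a correct sender was broadcast by that sender. At most $f$ of these senders are faulty, so at least $f+1$ of them are correct validators that broadcast $\langle \VOTETWO, \vdec, \valdec, \cdot \rangle$. They all belong to $D$, which gives $|D| \geq f+1$.

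For the second item, take any set $T$ of $2f+1$ messages $\langle \VOTETWO, \vdec, \cdot, \cdot \rangle$ from distinct senders. Let $Q$ be the set of $2f+1$ senders behind the decision of $p_d$. Both $T$ and $Q$ are subsets of the $n = 3f+1$ validators, each of size $2f+1$, so $|T \cap Q| \geq 2(2f+1) - n = f+1$. Hence $T \cap Q$ contains a correct validator $p_c$. Since $p_c \in Q$, it broadcast $\langle \VOTETWO, \vdec, \valdec, \cdot \rangle$, so $p_c \in D$.

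It remains to check that the message of $p_c$ in $T$ is the one it broadcast with lock $\valdec$. By \Cref{clm:coda:unique_votes}(\ref{item:coda-one-message}), $p_c$ broadcasts at most one $\VOTETWO$ message for view $\vdec$. Authentication ensures that the message attributed to $p_c$ in $T$ was indeed sent by $p_c$, so it is that unique message. Therefore $T$ contains a message broadcast by a member of $D$.

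No step is a real obstacle. The only point needing care is the implicit use of channel authentication, both to attribute received messages to their senders and to combine it with the uniqueness of $\VOTETWO$ messages from \Cref{clm:coda:unique_votes}. The rest is the standard quorum-intersection count.
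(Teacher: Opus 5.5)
Your proof is correct and takes essentially the same approach as the paper. The first item is the same count. For the second, you intersect $T$ with $p_d$'s decision quorum to find a correct sender in $D$, while the paper bounds the validators outside $D$ directly by $b + f \leq 2f$, where $b$ is the number of faulty validators; these are two phrasings of one quorum-intersection count, and your final appeal to \Cref{clm:coda:unique_votes} to identify the message with lock $\valdec$ matches the paper's.
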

\begin{proof}
Let $b \leq f$ denote the number of validators that are faulty in the execution under consideration.
The decision on $\valdec$ in view $\vdec$ requires the receipt of $2f+1$ $\langle \VOTETWO, \vdec, \valdec, \cdot \rangle$ messages from distinct senders (line~\ref{ln:core:on_vote2_quorum_header}), at most $b$ of which are faulty. The rest were broadcast by correct validators, which therefore belong to $D$.
Thus, $|D| \geq 2f + 1 - b \geq f + 1$.

Now, consider any set $S$ of $2f+1$ $\langle \VOTETWO, \vdec, \cdot, \cdot \rangle$ messages from distinct senders.
At most $f + b \leq 2f$ validators lie outside $D$: the $b$ faulty ones, and the at most $(n - b) - |D| \leq (3f + 1 - b) - (2f + 1 - b) = f$ correct ones.
As $S$ has $2f+1$ distinct senders, at least one of them belongs to $D$. By \Cref{clm:coda:unique_votes}(\ref{item:coda-one-message}), that validator broadcasts a single $\VOTETWO$ message for view $\vdec$, namely $\langle \VOTETWO, \vdec, \valdec, \cdot \rangle$, which is thus the message of $S$ that it sent.
\qed
\end{proof}

We are ready for the central claim of the agreement proof: once view $\vdec$ has decided $\valdec$, no correct validator ever echoes, or votes for, anything else.

\begin{claim} \label{clm:coda:messages_after_vdec}
Let $m$ be a message broadcast by a correct validator $p_i$.
Then:
\begin{compactenum}
    \item If $m = \langle \ECHO, V, j, v \rangle$ with $V \geq \vdec$, then $v = \valdec[j]$. \label{item:coda-echo-vdec}
    \item If $m = \langle \VOTEONE, V, x \rangle$ with $V > \vdec$, then $x = \valdec$. \label{item:coda-vote1-vdec}
\end{compactenum}
Moreover, $\guardi[V] = \valdec$ holds at every time at which $\viewi = V > \vdec$.
\end{claim}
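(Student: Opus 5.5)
We prove the three statements together by strong induction on the time of the broadcast or write event. The induction hypothesis is that every earlier echo, first vote, and guard assignment by a correct validator already satisfies the claim. Every such event is triggered by messages received earlier, so the hypothesis applies to the correct senders among them. We first handle echoes of view $\vdec$ and guards of view $\vdec+1$. We then treat views $V > \vdec$, which use the guard invariant, which in turn uses the echo property for the previous view.

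\emph{Echoes.} Let $m = \langle \ECHO, V, j, v \rangle$ with $V \geq \vdec$. Echoes are broadcast only through $\mathsf{echo}$, which is invoked at two kinds of sites.

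The first is amplification on $f+1$ echoes $\langle \ECHO, V, j, v\rangle$ (line~\ref{ln:vc:upon_echo_amplification}). One of those senders is correct and echoed earlier, so $v = \valdec[j]$ by induction.

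The second is the view-change rule (line~\ref{ln:vc:on_vote2_quorum_compute_approved}), fired on a set $S$ of $2f+1$ accepted $\VOTETWO$ messages of view $V$ with no two different locks.
\begin{compactitem}
\item If $V = \vdec$, \Cref{clm:coda:residue} puts in $S$ a message from $D$, whose lock is $\valdec \neq \NONE$. Since $S$ has no conflicting locks, the branch at line~\ref{ln:vc:on_vote2_quorum_set_approved_lock} echoes $\valdec[j]$.
\item If $V > \vdec$ and $S$ holds a lock $\ell$, then \Cref{clm:coda:accepted_lock} gives a correct $\langle \VOTEONE, V, \ell\rangle$. By induction on part~(\ref{item:coda-vote1-vdec}), $\ell = \valdec$.
\item If $V > \vdec$ and $S$ holds no lock, each correct sender's guard is $\valdec$, by the guard invariant together with \Cref{clm:coda:unique_votes}(\ref{item:coda-vote2-carries-lock}), since the message was sent while its view was $V$. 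Among any $f+1$ messages of $S$ sharing an entry $G$ for $p_j$, one is correct, so $G = \valdec[j]$. Moreover, $S$ contains at least $f+1$ correct messages, all with entry $\valdec[j]$, so the $\ALLSAFE$ branch never fires.
\end{compactitem}

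\emph{Guards.} A correct validator writes $\guardi[V]$ for $V > \vdec$ only at line~\ref{ln:vc:upon_enter_set_guard} or at line~\ref{ln:vc:upon_catch_up_set_guard}. Each write is based, for every $p_j$, on at least $f+1$ echoes $\langle \ECHO, V', j, x[j]\rangle$, where $V' = V-1 \geq \vdec$ or $V' = V > \vdec$. At least one of these echoes is correct and earlier, so $x[j] = \valdec[j]$ by the echo part. Thus $x = \valdec$. By \Cref{clm:coda:monotonic}, $\guardi[V]$ is written only upon entering $V$, so the invariant holds throughout the time that $\viewi = V$.

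\emph{First votes.} For $V > \vdec \geq 1$, a correct $\langle \VOTEONE, V, x\rangle$ requires $f+1$ echoes $\langle \ECHO, V-1, j, x[j]\rangle$ for every $j$, by \Cref{clm:coda:vote1_guard}. One of them is correct and has $V-1 \geq \vdec$, hence $x[j] = \valdec[j]$, so $x = \valdec$. Note that this step does not even need the leader to be correct.

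The main obstacle is making the induction well-founded. The echo and vote cases for a view $V$ depend on guards of $V$, and those guards depend on echoes of $V-1$ or of $V$ itself (catch-up). I would order the argument by time rather than by view: every dependency is on a strictly earlier event by another correct validator, namely a received message, or by $p_i$ itself, namely its guard written upon entry. The count in the no-lock subcase also has to be checked carefully: it uses that $S$ has at least $f+1$ correct senders, so that a matching entry exists, and that every group of $f+1$ senders contains a correct one.
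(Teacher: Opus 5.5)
Your proof is correct and follows essentially the same route as the paper: an induction over the order of events at correct validators. It has the same three parts. Guards of views above $\vdec$ are fixed by echoes of views at least $\vdec$, and first votes are handled via \Cref{clm:coda:vote1_guard}. Echoes are split into amplification and the view-change rule; the case $V = \vdec$ is settled by \Cref{clm:coda:residue}, and for $V > \vdec$ the lock branch uses \Cref{clm:coda:accepted_lock} and the no-lock branch uses the guard invariant.

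The only difference is cosmetic and lies in the no-lock branch. You argue that any $f+1$ messages sharing an entry include a correct one. The paper instead shows that $\valdec[j]$ is shared by $f+1$ guards and is the unique such value. Both arguments work.
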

\begin{proof}
We proceed by induction on the order of the events of interest at correct validators, namely the broadcasts covered by the claim and the entries into views greater than $\vdec$.
Fix such an event, and suppose that the claim holds for every event that precedes it.
We establish the claim through three subclaims.

\begin{subclaim} \label{clm:coda:guard_after_vdec}
If a correct validator $p_i$ enters a view $W > \vdec$, then $\guardi[W] = \valdec$ from that moment onward.
\end{subclaim}
\begin{subproof}[Proof of \Cref{clm:coda:guard_after_vdec}]
As $W > \vdec \geq 1$, validator $p_i$ enters view $W$ at line~\ref{ln:vc:upon_enter_enter} or at line~\ref{ln:vc:upon_catch_up_enter}, and sets its guard for $W$ to a \metablock{} $x$ (line~\ref{ln:vc:upon_enter_set_guard} or~\ref{ln:vc:upon_catch_up_set_guard}).
In the former case, $W = V + 1$ for a view $V \geq \vdec$, and, for every proposer $p_j$, a set of $2f+1$ messages $\langle \ECHO, V, j, x[j] \rangle$ from distinct senders exists (line~\ref{ln:vc:upon_enter_next_view}), at least $f+1$ of them broadcast by correct validators.
In the latter case, for every proposer $p_j$, a set of $f+1$ messages $\langle \ECHO, W, j, x[j] \rangle$ from distinct senders exists (line~\ref{ln:vc:upon_catch_up}), at least one of them broadcast by a correct validator, and $W \geq \vdec$.
Either way, for every proposer $p_j$, some correct validator broadcast an echo for $x[j]$, for a view at least $\vdec$, before the entry. The induction hypothesis (part~\ref{item:coda-echo-vdec}) yields $x[j] = \valdec[j]$.
As this holds for every proposer, $x = \valdec$, and the entry sets $\guardi[W] \gets \valdec$.
No other line writes $\guardi[W]$: guards are written upon entering a view alone, and view $W$ is entered once (\Cref{clm:coda:monotonic}), which concludes the proof.
\end{subproof}

\begin{subclaim} \label{clm:coda:vote1_after_vdec}
If $m = \langle \VOTEONE, V, x \rangle$ with $V > \vdec$, then $x = \valdec$.
\end{subclaim}
\begin{subproof}[Proof of \Cref{clm:coda:vote1_after_vdec}]
As $V > \vdec \geq 1$, \Cref{clm:coda:vote1_guard} applies: for every proposer $p_j$, validator $p_i$ has received $f+1$ messages $\langle \ECHO, V - 1, j, x[j] \rangle$ from distinct validators, at least one of which was broadcast by a correct validator.
As $V - 1 \geq \vdec$, the induction hypothesis (part~\ref{item:coda-echo-vdec}) gives $x[j] = \valdec[j]$ for every proposer $p_j$, that is, $x = \valdec$.
\end{subproof}

\begin{subclaim} \label{clm:coda:echo_after_vdec}
If $m = \langle \ECHO, V, j, v \rangle$ with $V \geq \vdec$, then $v = \valdec[j]$.
\end{subclaim}
\begin{subproof}[Proof of \Cref{clm:coda:echo_after_vdec}]
Echo messages are broadcast at line~\ref{ln:vc:fn_echo_broadcast} alone, so $p_i$ broadcasts $m$ upon an invocation of $\mathsf{echo}(V, p_j, v)$, of which \coda has two sources.

\begin{compactitem}
\item The \emph{amplification rule} (line~\ref{ln:vc:upon_echo_amplification}) requires a set of $f+1$ messages $\langle \ECHO, V, j, v \rangle$ from distinct senders, at least one of which was broadcast by a correct validator, and thus earlier. The induction hypothesis (part~\ref{item:coda-echo-vdec}) gives $v = \valdec[j]$.

\item The \emph{view-change rule} (line~\ref{ln:vc:on_vote2_quorum_compute_approved}) is triggered while $\viewi = V$ by a set $S$ of $2f+1$ accepted $\langle \VOTETWO, V, \cdot, \cdot \rangle$ messages, from distinct senders, whose non-$\NONE$ locks are all identical.
The rule echoes for every proposer at once. We fix the proposer $p_j$ of $m$ and distinguish two cases.
\begin{compactitem}
    \item \emph{Case $V = \vdec$.}
    By \Cref{clm:coda:residue}, the set $S$ contains a message $\langle \VOTETWO, \vdec, \valdec, \cdot \rangle$ broadcast by a correct validator.
    Hence, $S$ contains a non-$\NONE$ lock, so the rule takes its first branch (line~\ref{ln:vc:on_vote2_quorum_set_approved_lock}) and echoes $\ell[j]$ for the lock $\ell$ it exhibits. As all non-$\NONE$ locks in $S$ are identical, $\ell = \valdec$, and thus $v = \valdec[j]$.

    \item \emph{Case $V > \vdec$.}
    At least $f+1$ messages of $S$ were broadcast by correct validators. Each was broadcast at a time at which its sender's current view was $V$ (\Cref{clm:coda:unique_votes}(\ref{item:coda-one-message})), and thus carries the guard of its sender at that time, whose entry for $p_j$ is $\valdec[j]$ (\Cref{clm:coda:unique_votes}(\ref{item:coda-vote2-carries-lock}) and the induction hypothesis, applied to that sender's entry into view $V$).
    We distinguish the branch the rule takes.
    \begin{compactitem}
        \item The first branch (line~\ref{ln:vc:on_vote2_quorum_set_approved_lock}) echoes $\ell[j]$ for a non-$\NONE$ lock $\ell$ of an accepted message of $S$.
        By \Cref{clm:coda:accepted_lock}, some correct validator broadcast $\langle \VOTEONE, V, \ell \rangle$, so $\ell = \valdec$ by the induction hypothesis (part~\ref{item:coda-vote1-vdec}), and thus $v = \valdec[j]$.

        \item The second branch (line~\ref{ln:vc:on_vote2_quorum_guard_if}) is reached otherwise.
        Its condition asks for a value that is the entry for $p_j$ in the guards of $f+1$ messages of $S$, and $\valdec[j]$ is such a value: as established above, at least $f+1$ messages of $S$ carry a guard whose entry for $p_j$ is $\valdec[j]$. It is the only such value, as two values would require $2f+2 > |S|$ messages; hence $v = \valdec[j]$.
        The condition therefore holds, and the last branch, which echoes $\ALLSAFE$, is not taken (line~\ref{ln:vc:on_vote2_quorum_allsafe_echo}).\subqedhere
    \end{compactitem}
\end{compactitem}
\end{compactitem}
\end{subproof}

The three subclaims cover every event of interest, which completes the induction.
\qed
\end{proof}

Agreement now follows.

\begin{lemma} [Agreement] \label{lem:coda:agreement}
\coda (\Cref{alg:fallback}) satisfies agreement.
\end{lemma}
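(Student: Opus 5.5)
The plan is to reduce agreement to \Cref{clm:coda:messages_after_vdec} and \Cref{clm:coda:unique_in_vdec}. If no correct validator decides, agreement holds vacuously. Otherwise, $\vdec$ and $\valdec$ are well defined. It then suffices to show that every decision by a correct validator is a decision of $\valdec$. Fix a correct validator $p_i$ that decides some \metablock{} $x$ in a view $V$, that is, on the strength of $2f+1$ messages $\langle \VOTETWO, V, x, \cdot \rangle$ with $x \neq \NONE$ from distinct senders (line~\ref{ln:core:on_vote2_quorum_header}). By the definition of $\vdec$, we have $V \geq \vdec$. We split into two cases.

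\emph{Case $V = \vdec$.} By \Cref{clm:coda:unique_in_vdec}, $x = \valdec$ directly.

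\emph{Case $V > \vdec$.} Among the $2f+1$ senders, at least $f+1$ are correct, so some correct validator $p_k$ broadcast $\langle \VOTETWO, V, x, \cdot \rangle$ with $x \neq \NONE$. By \Cref{clm:coda:unique_votes}(\ref{item:coda-vote2-carries-lock}), $x$ is $p_k$'s lock for $V$. That lock was set only at line~\ref{ln:core:on_vote1_quorum_set_lock}, upon receipt of $2f+1$ messages $\langle \VOTEONE, V, x \rangle$ from distinct senders. At least one of these senders is correct, so some correct validator broadcast $\langle \VOTEONE, V, x \rangle$ with $V > \vdec$. \Cref{clm:coda:messages_after_vdec}(\ref{item:coda-vote1-vdec}) then gives $x = \valdec$.

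Hence every correct decision is $\valdec$, so no two correct validators decide different \metablocks. I expect no real obstacle, since the inductive work lives in \Cref{clm:coda:messages_after_vdec}. The only point needing care is the $V > \vdec$ case: a decision quorum's lock must be traced back to a correct $\VOTEONE$, through a correct sender's lock and not merely through acceptance, since the decision rule does not require acceptance. I would also note that a decision in view $V$ may occur while $\viewi = V+1$. This is harmless, because the argument uses only the view number carried in the messages.
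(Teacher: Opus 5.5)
Your proposal is correct and follows essentially the same route as the paper's proof: you handle $V = \vdec$ by \Cref{clm:coda:unique_in_vdec}, and for $V > \vdec$ you trace the decision quorum to a correct sender's lock (\Cref{clm:coda:unique_votes}), then to a correct $\VOTEONE$ message, and conclude with \Cref{clm:coda:messages_after_vdec}. Your observations that the decision rule needs no acceptance and that only the view number carried in the messages matters are accurate.
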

\begin{proof}
As $\vdec$ is the smallest view in which a correct validator decides, it suffices to prove that no correct validator decides a \metablock{} other than $\valdec$ in a view $V \geq \vdec$.
For $V = \vdec$, this is \Cref{clm:coda:unique_in_vdec}.

Hence, let $V > \vdec$, and suppose that a correct validator decides a \metablock{} $y \neq \NONE$ in view $V$.
The decision requires the receipt of $2f+1$ messages $\langle \VOTETWO, V, y, \cdot \rangle$ from distinct senders (line~\ref{ln:core:on_vote2_quorum_header}), at least one of which was broadcast by a correct validator $p_w$.
By \Cref{clm:coda:unique_votes}(\ref{item:coda-vote2-carries-lock}), $\mathit{lock}_w[V] = y \neq \NONE$, which is set only upon the receipt of $2f+1$ messages $\langle \VOTEONE, V, y \rangle$ from distinct senders (lines~\ref{ln:core:on_vote1_quorum_header} and~\ref{ln:core:on_vote1_quorum_set_lock}), at least one of which was broadcast by a correct validator.
By \Cref{clm:coda:messages_after_vdec}(\ref{item:coda-vote1-vdec}), $y = \valdec$, which concludes the proof.
\qed
\end{proof}

\paragraph{Termination.}
Termination assumes that every correct validator starts \coda and none stops taking part, and its proofs also assume that every correct validator starts by GST (the participation convention of \Cref{sec:coda:model}).
We first establish that correct validators keep advancing through views, and then that the first post-GST view with a correct leader decides.
Where the proofs below speak of the $2f+1$ correct validators, they mean an arbitrary but fixed set of $2f+1$ correct validators, in case there are more.
The following two claims tie echoes and view entries to one another.

\begin{claim} \label{clm:coda:echo_implies_entry}
If a correct validator broadcasts $\langle \ECHO, V, \pid, v \rangle$ for a view $V \geq 1$, then some correct validator has previously entered view $V$.
\end{claim}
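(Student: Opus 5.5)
The plan is an induction over the order in which correct validators broadcast echoes, tracing each echo back to the rule that produced it. By the pseudocode of \Cref{alg:fallback}, $\ECHO$ messages are broadcast only at line~\ref{ln:vc:fn_echo_broadcast}, inside $\mathsf{echo}(V, p_\pid, v)$. That function is invoked from exactly two rules: the view-change rule (line~\ref{ln:vc:on_vote2_quorum_compute_approved}) and the amplification rule (line~\ref{ln:vc:upon_echo_amplification}). Fix a correct validator $p_i$ broadcasting $\langle \ECHO, V, \pid, v \rangle$, and assume the claim for every echo broadcast by a correct validator before it.

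\emph{View-change rule.} This rule fires only for the current view, with $\mathit{view}_i = V$ at the time of firing. Hence $p_i$ itself entered $V$ beforehand: the current view is set only upon entering a view (line~\ref{ln:core:on_enter_set_view}). Recovery preserves $\mathit{view}_i$ in persistent memory, so a crash does not break this.

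\emph{Amplification rule.} This rule is triggered by $f+1$ messages $\langle \ECHO, V, \pid, v \rangle$ from distinct senders. At least one of them was broadcast by a correct validator, necessarily before $p_i$'s broadcast. The induction hypothesis, applied to that earlier echo, gives a correct validator that entered $V$ before it, and hence before $p_i$'s broadcast. Alternatively, one can note that amplification requires $V \in \{\mathit{view}_i - 1, \mathit{view}_i\}$. In the case $V = \mathit{view}_i$, the validator $p_i$ itself has entered $V$. In the case $V = \mathit{view}_i - 1$, we know only that $p_i$ entered a view above $V$, not necessarily $V$ itself, since catch-up can skip views. So the inductive route through the correct sender is the robust one, and I would use it.

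The only delicate point is well-foundedness of the induction. Broadcasts by correct validators are totally ordered in time: handlers are atomic, each validator runs finitely many handlers in any finite interval, and any tie between validators can be broken arbitrarily. A received message from a correct sender was broadcast strictly earlier than its receipt, so the amplification case always appeals to a strictly earlier event. The base case is covered automatically: the first correct echo for $V$ cannot come from amplification, because that would require an earlier correct echo for $V$. It therefore comes from the view-change rule, which is handled directly. I expect no real obstacle beyond stating this ordering carefully, as in the induction of \Cref{clm:coda:messages_after_vdec}.
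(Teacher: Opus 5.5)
Your proof is correct, but it takes a different route from the paper. The paper does not induct on echoes. It first observes that a correct validator echoes for view $V$ only while its current view is at least $V$. It then takes the first entry $\mathcal{E}^\star$ by any correct validator into a view $Y \geq V$ and shows by minimality that $Y = V$. A catch-up entry into $Y$ would need an earlier correct echo for $Y$, whose sender already had view at least $Y$. A regular entry into $Y$ needs correct echoes for $Y-1$, whose senders had view at least $Y-1$, and this forces $Y-1 < V$.

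You instead trace where the echo itself came from. The view-change rule echoes only for $\viewi$, while amplification keeps the view number and rests on an earlier correct echo of the same view. Hence the first correct echo for $V$ comes from the view-change rule at a validator whose current view is $V$. Your route is shorter: it needs neither the entry rules nor the condition $V \in \{\viewi - 1, \viewi\}$ of the amplification rule. It is the same reasoning the paper itself uses for the first echo of a view in the proof of \Cref{clm:coda:view_floor}.

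The paper's route proves a stronger fact along the way: no correct validator enters any view above $V$ before some correct validator enters $V$ itself. This global no-skipping property is closely related to the downward closure of \Cref{clm:coda:entry_downward_closed}. Your remark that the shortcut through $V = \viewi - 1$ fails, because catch-up can skip views, is exactly right, and your well-foundedness argument is adequate.
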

\begin{proof}
We first observe that a correct validator broadcasts an $\ECHO$ message for view $V$ only while its current view is at least $V$: the view-change rule (line~\ref{ln:vc:on_vote2_quorum_compute_approved}) echoes for view $\viewi$, and the amplification rule (line~\ref{ln:vc:upon_echo_amplification}) echoes only for views $\viewi - 1$ and $\viewi$.
As $\viewi$ is set to a view only upon entering it (line~\ref{ln:core:on_enter_set_view}), the sender of such a message has entered a view at least $V$ beforehand.

Let $\mathcal{E}^{\star}$ denote the first entry by a correct validator into a view at least $V$, and let $Y \geq V$ be the entered view. It suffices to show that $Y = V$.
If $Y = 1$, then $\mathcal{E}^{\star}$ is an initial entry (line~\ref{ln:core:init_enter_view1}), and $Y = V$ follows from $1 = Y \geq V \geq 1$.
Otherwise, $Y > 1$, so $\mathcal{E}^{\star}$ occurs at line~\ref{ln:vc:upon_enter_enter} or at line~\ref{ln:vc:upon_catch_up_enter}, the only two places where a view other than view $1$ is entered.
It does not occur at the latter: the catch-up rule requires $f+1$ $\langle \ECHO, Y, \cdot, \cdot \rangle$ messages from distinct senders (line~\ref{ln:vc:upon_catch_up}), one of which was broadcast by a correct validator, and thus before $\mathcal{E}^{\star}$, while that validator's view was at least $Y \geq V$. That validator thus entered a view at least $V$ before $\mathcal{E}^{\star}$, contradicting the minimality of $\mathcal{E}^{\star}$.
Hence, $\mathcal{E}^{\star}$ occurs at line~\ref{ln:vc:upon_enter_enter}, which requires $2f+1$ $\langle \ECHO, Y - 1, \cdot, \cdot \rangle$ messages from distinct senders (line~\ref{ln:vc:upon_enter_next_view}), at least $f+1$ of them broadcast by correct validators before $\mathcal{E}^{\star}$, while their views were at least $Y - 1$.
By the minimality of $\mathcal{E}^{\star}$, $Y - 1 < V$, and, as $Y \geq V$, $Y = V$.
\qed
\end{proof}

\begin{claim} \label{clm:coda:entry_downward_closed}
If a correct validator enters a view $W > 1$, then the first such entry occurs at line~\ref{ln:vc:upon_enter_enter}, and some correct validator has previously entered view $W - 1$.
\end{claim}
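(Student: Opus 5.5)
Let $W > 1$ be a view entered by some correct validator, and let $\mathcal{E}^{\star}$ be the first entry by a correct validator into $W$. The plan mirrors the proof of \Cref{clm:coda:echo_implies_entry}: classify $\mathcal{E}^{\star}$ by the rule that caused it, rule out the catch-up rule by minimality, and extract an entry into $W-1$ from the rule that remains. Since $W > 1$, the initial entry (line~\ref{ln:core:init_enter_view1}) is excluded. So $\mathcal{E}^{\star}$ occurs either at line~\ref{ln:vc:upon_catch_up_enter} or at line~\ref{ln:vc:upon_enter_enter}.

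\emph{Excluding the catch-up rule.} Suppose $\mathcal{E}^{\star}$ occurs at line~\ref{ln:vc:upon_catch_up_enter}. The rule needs $f+1$ messages $\langle \ECHO, W, j, \cdot \rangle$ from distinct senders (line~\ref{ln:vc:upon_catch_up}), so one of them was broadcast by a correct validator before $\mathcal{E}^{\star}$. By \Cref{clm:coda:echo_implies_entry}, some correct validator therefore entered view $W$ before that broadcast, hence before $\mathcal{E}^{\star}$. This contradicts the choice of $\mathcal{E}^{\star}$ as the first such entry, so $\mathcal{E}^{\star}$ occurs at line~\ref{ln:vc:upon_enter_enter}.

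\emph{Extracting the entry into $W-1$.} At line~\ref{ln:vc:upon_enter_enter}, the validator enters $V+1 = W$, so $V = W - 1 \geq 1$. The rule requires $2f+1$ messages $\langle \ECHO, W-1, j, \cdot \rangle$ (line~\ref{ln:vc:upon_enter_next_view}), at least one of them broadcast by a correct validator before $\mathcal{E}^{\star}$. Applying \Cref{clm:coda:echo_implies_entry} to that echo, some correct validator entered view $W-1$ before the echo was broadcast, and hence before $\mathcal{E}^{\star}$.

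The only delicate step is the ordering in the catch-up case. \Cref{clm:coda:echo_implies_entry} must provide an entry into exactly $W$ that precedes the echo, not merely some entry into a view at least $W$. That is what the claim states ("previously entered view $V$"), so I expect no real obstacle.
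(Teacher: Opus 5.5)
Your proposal is correct and matches the paper's proof step for step. You take the first correct entry into $W$, rule out the catch-up rule by minimality via \Cref{clm:coda:echo_implies_entry}, and then apply that claim to a correct echo for view $W-1$ behind the regular entry rule to obtain the earlier entry into $W-1$.
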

\begin{proof}
Let $\mathcal{E}$ denote the first entry by a correct validator into view $W$. As $W > 1$, it occurs at line~\ref{ln:vc:upon_enter_enter} or at line~\ref{ln:vc:upon_catch_up_enter}.
It does not occur at the latter: the catch-up rule requires $f+1$ $\langle \ECHO, W, \cdot, \cdot \rangle$ messages from distinct senders (line~\ref{ln:vc:upon_catch_up}), one of which was broadcast by a correct validator, and, by \Cref{clm:coda:echo_implies_entry}, some correct validator entered view $W$ before that broadcast, hence before $\mathcal{E}$, a contradiction.
Therefore, $\mathcal{E}$ occurs at line~\ref{ln:vc:upon_enter_enter}, which requires a set of $2f+1$ $\langle \ECHO, W - 1, \cdot, \cdot \rangle$ messages from distinct senders (line~\ref{ln:vc:upon_enter_next_view}), at least $f+1$ of which were broadcast by correct validators before $\mathcal{E}$. \Cref{clm:coda:echo_implies_entry} then places an entry into view $W - 1$ before those broadcasts.
\qed
\end{proof}

Next, we record which messages survive on an unreliable network.
Recall that a broadcast is abandoned only upon a view entry, and only for messages outside the two-view window that the entry retains (line~\ref{ln:rule_abandon}).

\begin{claim} \label{clm:coda:forever_alive}
Let $V \geq 1$ be a view such that no correct validator ever enters a view greater than $V$.
Then every $\VOTEONE$, $\VOTETWO$ and $\ECHO$ message for view $V$ or view $V - 1$ broadcast by a correct validator is forever-alive, and is therefore eventually received by every correct validator.
\end{claim}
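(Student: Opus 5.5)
The claim concerns a view $V$ such that no correct validator ever enters a view greater than $V$. I would prove it by showing directly that no correct validator ever abandons such a broadcast. Eventual receipt then follows from the model of \Cref{sec:model}: a forever-alive message of a correct validator reaches every correct recipient, also after the recipient recovers from a crash.

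\textbf{Step 1: the only abandonment rule.} By the conventions of \Cref{sec:coda:description}, a correct validator abandons a \coda broadcast only through line~\ref{ln:rule_abandon}. The other candidates do not abandon anything:
\begin{compactitem}
\item stopping to take part leaves the broadcasts alive;
\item halting in \qchorus does not abandon broadcasts;
\item crashes do not abandon ongoing broadcasts (\Cref{sec:model}).
\end{compactitem}
Line~\ref{ln:rule_abandon} fires only upon entering a view $W$, and it abandons only the broadcasts of views smaller than $W - 1$.

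\textbf{Step 2: no entry can abandon views $V$ or $V-1$.} Fix a correct validator and a message of view $U \in \{V-1, V\}$ that it broadcasts. Every view $W$ that this validator ever enters satisfies $W \leq V$, by the hypothesis. Hence $W - 1 \leq V - 1 \leq U$, so $U$ is never smaller than $W - 1$, and the broadcast is never abandoned.

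This holds equally for entries that happen before the broadcast, so the order of events needs no care. I would also note that \Cref{clm:coda:monotonic} is not needed, although it is consistent with the argument. The broadcast therefore remains alive from its initiation onward, that is, it is forever-alive.

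\textbf{Step 3: conclude.} The message is forever-alive, so every correct validator eventually receives it by the model. Receipt here means that the message reaches the recipient. Whether the recipient keeps the message is a separate matter: it keeps messages of its current and previous view, and later claims handle that.

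The only subtle point is Step 1, namely that no other mechanism abandons a broadcast. Vote and echo broadcasts are recorded in persistent memory for the current and previous views, and their retransmission resumes upon recovery. I expect no real obstacle beyond stating this carefully.
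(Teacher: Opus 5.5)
Your proof is correct and follows the paper's argument: line~\ref{ln:rule_abandon} is the only mechanism that abandons a \coda broadcast, and no correct validator ever enters a view above $V$. The paper additionally fixes the sender's view at broadcast time and uses monotonicity to show that only an entry into a view at least $W+2$ abandons the message. Your direct observation that $W-1 \leq V-1$ for every entered view $W \leq V$ makes that detour unnecessary.
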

\begin{proof}
Let $m$ be such a message, for a view $W \in \{ V - 1, V \}$, broadcast by a correct validator $p_i$.
Its broadcast is initiated by hypothesis, so it suffices to show that it is never abandoned.
The only mechanism that abandons broadcasts is the rule at line~\ref{ln:rule_abandon}, which fires upon a view entry: entering a view $W'$ abandons the ongoing broadcasts of the messages of views smaller than $W' - 1$.
Consider the moment at which $p_i$ broadcasts $m$.
Its current view is then $W$ or $W + 1$: a $\VOTEONE$ or $\VOTETWO$ message for view $W$ is broadcast while $\viewi = W$ (\Cref{clm:coda:unique_votes}(\ref{item:coda-one-message})), and an $\ECHO$ message for view $W$ is broadcast by the view-change rule while $\viewi = W$ (line~\ref{ln:vc:on_vote2_quorum_compute_approved}), or by the amplification rule while $W \in \{ \viewi - 1, \viewi \}$ (line~\ref{ln:vc:upon_echo_amplification}).
As entered views only grow (\Cref{clm:coda:monotonic}), every later entry of $p_i$ is into a view $W' \geq W + 1$, and such an entry abandons the broadcast of $m$ exactly when $W$ is smaller than $W' - 1$.
An entry into view $W + 1$ therefore spares $m$, and only an entry into a view $W' \geq W + 2$ abandons it.
Since $W \geq V - 1$, such a view satisfies $W' \geq V + 1$, and no correct validator ever enters a view greater than $V$ by assumption.
Hence, $m$ is forever-alive, and its retransmissions eventually reach every correct validator, also after recovery or the start of participation.
\qed
\end{proof}

The next claim is the engine of progress: whenever correct validators would otherwise be stranded in a view, they all reach the same verdict about it and move on.

\begin{claim} \label{clm:coda:all_enter}
For every view $V \geq 1$, some correct validator eventually enters view $V$.
\end{claim}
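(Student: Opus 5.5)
We argue by induction on $V$. View~$1$ is entered by every correct validator upon starting (line~\ref{ln:core:init_enter_view1}), which settles the base case, as termination assumes that every correct validator starts \coda. For the inductive step, suppose some correct validator enters view $V$, and assume for contradiction that no correct validator ever enters a view greater than $V$. By \Cref{clm:coda:monotonic}, every correct validator then eventually reaches a final view, which is at most $V$. By \Cref{clm:coda:forever_alive}, every $\VOTEONE$, $\VOTETWO$ and $\ECHO$ message of views $V-1$ and $V$ broadcast by a correct validator is eventually received by every correct validator.

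\emph{Step 1: every correct validator enters $V$.} A correct validator $p_r$ entered $V$. If $V = 1$ this is immediate. For $V > 1$, by \Cref{clm:coda:entry_downward_closed}, the first entry into $V$ is at line~\ref{ln:vc:upon_enter_enter}. It therefore rests on $2f+1$ $\ECHO$ messages of view $V-1$ per proposer with a common value, and at least $f+1$ of these come from correct validators. Those correct echoes are forever-alive, so every correct validator eventually holds $f+1$ echoes of view $V-1$ with the same value for every proposer.

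A validator still below $V-1$ then catches up to $V-1$ by line~\ref{ln:vc:upon_catch_up}. A validator in view $V-1$ amplifies these echoes (line~\ref{ln:vc:upon_echo_amplification}), and so does a validator in view $V$, since $V-1$ is its previous view. Hence all correct validators, at least $2f+1$ of them, echo these values. Every correct validator then receives $2f+1$ matching echoes per proposer and enters $V$, unless it already has. One point needs care here: amplification only happens while $V-1 \in \{\viewi-1, \viewi\}$, which holds because no correct validator ever exceeds $V$.

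\emph{Step 2: the view change of $V$ completes.} Every correct validator stays in $V$ forever. After $5\Delta$ on its local clock, each has broadcast a $\VOTETWO$ message for $V$ (Step~3 or Step~5). Clocks only need to progress eventually, and a crash preserves the entry time. By \Cref{clm:coda:forever_alive}, each correct validator eventually receives the $\VOTETWO$ messages of all correct validators.

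These messages are all accepted. A lock $x$ of a correct validator was set on $2f+1$ $\VOTEONE$ messages for $x$, so at least $f+1$ of those come from correct validators; being forever-alive, they reach everyone. No two correct validators lock different \metablocks in $V$ (Step~3 of \Cref{sec:coda:description}), so the correct $\VOTETWO$ messages form a set $S$ of $2f+1$ accepted messages without conflicting locks. The rule at line~\ref{ln:vc:on_vote2_quorum_compute_approved} therefore fires at each correct validator.

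This is the main obstacle. Different correct validators may form different sets $S$, and a validator may echo several values for a proposer. We need a single value per proposer on which $2f+1$ echoes converge. To obtain it, I would fix the canonical set $S^*$ of the correct validators' $\VOTETWO$ messages (choosing a fixed set of $2f+1$ of them if there are more). Every correct validator eventually holds all of $S^*$ and accepts all of it, and the rule fires for every qualifying set. Hence every correct validator echoes, for each proposer $p_\pid$, the value that $S^*$ determines: the lock's entry, or the guard entry shared by $f+1$ messages of $S^*$, or $\ALLSAFE$. The guard-entry value is unique because two such values would require $2f+2$ messages.

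So for every proposer, $2f+1$ correct validators broadcast the same $\ECHO$ for view $V$. These echoes are forever-alive, so every correct validator eventually receives them and enters $V+1$ by line~\ref{ln:vc:upon_enter_next_view}. This contradicts the assumption and completes the induction.
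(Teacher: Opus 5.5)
Your proof is correct and is the paper's argument in different packaging. The paper fixes $\vmax$, the largest view entered by a correct validator, and derives a contradiction through the same three steps you give: everyone reaches $\vmax$ via catch-up and amplification, all correct $\VOTETWO$ messages of $\vmax$ are eventually accepted, and the fixed set $S^{\star}$ of them forces common echoes; you phrase the same argument as an induction on $V$.

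One step is missing. Your contradiction only shows that some correct validator enters a view greater than $V$, so you still need \Cref{clm:coda:entry_downward_closed} to conclude that view $V+1$ itself is entered. The paper uses that claim in the same way, to get downward closure.
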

\begin{proof}
Suppose, by contradiction, that some view is never entered by any correct validator.
By \Cref{clm:coda:entry_downward_closed}, the set of views entered by correct validators is closed downward, and every correct validator enters view $1$ upon starting (line~\ref{ln:core:init_enter_view1}). Let $\vmax$ be the greatest view that some correct validator enters.
No correct validator ever enters a view greater than $\vmax$, so \Cref{clm:coda:forever_alive} applies to views $\vmax$ and $\vmax - 1$ throughout.

\proofstep{Step 1: Every correct validator eventually enters view $\vmax$.}
If $\vmax = 1$, this holds by initialization.
Otherwise, by \Cref{clm:coda:entry_downward_closed}, the first entry into $\vmax$ occurs at line~\ref{ln:vc:upon_enter_enter}, and thus rests on a set of $2f+1$ messages $\langle \ECHO, \vmax - 1, \pid, x[\pid] \rangle$ for every proposer $p_\pid$ and some \metablock{} $x$, at least $f+1$ of them, for every proposer, from correct validators.
These are forever-alive, so every correct validator eventually receives $f+1$ of them for every proposer $p_\pid$ after its last recovery.
A correct validator whose view is then at most $\vmax - 2$ enters view $\vmax - 1$ on the spot, through the catch-up rule (line~\ref{ln:vc:upon_catch_up}, with $V = \vmax - 1 > \viewi$).
Hence, every correct validator $p_i$ is eventually in view $\vmax - 1$ or view $\vmax$, so $\vmax - 1 \in \{ \viewi - 1, \viewi \}$ and the amplification rule fires (line~\ref{ln:vc:upon_echo_amplification}): $p_i$ broadcasts $\langle \ECHO, \vmax - 1, \pid, x[\pid] \rangle$ for every proposer $p_\pid$, unless it has done so already.
Hence all $2f+1$ correct validators broadcast these echoes, which are forever-alive, so every correct validator eventually holds $2f+1$ of them for every proposer.
A correct validator whose view is then $\vmax - 1$ enters view $\vmax$ at line~\ref{ln:vc:upon_enter_enter} (with $V = \vmax - 1 \geq \viewi$), and one whose view is $\vmax$ is there already.

\proofstep{Step 2: Every correct validator broadcasts a single $\VOTETWO$ message for view $\vmax$, and every correct validator eventually accepts all $2f+1$ of these messages.}
Every correct validator is in view $\vmax$ forever, and its clock advances at the rate of real time from GST onward, so the timer of line~\ref{ln:core:on_timer_header} eventually reaches $5\Delta$ and it broadcasts a $\VOTETWO$ message for $\vmax$ (line~\ref{ln:core:on_timer_set_vote2}), unless it has broadcast one earlier (line~\ref{ln:core:on_vote1_quorum_set_vote2}). By \Cref{clm:coda:unique_votes}(\ref{item:coda-one-message}) this message is unique, and it is forever-alive.
Let $S^{\star}$ denote the set of these $2f+1$ messages.
Every member of $S^{\star}$ is eventually accepted by every correct validator: those carrying $\NONE$ upon reception (line~\ref{ln:vc:rule_accept_vote2_bot}), and one carrying a lock $x \neq \NONE$ once $f+1$ $\langle \VOTEONE, \vmax, x \rangle$ messages arrive (line~\ref{ln:vc:rule_accept_vote2_lock}), which they do: its sender locked $x$ upon receiving $2f+1$ such messages (line~\ref{ln:core:on_vote1_quorum_set_lock}), at least $f+1$ of them broadcast by correct validators and hence forever-alive.
So every correct validator eventually holds $S^{\star}$, with every member accepted.

\proofstep{Step 3: All correct validators echo a common value for every proposer.}
The set $S^{\star}$ satisfies the condition of the view-change rule (line~\ref{ln:vc:on_vote2_quorum_compute_approved}): it consists of $2f+1$ accepted $\langle \VOTETWO, \vmax, \cdot, \cdot \rangle$ messages, and all of its non-$\NONE$ locks are identical.
The latter holds because these locks are the locks of correct validators, and no two correct validators lock different \metablock{}s in the same view.
Indeed, two locks on different \metablocks{} would require two sets of $2f+1$ $\VOTEONE$ messages for view $\vmax$, from distinct senders each, sharing $2(2f+1) - n = f+1$ senders, at least one of them correct and hence broadcasting two $\VOTEONE$ messages for $\vmax$, contradicting \Cref{clm:coda:unique_votes}(\ref{item:coda-one-message}).
Moreover, the gate of line~\ref{ln:rule_view_change_gate} is eventually open at every correct validator, as every correct validator is in view $\vmax$ forever and its clock advances after GST.
The rule fires for every set that satisfies its condition (line~\ref{ln:core:notation_every_witness}), so it fires for $S^{\star}$ at every correct validator, and $S^{\star}$ is the same set at every correct validator.
The values it echoes are determined by $S^{\star}$ alone.
If $S^{\star}$ contains a lock $\ell$, every correct validator echoes $\ell[\pid]$ for every proposer $p_\pid$.
Otherwise, for every proposer $p_\pid$, at most one value is the entry for $p_\pid$ in the guards of $f+1$ messages of $S^{\star}$, as two such values would require $2f+2 > |S^{\star}|$ messages, and every correct validator echoes that value, or $\ALLSAFE$ if there is none.

\proofstep{Contradiction.}
These echoes are forever-alive, so every correct validator eventually holds $2f+1$ of them for every proposer and enters view $\vmax + 1$ (line~\ref{ln:vc:upon_enter_enter}), contradicting the maximality of $\vmax$.
\qed
\end{proof}

For every view $V \geq 1$, let $\tau(V)$ denote the earliest time at which a correct validator enters view $V$. It is well defined by \Cref{clm:coda:all_enter}, and $\tau(V) < \tau(V+1)$ by \Cref{clm:coda:entry_downward_closed}.
We call a view $V$ a \emph{post-GST} view if and only if $\tau(V) \geq \mathrm{GST}$.
The gate of line~\ref{ln:rule_view_change_gate} now buys us a floor on how quickly post-GST views can pass.

\begin{claim} \label{clm:coda:view_floor}
For every post-GST view $V \geq 1$, no correct validator broadcasts an $\ECHO$ message for view $V$ before time $\tau(V) + 5\Delta$. Hence $\tau(V + 1) \geq \tau(V) + 5\Delta$.
\end{claim}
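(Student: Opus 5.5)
We plan to trace every $\ECHO$ message for view $V$ back to the view-change rule, which is gated by $5\Delta$ on the local clock, and to note that amplification can only repeat echoes that already exist. The second statement then follows from \Cref{clm:coda:entry_downward_closed}.

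Fix a post-GST view $V$ and let $m^\star$ be the first $\ECHO$ message for view $V$ broadcast by a correct validator $p_i$, say at time $t$. Echoes are broadcast only through $\mathsf{echo}$, which has two callers.

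\emph{Amplification.} Suppose $m^\star$ comes from the amplification rule (line~\ref{ln:vc:upon_echo_amplification}). That rule requires $f+1$ $\ECHO$ messages of view $V$ from distinct senders, so at least one of them was broadcast by a correct validator strictly before $t$. This contradicts the choice of $m^\star$ as the first such echo.

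\emph{View change.} Hence $m^\star$ comes from the view-change rule (line~\ref{ln:vc:on_vote2_quorum_compute_approved}). That rule fires only while $\viewi = V$ and only once $5\Delta$ has elapsed on $p_i$'s local clock since $p_i$ entered $V$. Validator $p_i$ entered $V$ at some time $e \ge \tau(V) \ge \mathrm{GST}$. Local clocks are drift-free after GST (\Cref{sec:coda:description}), so the $5\Delta$ of local time after $e$ is $5\Delta$ of real time, and $t \ge e + 5\Delta \ge \tau(V) + 5\Delta$.

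\emph{Crashes.} One subtlety is that $p_i$ may crash after entering $V$. The entry time is kept in persistent memory and a crash leaves timers intact (\Cref{sec:model}), so the gate is still measured from $e$. Crashes occur only before GST anyway, while $e \ge \mathrm{GST}$.

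\emph{The bound on $\tau(V+1)$.} By \Cref{clm:coda:entry_downward_closed}, the first entry into $V+1$ by a correct validator occurs at line~\ref{ln:vc:upon_enter_enter}. Its condition holds for some view $V' \ge \viewi$ with $V'+1 = V+1$, that is, $V' = V$. The entry therefore requires $2f+1$ $\ECHO$ messages of view $V$, at least $f+1$ of them broadcast by correct validators. Each of these was broadcast no earlier than $\tau(V)+5\Delta$ by the first statement, so $\tau(V+1) \ge \tau(V) + 5\Delta$.

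The main obstacle is the clock argument: we must make sure that the $5\Delta$ gate is measured in real time. This uses $e \ge \mathrm{GST}$, the drift-free clocks after GST, and the persistence of the entry time across recovery. The rest is a minimality argument.
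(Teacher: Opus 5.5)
Your proof is correct and takes the same approach as the paper. The first correct $\ECHO$ for view $V$ cannot come from amplification, so it comes from the view-change rule, which is gated by $5\Delta$ after an entry at or after $\tau(V) \geq \mathrm{GST}$ under drift-free clocks, and the bound on $\tau(V+1)$ then follows from \Cref{clm:coda:entry_downward_closed}. Your extra remark on crashes is harmless, since crashes occur only before GST.
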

\begin{proof}
By \Cref{clm:coda:entry_downward_closed}, the entry into view $V + 1$ at time $\tau(V+1)$ occurs at line~\ref{ln:vc:upon_enter_enter}, and thus rests on $2f+1$ $\langle \ECHO, V, \cdot, \cdot \rangle$ messages from distinct senders, at least one of which was broadcast by a correct validator before $\tau(V+1)$.
The first $\ECHO$ message for view $V$ broadcast by a correct validator is not produced by the amplification rule (line~\ref{ln:vc:upon_echo_amplification}), which presupposes an earlier such message. It is produced by the view-change rule, which its sender evaluates for view $V$ only once $5\Delta$ has elapsed since it entered $V$ (line~\ref{ln:rule_view_change_gate}).
That entry occurs at or after $\tau(V) \geq \mathrm{GST}$, and clocks exhibit no drift from GST onward, so no correct validator broadcasts an $\ECHO$ message for view $V$ before $\tau(V) + 5\Delta$, and the entry at $\tau(V+1)$, which follows such a message, is not earlier.
\qed
\end{proof}

\begin{claim} \label{clm:coda:prompt_entry}
For every post-GST view $V \geq 1$, every correct validator is in view $V$ at time $\tau(V) + 2\Delta$.
\end{claim}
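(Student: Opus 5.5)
The plan is to follow the first entry into $V$ at time $\tau(V)$ and show that the evidence behind it reaches every correct validator within $\Delta$. It then suffices to show that every correct validator reaches view $V$ by $\tau(V)+2\Delta$, and that none has left by then. Leaving $V$ requires entering a view at least $V+1$. By \Cref{clm:coda:entry_downward_closed} and the definition of $\tau$, no correct validator enters view $V+1$ before $\tau(V+1)$. Since $V$ is post-GST, \Cref{clm:coda:view_floor} gives $\tau(V+1) \geq \tau(V)+5\Delta > \tau(V)+2\Delta$. By \Cref{clm:coda:monotonic}, a validator that has entered $V$ by $\tau(V)+2\Delta$ is therefore still in $V$ at that time.

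\textbf{Reaching $V$, case $V = 1$.} By the participation convention, every correct validator starts \coda by GST and enters view~$1$ upon starting. So every correct validator has entered view~$1$ by $\mathrm{GST} \leq \tau(1)+2\Delta$.

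\textbf{Reaching $V$, case $V > 1$.} By \Cref{clm:coda:entry_downward_closed}, the first entry at $\tau(V)$ occurs at line~\ref{ln:vc:upon_enter_enter}. Its evidence is, for every proposer $p_\pid$, $2f+1$ messages $\langle \ECHO, V-1, \pid, x[\pid] \rangle$, at least $f+1$ of them broadcast by correct validators by time $\tau(V)$. These broadcasts belong to view $V-1$, and they stay alive through $[\tau(V), \tau(V)+\Delta]$. Indeed, a correct validator abandons them only on entering a view at least $V+1$, which happens no earlier than $\tau(V+1) \geq \tau(V)+5\Delta$ (line~\ref{ln:rule_abandon}, \Cref{clm:coda:view_floor}). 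Since $\tau(V) \geq \mathrm{GST}$, the delay bound of \Cref{sec:model} with $\tau = \tau(V)$ delivers these $f+1$ correct echoes per proposer to every correct validator by $\tau(V)+\Delta$. Validators are all online after GST and have started, so they keep what they receive.

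A correct validator whose view is still below $V$ at that point then acts in one of two ways.
\begin{itemize}
\item If its view is at most $V-2$, it catches up directly into view $V-1$. This is the catch-up rule with $V-1 > \viewi$.
\item If its view is $V-1$, it holds $f+1$ echoes for the same value, so the amplification rule fires for view $V-1$. It then broadcasts $\langle \ECHO, V-1, \pid, x[\pid] \rangle$ for every proposer. A caught-up validator does the same right after catching up, since the kept echoes now complete the amplification condition.
\end{itemize}
Hence, by $\tau(V)+\Delta$, every correct validator that has not already entered $V$ has echoed $x[\pid]$ for view $V-1$ and every proposer. A validator already in $V$ had those echoes counted earlier, or amplifies them anyway, since $V-1 = \viewi - 1$.

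These amplified echoes are also alive until $\tau(V)+5\Delta$. So by $\tau(V)+2\Delta$ every correct validator holds $2f+1$ matching echoes of view $V-1$ for every proposer, namely those of all correct validators. If its view is then $V-1$, it enters $V$ at line~\ref{ln:vc:upon_enter_next_view} with $V-1 \geq \viewi$. Otherwise it is already in $V$.

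\textbf{Main obstacle.} The delicate point is the amplification step for a validator that is already in view $V$, or that has not yet caught up when the echoes arrive. The issue is that kept messages for views below the previous view are discarded. I would handle this by the ordering argument above: a validator first catches up to $V-1$, and its kept echoes for $V-1$ then trigger amplification. I would also check carefully that "$V-1 \in \{\viewi-1, \viewi\}$" holds at each relevant moment, and that the caught-up validator does not enter $V$ by catch-up before amplifying. The latter is harmless either way, since a validator that enters $V$ early is already where the claim needs it to be.
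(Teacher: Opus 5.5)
Your proof is correct and follows the paper's argument essentially step for step. It has the same parts: the no-early-exit bound from \Cref{clm:coda:view_floor}, the participation convention for $V=1$, and, for $V>1$, delivery of the $f+1$ correct echoes by $\tau(V)+\Delta$, catch-up to $V-1$, amplification by all correct validators, and entry into $V$ by $\tau(V)+2\Delta$. Your remarks on kept messages and on a validator reaching $V$ before it amplifies spell out points the paper's proof leaves implicit.
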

\begin{proof}
By \Cref{clm:coda:view_floor} and repeated applications of \Cref{clm:coda:entry_downward_closed}, no correct validator enters a view greater than $V$ before $\tau(V + 1) \geq \tau(V) + 5\Delta$. Since views never decrease (\Cref{clm:coda:monotonic}), a validator that has entered view $V$ by time $\tau(V) + 2\Delta$, and that cannot have entered a view above $V$ by then, is in view $V$ at that time.
It therefore suffices to show that every correct validator enters view $V$ by time $\tau(V) + 2\Delta$.

If $V = 1$, every correct validator starts by GST under the participation convention of \Cref{sec:coda:model} and enters view $1$ upon starting, so $\tau(1) \leq \mathrm{GST}$. As view $1$ is post-GST, $\tau(1) = \mathrm{GST}$, and every correct validator enters view $1$ at that time.
Hence, let $V > 1$.
By \Cref{clm:coda:entry_downward_closed}, the entry at time $\tau(V)$, being the first into view $V$, occurs at line~\ref{ln:vc:upon_enter_enter}, and thus rests, for every proposer $p_\pid$, on $2f+1$ messages $\langle \ECHO, V - 1, \pid, x[\pid] \rangle$ from distinct senders for some \metablock{} $x$ (line~\ref{ln:vc:upon_enter_next_view}), at least $f+1$ of them broadcast by correct validators before $\tau(V)$.
Each of those broadcast by a correct validator is abandoned only upon its sender entering a view greater than $V$, which does not happen before $\tau(V) + 5\Delta$. These $f+1$ echoes per proposer are therefore alive throughout $[\tau(V), \tau(V) + 2\Delta]$ and, as $\mathrm{GST} \leq \tau(V)$, received by every correct validator by time $\tau(V) + \Delta$.
A correct validator whose view is then at most $V - 2$ enters view $V - 1$ on the spot, through the catch-up rule (line~\ref{ln:vc:upon_catch_up}).
Hence, at time $\tau(V) + \Delta$, every correct validator is in view $V - 1$ or view $V$, so $V - 1 \in \{ \viewi - 1, \viewi \}$ and the amplification rule fires (line~\ref{ln:vc:upon_echo_amplification}): every correct validator broadcasts $\langle \ECHO, V - 1, \pid, x[\pid] \rangle$ for every proposer $p_\pid$ by time $\tau(V) + \Delta$, unless it has done so already.
These $2f+1$ echoes are alive until $\tau(V) + 5\Delta$ for the same reason, hence received by every correct validator by time $\tau(V) + 2\Delta$, at which point every correct validator holds $2f+1$ echoes for every proposer and has entered view $V$.
\qed
\end{proof}

We are ready to prove termination.

\begin{lemma} [Termination] \label{lem:coda:termination}
Assume that every correct validator starts \coda by GST and that no correct validator stops taking part.
\coda{} (\Cref{alg:fallback}) satisfies termination.
Moreover, every correct validator decides by time $\tau(V^{\star}) + 5\Delta$, where $V^{\star}$ is the first post-GST view greater than $1$ whose leader is correct.
\end{lemma}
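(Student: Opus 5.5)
The plan is to analyze the view $V^{\star}$ directly, using the timing claims already established. \Cref{clm:coda:all_enter} shows that every view is eventually entered. Since $\tau$ is strictly increasing, only finitely many views have $\tau(V) < \mathrm{GST}$. With round-robin leaders, a view $V^{\star}>1$ with $\tau(V^{\star}) \geq \mathrm{GST}$ and a correct leader therefore exists, and termination reduces to the time bound. Write $\tau = \tau(V^{\star})$.

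\textbf{Step 1: everyone is in $V^{\star}$ for a long enough window.} By \Cref{clm:coda:prompt_entry}, every correct validator, the leader included, is in view $V^{\star}$ at time $\tau + 2\Delta$. By \Cref{clm:coda:view_floor}, no correct validator leaves before $\tau + 5\Delta$ or echoes for $V^{\star}$ before then. Hence broadcasts of view $V^{\star}$ and $V^{\star}-1$ are not abandoned in $[\tau, \tau+5\Delta]$, and the delay bound applies to them.

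\textbf{Step 2: the leader's proposal is justified by everyone.} The leader $p_\ell$ enters by $\tau+2\Delta$ with guard $g$. If it entered through line~\ref{ln:vc:upon_enter_enter}, it holds $2f+1$ echoes of view $V^{\star}-1$ for each $g[\pid]$. At least $f+1$ of these come from correct validators, and they are alive, so every correct validator receives them within $\Delta$. If it entered by catch-up, it holds $f+1$ echoes for $V^{\star}$ itself rather than $V^{\star}-1$. That case is excluded, because the first entry into $V^{\star}$ is at line~\ref{ln:vc:upon_enter_enter}, and \Cref{clm:coda:echo_implies_entry} forbids echoes for $V^{\star}$ before $\tau+5\Delta$. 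The part I expect to be delicate is the timing. I would first try to show that the leader's $\PREPARE$ reaches all correct validators by $\tau+3\Delta$, and that the justifying echoes for $g$ reach them by $\tau+3\Delta$ as well. The justifying echoes need care: the leader may have entered late, at $\tau+2\Delta$, on echoes sent as late as $\tau+2\Delta$. If the bound does not fit, I would refine the argument through the amplification step in \Cref{clm:coda:prompt_entry}, where every correct validator echoes $x$ by $\tau+\Delta$, but this yields the leader's guard only if it matches. So the main obstacle is pinning down that the leader's guard is echoed, with sufficient support, by every correct validator early enough.

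\textbf{Step 3: voting and deciding.} Assume every correct validator holds a justified $\PREPARE$ by $\tau+3\Delta$ and has not yet sent a $\VOTEONE$ in $V^{\star}$; it has not, since only a $\PREPARE$ triggers one. Then all $2f+1$ correct validators broadcast $\langle\VOTEONE, V^{\star}, g\rangle$ by $\tau+3\Delta$. Each correct validator receives $2f+1$ of them by $\tau+4\Delta$. Because the timeout at $5\Delta$ after entry fires no earlier than $\tau+5\Delta$, no correct validator has sent a $\VOTETWO$ yet, so each locks $g$ and broadcasts $\langle\VOTETWO, V^{\star}, g,\cdot\rangle$. By $\tau+5\Delta$, every correct validator holds $2f+1$ $\VOTETWO$ messages with lock $g$, while its view is still $V^{\star}$. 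It therefore decides by line~\ref{ln:core:on_vote2_quorum_header}, unless it decided earlier. This gives the bound $\tau(V^{\star})+5\Delta$, and agreement (\Cref{lem:coda:agreement}) ensures that any earlier decision is consistent.
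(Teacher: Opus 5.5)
Your plan follows the paper's proof. You locate $V^{\star}$, keep every correct validator in $V^{\star}$ until $\tau+5\Delta$ via \Cref{clm:coda:prompt_entry} and \Cref{clm:coda:view_floor}, rule out a catch-up entry of the leader, and then spend one $\Delta$ per round. The step you leave open in Step~2 is not a real obstacle, and you do not need the amplification detour. The leader $p_\ell$ enters $V^{\star}$ at some time $t_\ell \in [\tau, \tau+2\Delta]$, on echoes of view $V^{\star}-1$ that were broadcast before $t_\ell$. Such an echo is abandoned only when its sender enters a view at least $V^{\star}+1$, which no correct validator does before $\tau+5\Delta$. The $\PREPARE$ is abandoned only when $p_\ell$ enters $V^{\star}+2$. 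Since $t_\ell \geq \tau \geq \mathrm{GST}$, the delay bound applied at $t_\ell$ delivers the $\PREPARE$, and the $f+1$ correct echoes for every proposer, to every correct validator by $t_\ell+\Delta \leq \tau+3\Delta$. This is exactly how the paper closes its Step~1. For excluding the catch-up entry, the claim to cite is \Cref{clm:coda:view_floor}: a catch-up entry by $\tau+2\Delta$ would need an echo of view $V^{\star}$ from a correct validator before $\tau+5\Delta$. \Cref{clm:coda:echo_implies_entry} does not give you that.

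Three of your intermediate assertions are false as stated, although the conclusions survive.

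First, a strictly increasing $\tau$ can still take infinitely many values below GST. That only finitely many views are entered before GST follows from the model's assumption that a validator executes finitely many handlers in any finite interval, which the paper invokes.

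Second, a correct validator may already have broadcast its $\VOTEONE$ message before $\tau+3\Delta$. It may also have broadcast its $\VOTETWO$ message before $\tau+4\Delta$, on an earlier $\VOTEONE$ quorum; the timeout only rules out a $\VOTETWO$ message without a lock. What you need, and what the paper argues, is that any such vote is for $g$. In a view greater than $1$, a correct validator votes only for the leader's $\PREPARE$, of which the correct $p_\ell$ sends one. A quorum of $2f+1$ $\VOTEONE$ messages contains one from a correct validator, so any lock set in $V^{\star}$ is $g$.

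Third, at time $\tau+5\Delta$ a correct validator may already be in view $V^{\star}+1$. The decision rule still fires because it also counts $\VOTETWO$ messages of the previous view. These messages stay alive, because no correct validator enters $V^{\star}+2$ before $\tau+10\Delta$, by \Cref{clm:coda:view_floor} applied twice.
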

\begin{proof}
The view $V^{\star}$ exists: every correct validator performs finitely many steps by any given time, so only finitely many views are entered before GST and post-GST views exist (\Cref{clm:coda:all_enter}). Moreover, the leader rotates round-robin and at least $2f+1$ validators are correct.
Write $T = \tau(V^{\star})$ and $p_{\ell} = \leader(V^{\star})$.
Throughout, we use \Cref{clm:coda:view_floor}, which gives $\tau(V^{\star} + 1) \geq T + 5\Delta$ and $\tau(V^{\star} + 2) \geq T + 10\Delta$, so that no correct validator is in a view greater than $V^{\star}$ before time $T + 5\Delta$, nor in a view greater than $V^{\star} + 1$ before time $T + 10\Delta$.

\proofstep{Step 1: By time $T + 3\Delta$, every correct validator holds a justified proposal $\langle \PREPARE, V^{\star}, x \rangle$ from $p_{\ell}$.}
By \Cref{clm:coda:prompt_entry}, every correct validator, $p_{\ell}$ included, is in view $V^{\star}$ at time $T + 2\Delta$.
Upon entering $V^{\star}$, the correct $p_{\ell}$ broadcasts $\langle \PREPARE, V^{\star}, x \rangle$ for $x = \mathit{guard}_{\ell}[V^{\star}]$ (line~\ref{ln:core:on_enter_if_leader}), the \metablock{} with which it entered the view. This broadcast is abandoned only upon $p_{\ell}$ entering view $V^{\star} + 2$, impossible before $T + 10\Delta$, so every correct validator receives the proposal by time $T + 3\Delta$.
The entry of $p_{\ell}$ into $V^{\star}$ is not a catch-up entry (line~\ref{ln:vc:upon_catch_up_enter}): that would require an $\ECHO$ message for view $V^{\star}$ from a correct validator, none of which exists before $T + 5\Delta$ (\Cref{clm:coda:view_floor}), whereas $p_{\ell}$ enters by $T + 2\Delta$.
Hence, $p_{\ell}$ entered $V^{\star}$ with guard $x$ on the strength of $2f+1$ messages $\langle \ECHO, V^{\star} - 1, \pid, x[\pid] \rangle$ for every proposer $p_\pid$, at least $f+1$ of them from correct validators. These are alive until their senders enter view $V^{\star} + 1$, hence received by every correct validator by time $T + 3\Delta$, at which point $\justified(x)$ holds everywhere (line~\ref{ln:core:fn_justified_return}).

\proofstep{Step 2: By time $T + 4\Delta$, every correct validator has broadcast $\langle \VOTETWO, V^{\star}, x, \cdot \rangle$.}
Consider a correct validator $p_i$ at time $T + 3\Delta$, holding the justified proposal of Step 1 and still in view $V^{\star}$, as no correct validator enters $V^{\star} + 1$ before $T + 5\Delta$.
If $\voteone[V^{\star}] = \NONE$, the rule at line~\ref{ln:core:on_prepare_header} fires and $p_i$ broadcasts $\langle \VOTEONE, V^{\star}, x \rangle$. Otherwise, $p_i$ has already broadcast a $\VOTEONE$ message for $V^{\star}$, and that message carries $x$ as well, since in view $V^{\star} > 1$ a correct validator votes only for a proposal received from $p_{\ell}$ (line~\ref{ln:core:on_prepare_header}), and the correct $p_{\ell}$ broadcasts a single one.
All $2f+1$ correct validators thus broadcast $\langle \VOTEONE, V^{\star}, x \rangle$ by time $T + 3\Delta$. These are alive until their senders enter view $V^{\star} + 2$, impossible before $T + 10\Delta$, so every correct validator receives them by time $T + 4\Delta$. At that time $p_i$ is still in view $V^{\star}$, again because no correct validator enters $V^{\star} + 1$ before $T + 5\Delta$, so the rule at line~\ref{ln:core:on_vote1_quorum_header}, which acts on the current view only, fires for $V^{\star}$. This is where the gate of line~\ref{ln:rule_view_change_gate} earns its length: a shorter gate would let a correct validator leave $V^{\star}$ before its $\VOTEONE$ quorum for $V^{\star}$ completes, and it would then never broadcast a $\VOTETWO$ message for $V^{\star}$.
At that moment, either $\votetwo[V^{\star}] = \NONE$, in which case $p_i$ locks $x$ and broadcasts $\langle \VOTETWO, V^{\star}, x, \cdot \rangle$ (lines~\ref{ln:core:on_vote1_quorum_set_lock} and~\ref{ln:core:on_vote1_quorum_set_vote2}), or $\votetwo[V^{\star}]$ was set earlier at line~\ref{ln:core:on_vote1_quorum_set_vote2}, the timeout of line~\ref{ln:core:on_timer_header} firing no earlier than $T + 5\Delta$. In the latter case, the lock is a \metablock{} backed by $2f+1$ $\VOTEONE$ messages for $V^{\star}$, at least one of them from a correct validator, and is therefore $x$.
Either way, $p_i$ has broadcast $\langle \VOTETWO, V^{\star}, x, \cdot \rangle$ by time $T + 4\Delta$.

\proofstep{Step 3: Every correct validator decides by time $T + 5\Delta$.}
By Step 2, all $2f+1$ correct validators broadcast $\langle \VOTETWO, V^{\star}, x, \cdot \rangle$ by time $T + 4\Delta$. These messages are alive until their senders enter view $V^{\star} + 2$, impossible before $T + 10\Delta$, so every correct validator receives all of them by time $T + 5\Delta$.
As $x \neq \NONE$, the rule at line~\ref{ln:core:on_vote2_quorum_header} then fires, provided $V^{\star} \in \{ \viewi - 1, \viewi \}$ at that moment: this holds, since no correct validator is in a view greater than $V^{\star} + 1$ before $T + 10\Delta$.
Hence, every correct validator decides by time $T + 5\Delta$.
\qed
\end{proof}

The participation convention, that every correct validator starts \coda by GST, costs no generality.
Suppose that every correct validator starts \coda and none stops taking part, and let $T_0 = \max(\mathrm{GST}, T_{\mathrm{start}})$, where $T_{\mathrm{start}}$ is the time at which the last correct validator starts.
From $T_0$ on, every correct validator runs \coda, message delays are bounded, clocks do not drift, and no correct validator crashes.
A late starter may miss broadcasts that others abandoned before it started (line~\ref{ln:rule_abandon}), but the proofs above rely on the delivery only of messages of correct validators that stay alive long enough after $T_0$ (\Cref{clm:coda:forever_alive,clm:coda:all_enter,clm:coda:view_floor,clm:coda:prompt_entry} and the proof of \Cref{lem:coda:termination}), and the model delivers these to late starters as well, also if they dropped an earlier copy before starting.
The termination proofs thus hold with $T_0$ in place of GST: every correct validator decides by time $\tau(V^{\star}) + 5\Delta$, where $V^{\star}$ is the first view greater than $1$ that no correct validator enters before $T_0$ and whose leader is correct.

\paragraph{Strong unanimity.}
We now turn to strong unanimity, which \coda{} owes to each proposer separately.
The proof rests on a single invariant, the per-proposer counterpart of \Cref{clm:coda:messages_after_vdec}: once the correct validators agree on a proposer's entry, that entry never departs from the agreed value, in any guard and in any message.

\begin{claim} \label{clm:coda:unanimity_invariant}
Let $p_\pid$ be a proposer and let $v \in \VALUE$ be a value such that $\mathit{input}_w[\pid] = v$ for every correct validator $p_w$ that inputs.
Let $m$ be a message broadcast by a correct validator $p_i$.
Then:
\begin{compactenum}
    \item If $m = \langle \ECHO, V, \pid, x \rangle$, then $x = v$. \label{item:coda-unanimity-echo}
    \item If $m = \langle \VOTEONE, V, x \rangle$, then $x[\pid] = v$. \label{item:coda-unanimity-vote1}
\end{compactenum}
Moreover, $\guardi[V][\pid] = v$ holds at every time at which $\viewi = V$.
\end{claim}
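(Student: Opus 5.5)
The proof mirrors that of \Cref{clm:coda:messages_after_vdec}. We proceed by induction on the order of the relevant events at correct validators: broadcasts of $\ECHO$ messages for proposer $p_\pid$, broadcasts of $\VOTEONE$ messages, and entries into views. We fix such an event and assume the claim for every earlier event. The claim then splits into three subclaims, one per kind of event: the guard, the $\VOTEONE$ message, and the $\ECHO$ message.

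\emph{Guards.} For view $1$, the guard is the validator's input (line~\ref{ln:core:init_set_proposal}), whose entry for $p_\pid$ is $v$ by hypothesis. For a view $W > 1$, the entry happens at line~\ref{ln:vc:upon_enter_enter} or at line~\ref{ln:vc:upon_catch_up_enter}. In either case, $\guardi[W][\pid]$ is a value for which at least $f+1$ echoes $\langle \ECHO, \cdot, \pid, \cdot \rangle$ were received, so at least one of them is from a correct validator and earlier. The induction hypothesis (part~\ref{item:coda-unanimity-echo}) gives the value $v$. Guards are written only on entry, and each view is entered once (\Cref{clm:coda:monotonic}), so the value persists while $\viewi = W$.

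\emph{$\VOTEONE$.} In view $1$, the vote is for $\guardi[1]$, whose entry for $p_\pid$ is $v$. In a view $V > 1$, \Cref{clm:coda:vote1_guard} provides $f+1$ echoes of view $V-1$ for $x[\pid]$, one of them correct, so $x[\pid] = v$ by induction.

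\emph{$\ECHO$.} Amplification is immediate from the induction hypothesis, since some correct echo precedes it. For the view-change rule with set $S$ in view $V$, there are two branches. In the first branch, $S$ holds a lock $\ell$. The message carrying it is accepted, so by \Cref{clm:coda:accepted_lock} some correct validator broadcast $\langle \VOTEONE, V, \ell \rangle$, and part~\ref{item:coda-unanimity-vote1} gives $\ell[\pid] = v$. Otherwise, at least $f+1$ messages of $S$ come from correct validators. By \Cref{clm:coda:unique_votes}(\ref{item:coda-vote2-carries-lock}), each carries its sender's guard for view $V$, which was set when that sender entered $V$, before the broadcast. The guard subclaim, applied through the induction hypothesis, shows that its entry for $p_\pid$ is $v$. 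Hence $v$ is shared by $f+1$ guards in $S$, and it is the unique such value, since two would need $2f+2 > |S|$ messages. The echoed value is therefore $v$, and $\ALLSAFE$ is never echoed.

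The main obstacle is setting up the induction so that it is well-founded. The guard of a correct $\VOTETWO$ sender must be established by that sender's earlier view entry, so it cannot be taken from the event currently being proved. The lock branch needs a correct $\VOTEONE$ for $\ell$ that strictly precedes the echo; this holds because acceptance requires receipt. Finally, an event counts as occurring in the validator's own handler order after any message it relies on was received. This also covers validators that stop taking part, since the argument only looks at events actually caused.
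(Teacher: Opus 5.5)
Your proof is correct and follows the paper's own argument essentially step by step. Both use the same induction over broadcasts and view entries at correct validators, split into the same three subclaims (guard, $\VOTEONE$, $\ECHO$), with \Cref{clm:coda:vote1_guard}, \Cref{clm:coda:accepted_lock}, and \Cref{clm:coda:unique_votes} invoked at the same points and the same $2f+2 > |S|$ uniqueness count in the guard branch.
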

\begin{proof}
As in the proof of \Cref{clm:coda:messages_after_vdec}, we proceed by induction on the order of the events of interest at correct validators, namely the broadcasts covered by the claim and the entries into views.
Fix such an event, and suppose that the claim holds for every event that precedes it.
We establish the claim through three subclaims.

\begin{subclaim} \label{clm:coda:unanimity_guard}
If a correct validator $p_i$ enters a view $W$, then $\guardi[W][\pid] = v$ from that moment onward.
\end{subclaim}
\begin{subproof}[Proof of \Cref{clm:coda:unanimity_guard}]
If $W = 1$, the entry sets $\guardi[1]$ to $\propi$ (line~\ref{ln:core:init_set_proposal}), and $\propi[\pid] = v$ by assumption, as $p_i$ entered view~$1$ upon starting \coda, when it input (line~\ref{ln:core:init_enter_view1}).
If $W > 1$, then $p_i$ enters view $W$ at line~\ref{ln:vc:upon_enter_enter}, with $W = V + 1$ and a \metablock{} $x$ for which a set of $2f+1$ messages $\langle \ECHO, V, \pid, x[\pid] \rangle$ from distinct senders exists (line~\ref{ln:vc:upon_enter_next_view}), or at line~\ref{ln:vc:upon_catch_up_enter}, with a \metablock{} $x$ for which a set of $f+1$ messages $\langle \ECHO, W, \pid, x[\pid] \rangle$ from distinct senders exists (line~\ref{ln:vc:upon_catch_up}).
In both cases, at least one of the certifying messages was broadcast by a correct validator, so $x[\pid] = v$ by the induction hypothesis (part~\ref{item:coda-unanimity-echo}), and the entry sets $\guardi[W] \gets x$ (line~\ref{ln:vc:upon_enter_set_guard} or~\ref{ln:vc:upon_catch_up_set_guard}).
In both cases, $\guardi[W][\pid] = v$. Moreover, guards are written upon entering a view alone, and view $W$ is entered once (\Cref{clm:coda:monotonic}), so the value persists.
\end{subproof}

\begin{subclaim} \label{clm:coda:unanimity_vote1}
If $m = \langle \VOTEONE, V, x \rangle$, then $x[\pid] = v$.
\end{subclaim}
\begin{subproof}[Proof of \Cref{clm:coda:unanimity_vote1}]
If $V = 1$, then $m$ is created at line~\ref{ln:core:on_enter_dummy_vote1} and carries $x = \guardi[1]$, which is $\propi$ (line~\ref{ln:core:init_set_proposal}), so $x[\pid] = v$ by assumption, as $p_i$ is in view~$1$ and has thus input.
If $V > 1$, then, by \Cref{clm:coda:vote1_guard}, validator $p_i$ has received $f+1$ messages $\langle \ECHO, V - 1, \pid, x[\pid] \rangle$ from distinct validators, at least one of which was broadcast by a correct validator. The induction hypothesis (part~\ref{item:coda-unanimity-echo}) gives $x[\pid] = v$.
\end{subproof}

\begin{subclaim} \label{clm:coda:unanimity_echo}
If $m = \langle \ECHO, V, \pid, x \rangle$, then $x = v$.
\end{subclaim}
\begin{subproof}[Proof of \Cref{clm:coda:unanimity_echo}]
Message $m$ is broadcast at line~\ref{ln:vc:fn_echo_broadcast} upon an invocation of $\mathsf{echo}(V, p_\pid, x)$, of which \coda has two sources.
\begin{compactitem}
\item The \emph{amplification rule} (line~\ref{ln:vc:upon_echo_amplification}) requires a set of $f+1$ messages $\langle \ECHO, V, \pid, x \rangle$ from distinct senders, at least one of which was broadcast by a correct validator, and thus earlier. The induction hypothesis (part~\ref{item:coda-unanimity-echo}) gives $x = v$.

\item The \emph{view-change rule} (line~\ref{ln:vc:on_vote2_quorum_compute_approved}) is triggered while $\viewi = V$ by a set $S$ of $2f+1$ accepted $\langle \VOTETWO, V, \cdot, \cdot \rangle$ messages from distinct senders.
At most $f$ of these senders are faulty, so at least $f+1$ messages of $S$ were broadcast by correct validators. Each was broadcast at a time at which its sender's current view was $V$ (\Cref{clm:coda:unique_votes}(\ref{item:coda-one-message})), and thus carries the guard of its sender at that time, whose entry for $p_\pid$ is $v$ (\Cref{clm:coda:unique_votes}(\ref{item:coda-vote2-carries-lock}) and the induction hypothesis, applied to that sender's entry into view $V$).
We distinguish the branch the rule takes.
\begin{compactitem}
    \item The first branch (line~\ref{ln:vc:on_vote2_quorum_set_approved_lock}) echoes $\ell[\pid]$ for a non-$\NONE$ lock $\ell$ of an accepted message of $S$.
    By \Cref{clm:coda:accepted_lock}, some correct validator broadcast $\langle \VOTEONE, V, \ell \rangle$, so $\ell[\pid] = v$ by the induction hypothesis (part~\ref{item:coda-unanimity-vote1}), and thus $x = v$.

    \item The second branch (line~\ref{ln:vc:on_vote2_quorum_guard_if}) is reached otherwise.
    Its condition asks for a value that is the entry for $p_\pid$ in the guards of $f+1$ messages of $S$, and $v$ is such a value: as established above, at least $f+1$ messages of $S$ carry a guard whose entry for $p_\pid$ is $v$.
    It is the only such value, as two values would require $2f+2 > |S|$ messages; hence $x = v$.
    The condition therefore holds, and $\ALLSAFE$ is not echoed for $p_\pid$ (line~\ref{ln:vc:on_vote2_quorum_allsafe_echo}).\subqedhere
\end{compactitem}
\end{compactitem}
\end{subproof}

The three subclaims cover every event of interest, which completes the induction.
\qed
\end{proof}

We are finally ready to prove strong unanimity.

\begin{lemma} [Strong unanimity] \label{lem:coda:strong_unanimity}
\coda{} (\Cref{alg:fallback}) satisfies strong unanimity.
\end{lemma}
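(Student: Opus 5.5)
The plan is to derive strong unanimity directly from \Cref{clm:coda:unanimity_invariant}. Fix a proposer $p_\pid$ and a value $v \in \VALUE$ such that every correct validator that inputs a \metablock inputs $v$ for $p_\pid$. If no correct validator decides, there is nothing to prove. So suppose a correct validator $p_i$ decides a \metablock $x$. We show that $x[\pid] = v$, which in particular gives $x[\pid] \neq \ALLSAFE$, since $v \in \VALUE$ and $\ALLSAFE \notin \VALUE$.

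The decision rule (line~\ref{ln:core:on_vote2_quorum_header}) requires $2f+1$ messages $\langle \VOTETWO, V, x, \cdot \rangle$ with $x \neq \NONE$, from distinct senders and for some view $V$. At least one of these senders is a correct validator $p_w$. By \Cref{clm:coda:unique_votes}(\ref{item:coda-vote2-carries-lock}), $\mathit{lock}_w[V] = x$. This lock is written only at line~\ref{ln:core:on_vote1_quorum_set_lock}, upon receiving $2f+1$ messages $\langle \VOTEONE, V, x \rangle$ from distinct senders. Among these senders, at least one correct validator broadcast $\langle \VOTEONE, V, x \rangle$. Part~(\ref{item:coda-unanimity-vote1}) of \Cref{clm:coda:unanimity_invariant} then yields $x[\pid] = v$.

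There is no real obstacle here, since all the work lives in the invariant. The one point to handle with care is that the argument traces only messages broadcast by correct validators, so it stays valid when some correct validators stop taking part or never input. The invariant's hypothesis already quantifies only over correct validators that input, so no further case analysis is needed.
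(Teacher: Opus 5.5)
Your proposal is correct and matches the paper's proof: you trace the decision to a correct $\VOTETWO$ sender's lock, then to a correct sender of $\langle \VOTEONE, V, x \rangle$, and apply part~(2) of \Cref{clm:coda:unanimity_invariant}. The only difference is presentation. The paper argues by contradiction from $x[\pid] = \ALLSAFE$, whereas you show $x[\pid] = v$ directly and then use $\ALLSAFE \notin \VALUE$; the argument is the same.
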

\begin{proof}
Suppose that all correct validators that input input the same value $v \in \VALUE$ for a proposer $p_\pid$, and suppose, by contradiction, that a correct validator decides a \metablock{} $x$ with $x[\pid] = \ALLSAFE$ in some view $V$.
The decision requires the receipt of $2f+1$ messages $\langle \VOTETWO, V, x, \cdot \rangle$ from distinct senders (line~\ref{ln:core:on_vote2_quorum_header}), at least one of which was broadcast by a correct validator $p_w$.
By \Cref{clm:coda:unique_votes}(\ref{item:coda-vote2-carries-lock}), $\mathit{lock}_w[V] = x \neq \NONE$, which is set only upon the receipt of $2f+1$ messages $\langle \VOTEONE, V, x \rangle$ from distinct senders (lines~\ref{ln:core:on_vote1_quorum_header} and~\ref{ln:core:on_vote1_quorum_set_lock}), at least one of which was broadcast by a correct validator.
By \Cref{clm:coda:unanimity_invariant}(\ref{item:coda-unanimity-vote1}), $x[\pid] = v$. As $v \in \VALUE$ and $\ALLSAFE \notin \VALUE$ (line~\ref{ln:core:const_allsafe}), this contradicts $x[\pid] = \ALLSAFE$.
\qed
\end{proof}

\paragraph{Obligation.}
Obligation is, like strong unanimity, owed to each proposer separately, and it is proved by the same induction, with provenance in place of uniqueness.
We call a value $x \in \VALUE \cup \{ \ALLSAFE \}$ \emph{admissible for a proposer $p_\pid$} if $x = \ALLSAFE$ or some correct validator $p_w$ input it for $p_\pid$, that is, $\mathit{input}_w[\pid] = x$.
The invariant below is the counterpart of \Cref{clm:coda:unanimity_invariant}: instead of pinning a proposer's entry to one value, it confines that entry to the values a correct validator stands behind.

\begin{claim} \label{clm:coda:obligation_invariant}
Let $p_\pid$ be a proposer, and let $m$ be a message broadcast by a correct validator $p_i$.
Then:
\begin{compactenum}
    \item If $m = \langle \ECHO, V, \pid, x \rangle$, then $x$ is admissible for $p_\pid$. \label{item:coda-obligation-echo}
    \item If $m = \langle \VOTEONE, V, x \rangle$, then $x[\pid]$ is admissible for $p_\pid$. \label{item:coda-obligation-vote1}
\end{compactenum}
Moreover, $\guardi[V][\pid]$ is admissible for $p_\pid$ at every time at which $\viewi = V$.
\end{claim}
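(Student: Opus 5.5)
The plan is to follow the induction of \Cref{clm:coda:unanimity_invariant} almost verbatim, replacing ``equal to $v$'' by ``admissible for $p_\pid$''. The induction runs over the order of the events of interest at correct validators, namely the broadcasts of $\ECHO$ and $\VOTEONE$ messages and the entries into views. We fix such an event and assume that the claim holds for all earlier ones. The argument is organised as three subclaims, one each for guards, $\VOTEONE$ messages and $\ECHO$ messages. Admissibility is closed under the operations we need, because $\ALLSAFE$ is admissible by definition and so is every correct input.

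\textbf{Guards.} If $p_i$ enters view~$1$, its guard is $\propi$, and $\propi[\pid]$ is admissible because $p_i$ is correct and input it. If $p_i$ enters a view $W>1$, it does so at line~\ref{ln:vc:upon_enter_enter} or at line~\ref{ln:vc:upon_catch_up_enter}. In either case, for each proposer, at least $f+1$ $\ECHO$ messages for $x[\pid]$ support the entry. One of them comes from a correct validator and was broadcast earlier, so the induction hypothesis (part~\ref{item:coda-obligation-echo}) makes $x[\pid]$ admissible. The value persists because guards are written only upon view entry, and each view is entered once (\Cref{clm:coda:monotonic}).

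\textbf{$\VOTEONE$ messages.} In view~$1$, the message carries $\guardi[1]=\propi$, which is admissible by the guard case. For $V>1$, \Cref{clm:coda:vote1_guard} supplies $f+1$ echoes of view $V-1$ for $x[\pid]$. One of them is correct, so the induction hypothesis applies.

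\textbf{$\ECHO$ messages.} For the amplification rule, one of the $f+1$ echoes is correct and earlier, and the induction hypothesis applies. For the view-change rule with set $S$, there are three branches. In the lock branch, $\ell[\pid]$ is echoed, and \Cref{clm:coda:accepted_lock} gives a correct $\VOTEONE$ for $\ell$; part~\ref{item:coda-obligation-vote1} then makes $\ell[\pid]$ admissible. In the guard branch, a value $G$ shared by $f+1$ guards in $S$ is echoed. At least one of those messages comes from a correct sender, and by \Cref{clm:coda:unique_votes}(\ref{item:coda-vote2-carries-lock}) it carries that sender's guard for $V$, which is admissible by the induction hypothesis applied to the sender's entry into $V$. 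The third branch echoes $\ALLSAFE$, which is trivially admissible.

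This case is the only one that differs in substance from the unanimity proof, and it is easier. We never need uniqueness of the shared guard value, only that one correct sender contributes to it. Since $\ALLSAFE$ is admissible by definition, no branch fails. The main care point is the ordering in the induction: the correct $\VOTETWO$ sender's entry into $V$ must precede its broadcast, which precedes the echo event. This holds by \Cref{clm:coda:unique_votes}(\ref{item:coda-one-message}).
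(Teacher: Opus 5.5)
Your proposal is correct and follows the paper's proof essentially step for step: the same induction over the broadcasts and view entries of correct validators, the same three subclaims for guards, $\VOTEONE$ messages, and $\ECHO$ messages, and the same handling of the three branches of the view-change rule, where $\ALLSAFE$ is admissible by definition. Your observation that the guard branch needs only one correct contributor to the shared value, not its uniqueness, is exactly why the paper's argument here is simpler than in the unanimity invariant.
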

\begin{proof}
We proceed by induction on the order of the events of interest at correct validators, exactly as in the proof of \Cref{clm:coda:unanimity_invariant}, and we establish the claim through the same three subclaims.

\begin{subclaim} \label{clm:coda:obligation_guard}
If a correct validator $p_i$ enters a view $W$, then $\guardi[W][\pid]$ is admissible for $p_\pid$ from that moment onward.
\end{subclaim}
\begin{subproof}[Proof of \Cref{clm:coda:obligation_guard}]
If $W = 1$, the entry sets $\guardi[1]$ to $\propi$ (line~\ref{ln:core:init_set_proposal}), and $\propi[\pid]$ is admissible by definition, $p_i$ being correct.
If $W > 1$, then $p_i$ enters view $W$ at line~\ref{ln:vc:upon_enter_enter} or at line~\ref{ln:vc:upon_catch_up_enter}, and sets $\guardi[W] \gets x$ (line~\ref{ln:vc:upon_enter_set_guard} or~\ref{ln:vc:upon_catch_up_set_guard}) for a \metablock{} $x$ backed, for the proposer $p_\pid$, by a set of $2f+1$ messages $\langle \ECHO, W - 1, \pid, x[\pid] \rangle$ (line~\ref{ln:vc:upon_enter_next_view}) or of $f+1$ messages $\langle \ECHO, W, \pid, x[\pid] \rangle$ (line~\ref{ln:vc:upon_catch_up}) from distinct senders. Either way, at least one of them was broadcast by a correct validator, so $x[\pid]$ is admissible by the induction hypothesis (part~\ref{item:coda-obligation-echo}).
Guards are written upon entering a view alone, and view $W$ is entered once (\Cref{clm:coda:monotonic}), so the value persists.
\end{subproof}

\begin{subclaim} \label{clm:coda:obligation_vote1}
If $m = \langle \VOTEONE, V, x \rangle$, then $x[\pid]$ is admissible for $p_\pid$.
\end{subclaim}
\begin{subproof}[Proof of \Cref{clm:coda:obligation_vote1}]
If $V = 1$, then $m$ is created at line~\ref{ln:core:on_enter_dummy_vote1} and carries $x = \guardi[1]$, which is $\propi$ (line~\ref{ln:core:init_set_proposal}), whose entry for $p_\pid$ is admissible.
If $V > 1$, then, by \Cref{clm:coda:vote1_guard}, validator $p_i$ has received $f+1$ messages $\langle \ECHO, V - 1, \pid, x[\pid] \rangle$ from distinct validators, at least one of which was broadcast by a correct validator. The induction hypothesis (part~\ref{item:coda-obligation-echo}) makes $x[\pid]$ admissible.
\end{subproof}

\begin{subclaim} \label{clm:coda:obligation_echo}
If $m = \langle \ECHO, V, \pid, x \rangle$, then $x$ is admissible for $p_\pid$.
\end{subclaim}
\begin{subproof}[Proof of \Cref{clm:coda:obligation_echo}]
Message $m$ is broadcast at line~\ref{ln:vc:fn_echo_broadcast} upon an invocation of $\mathsf{echo}(V, p_\pid, x)$, of which \coda has two sources.
\begin{compactitem}
\item The \emph{amplification rule} (line~\ref{ln:vc:upon_echo_amplification}) requires a set of $f+1$ messages $\langle \ECHO, V, \pid, x \rangle$ from distinct senders, at least one of which was broadcast by a correct validator, and thus earlier. The induction hypothesis (part~\ref{item:coda-obligation-echo}) applies.

\item The \emph{view-change rule} (line~\ref{ln:vc:on_vote2_quorum_compute_approved}) is triggered by a set $S$ of $2f+1$ accepted $\langle \VOTETWO, V, \cdot, \cdot \rangle$ messages from distinct senders.
Unlike in the proof of \Cref{clm:coda:unanimity_invariant}, all three of its branches are reachable here, and we inspect each.
\begin{compactitem}
    \item The first branch (line~\ref{ln:vc:on_vote2_quorum_set_approved_lock}) echoes $\ell[\pid]$ for a non-$\NONE$ lock $\ell$ of an accepted message of $S$.
    By \Cref{clm:coda:accepted_lock}, some correct validator broadcast $\langle \VOTEONE, V, \ell \rangle$, so $\ell[\pid]$ is admissible by the induction hypothesis (part~\ref{item:coda-obligation-vote1}).

    \item The second branch (line~\ref{ln:vc:on_vote2_quorum_guard_echo}) echoes a value $G$ that is the entry for $p_\pid$ in the guards of $f+1$ messages of $S$.
    These come from distinct senders, at most $f$ of which are faulty, so at least one of them was broadcast by a correct validator, at a time at which that sender's current view was $V$ (\Cref{clm:coda:unique_votes}(\ref{item:coda-one-message})), and it carries the sender's guard at that time (\Cref{clm:coda:unique_votes}(\ref{item:coda-vote2-carries-lock})).
    That guard's entry for $p_\pid$ is admissible by the induction hypothesis, applied to that sender's entry into view $V$. Hence, so is $G$.

    \item The third branch (line~\ref{ln:vc:on_vote2_quorum_allsafe_echo}) echoes $\ALLSAFE$, which is admissible by definition.\subqedhere
\end{compactitem}
\end{compactitem}
\end{subproof}

The three subclaims cover every event of interest, which completes the induction.
\qed
\end{proof}

We are now ready to prove the obligation property.

\begin{lemma} [Obligation] \label{lem:coda:obligation}
\coda{} (\Cref{alg:fallback}) satisfies obligation.
\end{lemma}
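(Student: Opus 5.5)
The plan mirrors the proof of \Cref{lem:coda:strong_unanimity}, with \Cref{clm:coda:obligation_invariant} in place of \Cref{clm:coda:unanimity_invariant}. Fix a correct validator that decides a \metablock{} $x$ in some view $V$, and a proposer $p_\pid$ with $x[\pid] = v \neq \ALLSAFE$. We must show that some correct validator input $v$ for $p_\pid$.

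First, trace the decision back to a correct $\VOTEONE$ message. The decision rests on $2f+1$ messages $\langle \VOTETWO, V, x, \cdot \rangle$ from distinct senders (line~\ref{ln:core:on_vote2_quorum_header}), with $x \neq \NONE$. At least one sender $p_w$ is correct. By \Cref{clm:coda:unique_votes}(\ref{item:coda-vote2-carries-lock}), $\mathit{lock}_w[V] = x$. This lock is written only at line~\ref{ln:core:on_vote1_quorum_set_lock}, upon receipt of $2f+1$ messages $\langle \VOTEONE, V, x \rangle$ from distinct senders, so some correct validator broadcast $\langle \VOTEONE, V, x \rangle$.

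Second, apply \Cref{clm:coda:obligation_invariant}(\ref{item:coda-obligation-vote1}) to that message. It gives that $x[\pid] = v$ is admissible for $p_\pid$: either $v = \ALLSAFE$, or some correct validator $p_{w'}$ has $\mathit{input}_{w'}[\pid] = v$. Since $v \neq \ALLSAFE$ by assumption, the second alternative holds, which is obligation.

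No step is a real obstacle, since the inductive work is done in \Cref{clm:coda:obligation_invariant}. The only point to check is that this chain uses no liveness assumption, so that obligation holds even when correct validators stop taking part. It does not: it reasons only about messages that correct validators actually broadcast and about the rules that produced them, as the opening remarks of \Cref{app:coda} require.
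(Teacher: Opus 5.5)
Your proposal is correct and follows the paper's proof essentially step for step. Like the paper, you trace the decision to a correct $\VOTETWO$ sender and use \Cref{clm:coda:unique_votes}(\ref{item:coda-vote2-carries-lock}) to get its lock. You then trace that lock to a correct $\langle \VOTEONE, V, x \rangle$ and conclude from \Cref{clm:coda:obligation_invariant}(\ref{item:coda-obligation-vote1}) together with $x[\pid] \neq \ALLSAFE$.
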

\begin{proof}
Suppose that a correct validator decides a \metablock{} $x$ in some view $V$, and let $p_\pid$ be a proposer with $x[\pid] \neq \ALLSAFE$. We must exhibit a correct validator that input $x[\pid]$ for $p_\pid$.
The decision requires the receipt of $2f+1$ messages $\langle \VOTETWO, V, x, \cdot \rangle$ from distinct senders (line~\ref{ln:core:on_vote2_quorum_header}), at least one of which was broadcast by a correct validator $p_w$.
By \Cref{clm:coda:unique_votes}(\ref{item:coda-vote2-carries-lock}), $\mathit{lock}_w[V] = x \neq \NONE$, which is set only upon the receipt of $2f+1$ messages $\langle \VOTEONE, V, x \rangle$ from distinct senders (lines~\ref{ln:core:on_vote1_quorum_header} and~\ref{ln:core:on_vote1_quorum_set_lock}), at least one of which was broadcast by a correct validator.
By \Cref{clm:coda:obligation_invariant}(\ref{item:coda-obligation-vote1}), $x[\pid]$ is admissible. As $x[\pid] \neq \ALLSAFE$, some correct validator input $x[\pid]$ for $p_\pid$, which is precisely obligation.
\qed
\end{proof}

\subsection{Complexity}

We begin by bounding the echoes that a correct validator broadcasts for a single view.

\begin{claim} \label{clm:coda:echoes_per_view}
For every view $V \geq 1$ and every proposer $p_\pid$, a correct validator broadcasts at most $|P_\pid| + 1$ distinct $\langle \ECHO, V, \pid, \cdot \rangle$ messages, and it initiates the broadcast of each of them at most once.
\end{claim}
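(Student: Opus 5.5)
The claim has two parts, and I would prove them in reverse order. The second part, that each broadcast is initiated at most once, is the easy one. Every $\ECHO$ message is broadcast at line~\ref{ln:vc:fn_echo_broadcast} alone. That line is guarded by the test $\langle \ECHO, V, \pid, v\rangle \notin \echosone[V]$ at line~\ref{ln:vc:fn_echo_if}, and the message is inserted into $\echosone[V]$ in the same step. Since $\echosone[V]$ lies in persistent memory, a crash does not erase it. I would also check that the variable is never discarded while the validator could still echo for view $V$. Echoes for view $V$ are sent only while $\viewi \in \{V, V+1\}$, because the view-change rule echoes for $\viewi$ and amplification for $\viewi-1$ or $\viewi$. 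The memory rule (line~\ref{line:persistant_memory}) discards $\echosone[V]$ only once $\viewi \geq V+2$, and by \Cref{clm:coda:monotonic} the validator never returns to those views. Hence the same echo is never initiated twice.

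For the count, I would show that every value $v$ a correct validator echoes for $p_\pid$ is admissible for $p_\pid$. \Cref{clm:coda:obligation_invariant}(\ref{item:coda-obligation-echo}) gives this directly: $v = \ALLSAFE$ or $v \in P_\pid$. Distinct messages $\langle \ECHO, V, \pid, \cdot\rangle$ differ only in their value, so they number at most $|P_\pid \cup \{\ALLSAFE\}| = |P_\pid| + 1$, using $\ALLSAFE \notin \VALUE$ and hence $\ALLSAFE \notin P_\pid$.

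The one subtle point is the bookkeeping for the first part. The claim counts distinct messages, while $\echosone[V]$ ensures no duplicates, and the admissibility invariant holds for every broadcast regardless of timing. So no case analysis on the view-change sets $S$ is needed, and I expect no real obstacle beyond citing the invariant and the persistence argument correctly.
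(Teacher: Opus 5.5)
Your proposal is correct and matches the paper's proof essentially step for step. You bound the number of distinct messages by $|P_\pid|+1$ through the admissibility invariant of \Cref{clm:coda:obligation_invariant}. You rule out a second initiation by the record in $\echosone[V]$, which is retained while $\viewi \in \{V, V+1\}$ and never needed again, since views never decrease (\Cref{clm:coda:monotonic}).
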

\begin{proof}
By \Cref{clm:coda:obligation_invariant}(\ref{item:coda-obligation-echo}), every $\langle \ECHO, V, \pid, v \rangle$ message broadcast by a correct validator carries a value $v$ that is admissible for $p_\pid$, that is, $v \in P_\pid \cup \{ \ALLSAFE \}$.
This leaves at most $|P_\pid| + 1$ distinct messages.

Each of them is broadcast at most once.
Echoes are broadcast only by the function $\mathsf{echo}$, which records every message it broadcasts in $\echosone[V]$ (line~\ref{ln:vc:fn_echo_broadcast}) and does not broadcast a recorded message again (line~\ref{ln:vc:fn_echo_if}).
It remains to check that the record survives as long as it is needed.
A correct validator invokes $\mathsf{echo}(V, \cdot, \cdot)$ only while $V \in \{ \viewi - 1, \viewi \}$: the view-change rule echoes for view $\viewi$ (line~\ref{ln:vc:on_vote2_quorum_compute_approved}), and the amplification rule echoes for views $\viewi - 1$ and $\viewi$ (line~\ref{ln:vc:upon_echo_amplification}).
Throughout that period, the memory rule retains $\echosone[V]$ (line~\ref{line:persistant_memory}), which starts empty, and the period never resumes once $\viewi$ exceeds $V + 1$, as views never decrease (\Cref{clm:coda:monotonic}).
\qed
\end{proof}

\codamessages*
\begin{proof}
Fix a correct validator $p_i$ and a view $V$.
Validator $p_i$ broadcasts at most one $\langle \PREPARE, V, \cdot \rangle$ message, upon entering view $V$ as its leader (line~\ref{ln:core:on_enter_if_leader}), which happens at most once (\Cref{clm:coda:monotonic}).
It broadcasts at most one $\VOTEONE$ message and at most one $\VOTETWO$ message for view $V$ (\Cref{clm:coda:unique_votes}(\ref{item:coda-one-message})).
Finally, for every proposer $p_\pid$, it broadcasts at most $|P_\pid| + 1 \leq c + 1$ $\ECHO$ messages for view $V$ (\Cref{clm:coda:echoes_per_view}), for at most $(c + 1) k$ echoes in total.
Every broadcast unicasts its message to each of the $n$ validators, and at most $n$ validators are correct, which yields the bound.
\qed
\end{proof}

\codamemory*
\begin{proof}
We count memory in words, where a word holds a value, a view number, or a validator identifier, so that a \metablock{} occupies $k$ words.
By the memory rule (line~\ref{line:persistant_memory}), a correct validator retains the view-specific variables of at most two views, alongside $\viewi$, the time at which it entered its current view, and its decision flag (one word each).
For each retained view $V$, the variables $\guardi[V]$ and $\locki[V]$ are \metablocks{}, $\voteone[V]$ is a message carrying one \metablock{}, and $\votetwo[V]$ is a message carrying two, for $O(k)$ words in total, while $\echosone[V]$ holds at most $(c + 1) k$ messages of $O(1)$ words each (\Cref{clm:coda:echoes_per_view}).
The \coda{} messages whose broadcasts are alive belong to views $\viewi - 1$ and $\viewi$ (line~\ref{ln:rule_abandon}), and each of them is either recorded in these variables or a $\PREPARE$ message carrying $\guardi[V]$, so keeping track of them costs $O(c k)$ words as well.
In total, $O(c k)$ words.
\qed
\end{proof}

\subsection{Evaluating the View-Change Rule}\label{app:coda:count}

The view-change rule (line~\ref{ln:vc:on_vote2_quorum_compute_approved}) fires once for every set $S$ of $2f+1$ accepted $\VOTETWO$ messages that satisfies its condition, but a validator can avoid enumerating these sets, since the values that the rule echoes are determined by counts.
Fix a correct validator $p_i$ in view $V$. Let $M$ be the $\langle \VOTETWO, V, \cdot, \cdot \rangle$ messages that $p_i$ has accepted, at most one per sender (line~\ref{ln:vc:rule_one_vote2_per_sender}), $U \subseteq M$ those without a lock, and $L_\ell \subseteq M$ those with lock $\ell$. For a proposer $p_\pid$ and an entry $v \in \VALUE \cup \{ \ALLSAFE \}$, let $c_\pid(v)$ be the number of messages in $U$ whose guard has entry $v$ for $p_\pid$.
Once $5\Delta$ have elapsed since $p_i$ entered view $V$, some set $S \subseteq M$ satisfying the condition of the rule causes it to invoke $\mathsf{echo}(V, p_\pid, v)$ if and only if one of the following holds.
\begin{compactenum}
\item $v = \ell[\pid]$ for a \metablock{} $\ell$ with $|L_\ell| \geq 1$ and $|L_\ell| + |U| \geq 2f+1$. We can choose $2f+1$ messages from $L_\ell \cup U$, including at least one from $L_\ell$, to form a set with lock $\ell$, on which the rule echoes the entry of $\ell$ for every proposer.
\item $|U| \geq 2f+1$ and $c_\pid(v) \geq f+1$. Then $f+1$ messages of $U$ with entry $v$ for $p_\pid$, together with $f$ further messages of $U$, form a set without locks in which $f+1$ messages carry $v$, so the rule echoes $v$ (no other entry can also reach $f+1$ in a set of $2f+1$).
\item $v = \ALLSAFE$ and $\sum_{w \in \VALUE \cup \{ \ALLSAFE \}} \min(c_\pid(w), f) \geq 2f+1$. The rule echoes $\ALLSAFE$ for $p_\pid$ on a set without locks in which no entry for $p_\pid$ is carried by $f+1$ messages. Such a set contains at most $f$ messages with each entry, so it exists exactly when at least $2f+1$ messages remain after keeping, for each entry, at most $f$ of the messages of $U$ that have this entry for $p_\pid$.
\end{compactenum}
Conversely, a set satisfying the condition either has locks that are all the same \metablock{} $\ell$, so it lies in $L_\ell \cup U$ and condition~1 holds, or has no locks, so it lies in $U$ and condition~2 holds if it echoes an entry carried by $f+1$ of its messages, and condition~3 holds otherwise.

\end{document}